\documentclass[11pt]{amsart}
\usepackage[letterpaper,textwidth=144mm,textheight=225mm,centering]{geometry}

\input{preamble}

\newcommand{\R}{\mathbb{R}}
\newcommand{\op}{\mathrm{op}}
\newcommand{\Rbar}{\overline{\mathbb{R}}}

\newcommand{\cat}[1]{\mathcal{#1}}
\DeclareMathOperator{\Nuc}{\mathrm{Nuc}}
\DeclareMathOperator{\pnuc}{\mathbb{P}\mathrm{Nuc}}

\DeclareMathOperator{\fix}{\mathrm{Fix}}

\newcommand{\Ex}{\mathrm{Ex}}

\newcommand{\Mtc}{N}

\newcommand{\clL}[1]{\operatorname{cl}_L(#1)}
\newcommand{\clR}[1]{\operatorname{cl}_R(#1)}

\newcommand{\paren}[1]{\left(#1\right)}

\newcommand{\set}[1]{\left\{#1\right\}}

\makeatletter
\renewcommand\section{\@startsection{section}{1}{\z@}%
  {24\p@ \@plus 5\p@}{6\p@}%
  {\normalfont\normalsize\itshape\centering}}
\renewcommand\subsection{\@startsection{subsection}{2}{\z@}%
  {12\p@ \@plus 3\p@}{6\p@}%
  {\normalfont\normalsize\itshape}}
\renewcommand\subsubsection{\@startsection{subsubsection}{3}{\parindent}%
  {12\p@ \@plus 6\p@ \@minus 3\p@}{-\parindent}%
  {\normalfont\normalsize\itshape\noindent}}
\renewcommand\paragraph{\@startsection{paragraph}{4}{\parindent}%
  {12\p@ \@plus 6\p@ \@minus 3\p@}{-\parindent}%
  {\normalfont\normalsize\itshape\noindent}}
\makeatother

\title[A Calculus of Types in Isbell Nuclei]
{A Calculus of Types in Isbell Nuclei}

\author[Gastaldi et al.]{Juan Luis Gastaldi}
\address{ETH Zurich, Switzerland}
\email{juan.luis.gastaldi@inf.ethz.ch}

\author[]{Samantha Jarvis}
\address{CUNY Queens College, New York, USA}
\email{Samantha.Jarvis@qc.cuny.edu}

\author[]{Thomas Seiller}
\address{CNRS, Paris, France}
\email{thomas.seiller@cnrs.fr}

\author[]{John Terilla}
\address{CUNY Queens College and CUNY Graduate Center, New York, USA}
\email{jterilla@gc.cuny.edu}

\thanks{The authors contributed equally to this work.}

\date{}

\begin{document}

\begin{abstract}
  We identify two constructions from different mathematical traditions. In linear logic and realisability, logical types are generated rather than fixed in advance: one begins with a universe of realisers equipped with execution, uses orthogonality to test their interactions, and takes types to be the biorthogonally closed subsets. In enriched Isbell duality, a quantitative relation induces an adjunction whose fixed points form a category, its nucleus. These constructions proceed by different means; we show that, in the present setting, they produce the same objects.

  The shared datum is minimal: an associative product, called execution, and a real-valued measurement, with no compatibility assumed between them. The failure of the measurement to be additive is at once the relation defining orthogonality and the quantitative relation whose Isbell nucleus we form, and the types cut out by orthogonality are exactly the fixed points of the associated adjunction. The identification pays off in both directions. The most natural product of types fails to be associative; repairing this failure forces a different notion of type, sensitive to both sides of a composite, on which the induced product is associative and, when execution has units, carries two residuals. What emerges is a noncommutative Lambek calculus, derived directly from execution and orthogonality rather than imposed. In the reverse direction, each such type, read on the categorical side, generates a quantitative relation of its own, and with it a derived adjunction and a further generation of types; these derived types are again types of the original situation, computed by the residuals of the Lambek calculus. We also prove a coherence theorem for the threefold arrangements of this construction and, in the finite-dimensional case, give explicit formulas for the product.
\end{abstract}

\maketitle

\section{Introduction}\label{sec:introduction}

This paper joins two constructions with different mathematical origins. The first is the proofs-as-programs tradition of linear logic, where logical structure is read from composition and execution. The second is the enriched Isbell theory of profunctors, where a quantitative relation induces an adjunction between presheaves and copresheaves whose fixed points form a distinguished category called a nucleus. The types of the first construction and the fixed points of the second arise by different methods in different settings. This paper shows that they are the same objects, and develops the consequences of that identification.

In the first perspective, a linear realisability situation consists of a set \(\cat C\), an
associative execution product \(\cat C\times \cat C\to \cat C\), written \(ab\),
and a measurement \(p\colon \cat C\to \R\).  No compatibility between execution
and measurement is assumed.  Instead one forms a new function
\begin{equation}\label{eq:binary-measurement}
  M(a,b)=p(ab)-p(a)-p(b).
\end{equation}
and declares weighted elements \((a,\alpha),(b,\beta)\in \cat C\times \R\) to be orthogonal when
\[
  \alpha+\beta\le M(a,b).
\]
Left and right types are the biorthogonally closed subsets for this relation.  The types are not stipulated in advance; they are generated from execution and measurement.

This is an abstraction of a familiar theme in linear logic and realisability. In Girard's linear logic and geometry of interaction, and in the classical realisability tradition, logical formulas are controlled by orthogonality, duality, and execution. Seiller's linear realisability framework provides a general setting for this pattern. Here we work in a particularly lean real-valued case: one keeps the execution product and a numerical measurement, and then orthogonality generates the types. The simplicity of the datum $(\cat C,\Ex,p)$ leaves room for noncommutative execution, for asymmetric linear implication, and for an oriented type theory adapted to settings where stricter symmetric forms of linear logic may not fit.

By viewing the set \(\cat C\) as a discrete
\(\Rbar\)-category, the same formula \(M(a,b)=p(ab)-p(a)-p(b)\) defines an
\(\Rbar\)-enriched profunctor.  Its Isbell conjugates define adjoint functors between enriched pre and copresheaves:  if $f,g:\cat C \to \Rbar$, they are given by
\[
  M^*f(b)=\inf_{a\in\cat C}\bigl(M(a,b)-f(a)\bigr),
  \qquad
  M_*g(a)=\inf_{b\in\cat C}\bigl(M(a,b)-g(b)\bigr).
\]
The nucleus \(\Nuc(M)\) is the fixed-point locus of this adjunction. It is a category whose objects are the pairs
\((f,g)\) with $M^*f=g$ and $M_*g=f$.  Our first main result identifies the two constructions: the biorthogonally closed types of the linear realisability situation are precisely the nuclear fixed points of $M$.
The assignments
\[
  A\longmapsto \sup\{\alpha\in\R\mid (a,\alpha)\in A\},
  \qquad
  f\longmapsto \{(a,\alpha)\mid \alpha\le f(a)\},
\]
implement the correspondence.

Because the two sides contribute very different structures, this alignment has many consequences. From the Isbell side, the finite real case inherits the projective metric and polyhedral geometry of the companion paper \cite{GastaldiJarvisSeillerTerillaNucleus}: witness cells, event loci, order chambers, threshold lattice towers, and the pointed gap matrix, whose entries record exactly how the metric and polyhedral structures interact. From the realisability side, the execution product contributes the oriented logical structure of insertion.  A product placed inside a larger execution has a left boundary and a right boundary, and each orientation determines its own residual.  From this one obtains middle types, their execution product, and a corresponding noncommutative Lambek calculus.

This oriented logical structure appears when one asks how types compose.
Execution on \(\cat C\) extends to weighted elements of \(\cat C\times\R\),
and hence to subsets of \(\cat C\times\R\).  After biorthogonal closure, the
induced product on types need not be associative; a small finite example
makes this failure explicit.  Conceptually, one-sided binary closure does
not retain both boundary contexts of a composite.  The types introduced above
are governed by the two-variable measurement \eqref{eq:binary-measurement},
whereas the testing of a composite requires both a left context and a right
context.  This leads to the ternary measurement
\begin{equation}\label{eq:ternary-measurement}
  M_3(x,b,z)=p(xbz)-p(x)-p(b)-p(z)
\end{equation}
and hence to middle types defined by a middle-peripheral biorthogonality relation on $(\cat C \times \R)\times (\cat C \times \R)^2$ where
\((b,\beta)\) is orthogonal to \(((a,\alpha),(c,\gamma))\) when
\[
  \alpha+\beta+\gamma\le M_3(a,b,c).
\]
On middle types, the biorthogonal closure of the setwise execution product defines an associative product \(\odot\).  When the execution monoid has a unit,
\(\odot\) has a unit. It also has two residuals, \(\multimap_l\) and
\(\multimap_r\), satisfying
\[
  A\odot B\subseteq C
  \quad\Longleftrightarrow\quad
  B\subseteq A\multimap_l C
  \quad\Longleftrightarrow\quad
  A\subseteq B\multimap_r C.
\]
Together with the composition and currying laws, these operations form the
noncommutative Lambek calculus carried by the linear realisability situation:
its rules are derived directly from weighted orthogonality and execution.

Categorically, the middle types are the points of the nucleus of the profunctor
\[
  \Mtc\colon \cat C\nrightarrow \cat C\times\cat C,
  \qquad
  \Mtc\bigl(b,(x,z)\bigr)=M_3(x,b,z).
\]
So \(\odot\) defines an associative product on \(\Nuc(\Mtc)\).  In the categorical picture, a middle type naturally creates its own measurement.  A point \((f,g)\in\Nuc(\Mtc)\) has
copresheaf coordinate
\[
  g(x,z)=\inf_{b\in\cat C}\bigl(M_3(x,b,z)-f(b)\bigr).
\]
For fixed middle type \(f\), this function \(g\colon \cat C\times\cat C\to\Rbar\) may be read as a binary profunctor \(\cat C\nrightarrow\cat C\).  Hence a middle type determines a new binary Isbell adjunction and a new nucleus \(\Nuc(g)\).  These are the derived nuclei of the middle type.

The derived-nuclei theorem shows that, in the middle arrangement,
the new binary nucleus produced by a middle type is controlled by the residuals
of the Lambek calculus:
\[
  v=f\multimap_l M^*u,
  \qquad
  u=f\multimap_r M_*v.
\]
Moreover \(u\) and \(v\) are fixed by the binary Isbell closure operators for
\(M\).  Thus the presheaf and copresheaf coordinates derived from the
measurement \(g\) are genuine left and right type coordinates for the original
binary realisability situation, though the pair \((u,v)\in \Nuc(g)\) need not be a point of \(\Nuc(M)\).

The same ternary measurement \(M_3\) has three one-variable arrangements,
according to whether one singles out the left, middle, or right coordinate.  The left and right arrangements of \(M_3\) give companion derived constructions, with the same three kinds of coordinates appearing in different roles and similar derived-nuclei theorems relating the derived types to the Lambek calculus.  The
two-out-of-three theorem says that the three arrangements have the same pairwise
intersection: compatibility with any two of them forces compatibility with the
third.  The common locus is the balanced locus, where every two from \(u,f,v\) is in the derived nucleus of the third, and the third can be recovered as a corresponding residual of the other two.  This theorem specifies the coherence of three nuclei constructions which are related by different internal organizations of a single triple relation.

These theorems demonstrate the value of this identification of linear-realisability types
with enriched Isbell nuclei.  From the logical side, one has a rich noncommutative Lambek-like calculus owing to the interaction between orthogonality and execution.  Yet a type, as a subset of $\cat C \times \R$ does not itself suggest a new binary measurement and a derived type theory.  From the nuclear side, the many derived nuclei produced by factoring ternary (or higher) measurements produce a collection of derived nuclei whose internal structure and organization is hard to see, but is illuminated by the logic of types.

The identification also brings the middle product into the projective geometry
of the nucleus.  In the finite real case, the copresheaf coordinate of \(X\odot Y\) is a
convolution envelope governed by
\begin{equation}\label{eq:quartic-measurement}
  M_4(a,b,c,d)=p(abcd)-p(a)-p(b)-p(c)-p(d),
\end{equation}
and the corresponding product gap
decomposes into input witness gaps with extra outer sharpness terms.  Thus the product-gap formula gives a first bridge from the Lambek calculus of types to a geometry of execution, showing how execution-controlled composition is presented in the projective geometry of nuclei.

The main results are organized as follows.

\paragraph{Types as binary nuclear fixed points
  (Theorem~\ref{thm:nucleus-types}).}
Left and right types can be identified with the fixed points of the Isbell closure operators for the binary
measurement \(M\).  Points of \(\Nuc(M)\) are precisely the paired types.

\paragraph{Peripheral products and the passage to middle types
  (Example~\ref{ex:left-product-nonassociativity}; Proposition~\ref{thm.midassoc}).}
Execution extends to paired types by taking setwise products and then
biorthogonally closing.  This produces the peripheral products.  An explicit
example shows that they need not be associative.  This leads to the introduction
of middle types.  On middle types, middle closure of the setwise execution product
gives an associative product \(\odot\).

\paragraph{The Lambek calculus of middle types
  (Propositions~\ref{thm.dotmultimap}, \ref{prop.adj}, and
  \ref{thm.partialtransitivity}).}
Under a unital hypothesis, the middle product has a unit, and it carries left and
right residuals \(\multimap_l,\multimap_r\).  The residuation, currying,
composition, and partial transitivity laws give the noncommutative Lambek
calculus naturally associated with execution.  The rules are derived directly from weighted orthogonality and
execution.

\paragraph{Internalization of derived nuclei
  (Theorem~\ref{thm:derived-formula}).}
Middle types are identified with the points of \(\Nuc(\Mtc)\), where \(\Mtc\) is a profunctor defined by the ternary measurement \(M_3\).
For a middle nuclear point \((f,g)\in\Nuc(\Mtc)\), the copresheaf coordinate
\(g\) defines a binary measurement \(\cat C\nrightarrow\cat C\) and hence has its own nucleus.  Every point
\((u,v)\in\Nuc(g)\) satisfies
\[
  v=f\multimap_l M^*u,
  \qquad
  u=f\multimap_r M_*v,
\]
and the same theorem gives explicit duals placing \(u\) in
\(\operatorname{im}M_*\) and \(v\) in \(\operatorname{im}M^*\).  Derived nuclei therefore return to the binary fixed-coordinate loci of \(M\),
and their coordinates are computed by the residuals of the Lambek calculus.

\paragraph{Two-out-of-three coherence
  (Theorem~\ref{thm:triple-coincidence}).}
The left, middle, and right arrangements of \(M_3\) give three fixed-point
conditions on triples \((u,f,v)\).  Their pairwise intersections agree with
their triple intersection.  Equivalently, compatibility with any two
arrangements forces compatibility with the third, and the common locus is the
balanced locus where each coordinate is recovered as the residual of the other
two.

\paragraph{Product envelopes and factor gaps
  (Corollary~\ref{cor:projective-middle-product} and
  Proposition~\ref{prop:convolution-gap-factorization}).}
In the finite real-valued situation, the middle product
descends to a well-defined operation on \(\pnuc(\Mtc)\), and its copresheaf is the
convolution envelope obtained from \(M_4\).  The factor-gap identity decomposes a
candidate product witness into shifted input gap terms together with the sharpness
conditions for the two outer minimizations; in particular, an
envelope-computing factorisation forces all four quantities to vanish.

\paragraph{Examples.}
Throughout the paper we use explicit finite examples to show that the main
results are sharp.  A single four-element execution monoid is small enough to be
computed directly and illustrates a number of phenomena: the left closed product
need not be associative; the two-out-of-three hypothesis in the coherence theorem
cannot be weakened to a single arrangement; and the left, middle, and right units
in the Lambek calculus are distinct, with the middle unit profile strictly below
the pointwise minimum of the one-sided unit profiles.

\subsection*{Relation to prior work}

The linear realisability framework used here belongs to the lineage of Girard's
linear logic and geometry of interaction~\cite{girard-linearlogic,girard-goi5},
ludics~\cite{girard-ludics}, and the classical realisability tradition of
Kleene, Kreisel, Krivine, and van Oosten~\cite{kleene,kreisel,krivine,vanoosten}.
Seiller introduced linear realisability~\cite{seiller-hdr} to isolate the common
structure of geometry-of-interaction models and their extensions
\cite{seiller-goim,seiller-goiadd}.  The Lambek calculus originates in Lambek's
work on the mathematics of sentence structure~\cite{Lambek1958}; in the present
paper its noncommutative residuated structure is derived from weighted execution
and middle orthogonality.

The Isbell nucleus originates in Isbell's work on adequate subcategories
\cite{Isbell1960Adequate}.  Modern treatments of Isbell duality and enriched
nuclei include Avery--Leinster~\cite{averyLeinster2021isbell} and Willerton's
work in the metric setting~\cite{willerton2013tight,willerton2014galois,willerton2015legendre},
with the enriched metric viewpoint going back to Lawvere~\cite{lawvere1973metric}
and Kelly~\cite{Kelly1982}.  Jarvis studies monoidal structures on nuclei of
profunctors in a compatible monoidal setting~\cite{Jarvis2025NucleusProfunctor}.

The geometric results used here are developed in the companion paper
\cite{GastaldiJarvisSeillerTerillaNucleus}.  There the projective nucleus of a
finite real profunctor is studied through its gap matrix, witness cells, event
loci, order chambers, formal concept lattice towers, and Chebyshev centers.  The
enrichment viewpoint also appears in work on tropical convexity, directed
metrics, and language-like structures, including
\cite{develinSturmfels2004tropical,elliott2017fuzzy,fujii2019enriched,%
  GaubertVlassopoulos2024DirectedMetric,BradleyTerillaVlassopoulos2022EnrichedLanguage,%
  BradleyGastaldiTerilla2024,BradleyVigneaux2025MagnitudeTexts}.

\medskip\noindent
\textbf{Organization.}
Section~\ref{sec:nucleus} establishes the \(\Rbar\)-enriched conventions and recalls
Isbell duality for profunctors.  Section~\ref{sec:types} introduces linear
realisability situations, weighted orthogonality, and types.
Section~\ref{sec:correspondence} proves the nucleus--types correspondence.
Section~\ref{sec:execution-types} explains how execution extends to peripheral
types and analyzes the failure of associativity there.  Section~\ref{sec:middle}
introduces middle types and proves associativity of the middle product.
Section~\ref{sec:categorical-structure} develops units, residuals, and the
Lambek-style calculus of linear arrows.  Section~\ref{sec:derived} proves the
internalization theorem for derived nuclei, the corrected convolution formulas
for middle products, projective well-definedness of the product, and the ternary
coherence theorem.  Section~\ref{sec:product-geometry} treats the finite real
case and records the product envelope and factor-gap identities connecting
\(\odot\) with the witness geometry of the middle nucleus.

\section{Isbell duality over the extended reals}\label{sec:nucleus}

This section fixes the enriched conventions needed later to compare two kinds
of closure.  On the logical side, types will be closed subsets for a weighted
orthogonality relation.  On the Isbell side, the same objects will appear as
fixed points of closure operators associated to a profunctor.  These conventions
provide the arithmetic, profunctor, and Isbell-nucleus language for the binary
correspondence and for the derived nuclei of Section~\ref{sec:derived}.

\subsection{The arithmetic of $\Rbar$}\label{subsec:Rbar-cats}
Let $\Rbar=[-\infty,+\infty]$ regarded as a poset category where a morphism $x \to y$ exists precisely when $x\le y$, and arbitrary limits and colimits are
infima and suprema. The monoidal structure used throughout this paper is an extension of addition on $\R$ to $\Rbar$.  Its unit is $0$, and the endpoint convention is
\[
  -\infty+y=-\infty
  \qquad
  \text{for all }y\in\Rbar,
\]
including $y=+\infty$.  This is not an incidental convention.  We use $\Rbar$
as a closed ordered monoidal base category: for every $y\in\Rbar$, translation
$x\mapsto x+y$ is required to preserve arbitrary suprema and hence to have a
right adjoint.  In particular it preserves the bottom element, forcing
$-\infty+y=-\infty$. We write this right adjoint as $[y,-]$, or more suggestively as subtraction \(-y\):
\[
  x+y\le z
  \quad\Longleftrightarrow\quad
  x\le [y,z]=z-y.
\]
Equivalently,
\begin{equation}\label{eq:minus}
  z-y=\sup\{x\in\Rbar\mid x+y\le z\}.
\end{equation}
For finite \(y,z\), this is ordinary subtraction.  At the endpoints,
the notation \(z-y\) always means the residual defined by
\eqref{eq:minus}.  Thus subtraction by an endpoint should not be read
as addition by an opposite endpoint: for instance,
\[
  -\infty-(-\infty)=+\infty,
  \qquad\text{but}\qquad
  -\infty+(+\infty)=-\infty .
\]

The use of both endpoints is essential for the type story.  A subset
$A\subseteq C\times\R$ determines a function
\[
  c\longmapsto \sup\{\alpha\in\R\mid (c,\alpha)\in A\}.
\]
The value is $-\infty$ over an empty fibre and may be $+\infty$ over a fibre
containing arbitrarily large real weights.  Thus the Isbell coordinates of
types naturally live in $\Rbar$, even when the measurement function in the
realisability situation is real-valued.

\subsection{Categories, presheaves, and profunctors}

\begin{definition}\label{def:Rbar-category}
  A small $\Rbar$-\emph{category} $\cat C$ consists of a set
  $\mathrm{Ob}(\cat C)$ and hom-values
  $\cat C(c,c')\in\Rbar$ satisfying, for all $c,c',c''$,
  \begin{subequations}\label{eq:Rbar-axioms}
    \begin{align}
      0                           & \le \cat C(c,c),   \\
      \cat C(c,c')+\cat C(c',c'') & \le \cat C(c,c'').
    \end{align}
  \end{subequations}
\end{definition}
Note that it is not required that $\cat C(c,c)=0$, though that will be the case for the categories encountered in this paper.

The base $\Rbar$ is itself an $\Rbar$-category, with hom-value
\[
  \Rbar(x,y)=y-x,
\]
where the right-hand side is the residual \eqref{eq:minus}.  The opposite
$\cat C^{\op}$ has the same objects as $\cat C$ and hom-values
$\cat C^{\op}(c,c')=\cat C(c',c)$.

\begin{definition}\label{def:Rbar-functor}
  An $\Rbar$-\emph{functor} $F:\cat C\to\cat D$ is a function on objects such
  that
  \[
    \cat C(c,c')\le \cat D(Fc,Fc')
  \]
  for all objects $c,c'$ of $\cat C$.
\end{definition}

For $\Rbar$-categories $\cat A$ and $\cat B$, we write
$[\cat A,\cat B]$ for the $\Rbar$-category of $\Rbar$-functors
$\cat A\to\cat B$, where hom-values are computed pointwise:  if $F,G:\cat A\to\cat B$ are two such functors, then
\begin{equation}\label{eq:functor-hom}
  [\cat A,\cat B](F,G)=\inf_{a\in\cat A}\cat B(Fa,Ga).
\end{equation}
When the ambient functor category is clear, we write $[F,G]$ for this
hom-value.

A \emph{presheaf} on $\cat C$ is an $\Rbar$-functor
$f:\cat C^{\op}\to\Rbar$.  A \emph{copresheaf} on $\cat D$ is an
$\Rbar$-functor $g:\cat D\to\Rbar$.  In the Isbell adjunction below,
copresheaves are regarded as objects of the opposite enriched category
$[\cat D,\Rbar]^{\op}$.  Thus the underlying functions are still
$g:\cat D\to\Rbar$, but the enriched order is reversed. For presheaves $f,f':\cat C^{\op}\to\Rbar$ and copresheaves
$g,g':\cat D\to\Rbar$, the hom-values are
\[
  \begin{aligned}
    \relax[f,f']                  & = \inf_{c\in\cat C}\left(f'(c)-f(c)\right), \\
    [g,g']_{[\cat D,\Rbar]^{\op}} & = [\cat D,\Rbar](g',g)
    = \inf_{d\in\cat D}\left(g(d)-g'(d)\right).
  \end{aligned}
\]
Consequently, the induced order on presheaves is the pointwise order:
$f\le f'$ iff $f(c)\le f'(c)$ for all $c$.  By contrast, the induced
order on copresheaves as objects of $[\cat D,\Rbar]^{\op}$ is the
opposite pointwise order:
\[
  g\le_{[\cat D,\Rbar]^{\op}} g'
  \quad\Longleftrightarrow\quad
  g'(d)\le g(d) \text{ for all } d.
\]
To keep the notation readable, however, unadorned inequalities
$f\le f'$ and $g\le g'$ will always mean pointwise inequalities unless
explicitly stated otherwise.  This agrees with the enriched order for
presheaves, but not for copresheaves viewed in $[\cat D,\Rbar]^{\op}$.

The enriched Yoneda lemma gives the formulas
\[
  \relax[\cat C(-,c),f]=f(c),
  \qquad
  [g,\cat D(d,-)]_{[\cat D,\Rbar]^{\op}}=g(d).
\]
These formulas fix the representable presheaves and copresheaves that occur
when a profunctor is evaluated in one variable. The tensor product of $\Rbar$-categories $\cat C$ and $\cat D$ is the
$\Rbar$-category $\cat C\otimes\cat D$ whose objects are pairs $(c,d)$ and
whose hom-values are
\[
  (\cat C\otimes\cat D)\left((c,d),(c',d')\right)
  =\cat C(c,c')+\cat D(d,d').
\]

\begin{definition}\label{def:profunctor}
  A \emph{profunctor} $M:\cat C\nrightarrow\cat D$ is an $\Rbar$-functor
  \[
    M:\cat C^{\op}\otimes\cat D\to\Rbar.
  \]
  Equivalently, it is a function
  $M:\mathrm{Ob}(\cat C)\times\mathrm{Ob}(\cat D)\to\Rbar$ such that, for all
  $c,c'\in\cat C$ and $d,d'\in\cat D$,
  \begin{equation}\label{eq:profunctor-axiom}
    \cat C(c',c)+\cat D(d,d') \le [M(c,d),M(c',d')]
    = M(c',d')-M(c,d).
  \end{equation}
\end{definition}

A set $S$ will often be viewed as a discrete $\Rbar$-category, again denoted
$S$, by
\[
  S(s,s')=
  \begin{cases}
    0       & s=s',     \\
    -\infty & s\neq s'.
  \end{cases}
\]
For discrete $\cat C$ and $\cat D$, the profunctor condition
\eqref{eq:profunctor-axiom} imposes no restriction: every function
$M:\cat C\times\cat D\to\Rbar$ is a profunctor.  This is the case when the
binary measurement
\[
  M(a,b)=p(ab)-p(a)-p(b)
\]
is regarded as an $\Rbar$-profunctor.

\subsection{Isbell duality and the nucleus}\label{subsec:nucleus}

Let $M:\cat C\nrightarrow\cat D$ be a profunctor.  For each $d\in\cat D$,
the column $M(-,d)$ is the $\cat C$-indexed slice of $M$, hence a
presheaf on $\cat C$; for each $c\in\cat C$, the row $M(c,-)$ is the
$\cat D$-indexed slice of $M$, hence a copresheaf on $\cat D$.  The
terms ``row'' and ``column'' are only mnemonic: no ordering or enumeration
of the objects of $\cat C$ or $\cat D$ is being chosen.  If $\cat C$ and
$\cat D$ were finite discrete categories with chosen enumerations, then
the values $M(c,d)$ could be displayed as a rectangular matrix, and these slices
would be its literal columns and rows.  In general they are simply the one-variable slices of the profunctor.

The Isbell conjugates extend these
assignments from rows and columns to arbitrary presheaves and copresheaves.

\begin{definition}\label{def:Isbell-conjugates}
  Define maps
  \[
    \begin{aligned}
      M^*\colon [\cat C^{\op},\Rbar] & \to [\cat D,\Rbar]^{\op}, \\
      M_*\colon [\cat D,\Rbar]^{\op} & \to [\cat C^{\op},\Rbar]
    \end{aligned}
  \]
  by
  \begin{align}
    (M^*f)(d)
     & := [f,M(-,d)]
    = \inf_{c\in\cat C}\left(M(c,d)-f(c)\right),
    \label{eq:Mupperstar}                   \\
    (M_*g)(c)
     & := [M(c,-),g]_{[\cat D,\Rbar]^{\op}}
    = \inf_{d\in\cat D}\left(M(c,d)-g(d)\right).
    \label{eq:Mlowerstar}
  \end{align}
  The maps $M^*$ and $M_*$ are the Isbell conjugates induced by $M$.
\end{definition}

As usual, minus signs in \eqref{eq:Mupperstar}--\eqref{eq:Mlowerstar} are residuals in $\Rbar$, see Equation \eqref{eq:minus}.

\begin{proposition}\label{prop:Isbell-adjunction}
  The assignments \eqref{eq:Mupperstar}--\eqref{eq:Mlowerstar} define adjoint
  $\Rbar$-functors, meaning that
  \[
    [\cat D,\Rbar]^{\op}(M^*f,g)=[\cat C^{\op},\Rbar](f,M_*g)
  \]
  for all $f\in[\cat C^{\op},\Rbar]$ and
  $g\in[\cat D,\Rbar]^{\op}$.  In symbols, \(M^*\dashv M_*\).
\end{proposition}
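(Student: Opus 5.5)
The plan is to compute both sides of the claimed equality as double infima and match them, using only the closed monoidal structure of $\Rbar$ from Section~\ref{subsec:Rbar-cats}. Unwinding the hom-values on $[\cat D,\Rbar]^{\op}$ gives
\[
  [\cat D,\Rbar]^{\op}(M^*f,g)=\inf_{d}\bigl((M^*f)(d)-g(d)\bigr)=\inf_d\Bigl(\inf_c\bigl(M(c,d)-f(c)\bigr)-g(d)\Bigr),
\]
and on $[\cat C^{\op},\Rbar]$ gives
\[
  [\cat C^{\op},\Rbar](f,M_*g)=\inf_c\bigl((M_*g)(c)-f(c)\bigr)=\inf_c\Bigl(\inf_d\bigl(M(c,d)-g(d)\bigr)-f(c)\Bigr).
\]
Since the residual $z-y=[y,z]$ is a right adjoint in $z$, it preserves arbitrary infima in $z$. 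Hence both expressions become $\inf_{c,d}\bigl((M(c,d)-f(c))-g(d)\bigr)$ and $\inf_{c,d}\bigl((M(c,d)-g(d))-f(c)\bigr)$ respectively.

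It remains to show that $(z-x)-y=(z-y)-x$ for all $x,y,z\in\Rbar$, endpoints included. I will prove this by the Yoneda-style chain
\[
  w\le (z-x)-y \iff w+y\le z-x \iff (w+y)+x\le z,
\]
and symmetrically
\[
  w\le (z-y)-x \iff (w+x)+y\le z.
\]
The two conditions agree because the extended addition is associative and commutative on $\Rbar$ under the convention $-\infty+y=-\infty$. Both sides equal $-\infty$ whenever any summand is $-\infty$; otherwise the sum is the usual one in $(-\infty,+\infty]$. So the two residuals have the same lower sets and are equal.

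The only real obstacle is this endpoint bookkeeping. I must check that the extended addition is a commutative monoid, and that the residual preserves infima even when some infimum is $-\infty$ or $+\infty$, including the empty infimum $+\infty$. Both facts follow from the adjunction characterization \eqref{eq:minus}, not from naive arithmetic, so I would phrase every step via ``$w\le\cdot$ iff'' chains. This avoids ever manipulating expressions like $-\infty-(-\infty)$ directly.

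Finally, I need $M^*f$ and $M_*g$ to be $\Rbar$-functors. For discrete categories this is automatic. In general it follows because $[f,-]$ and $[-,g]$ applied to representables $M(-,d)$ and $M(c,-)$ are enriched functors in $d$ and $c$. This is the standard enriched fact, obtainable from the profunctor axiom \eqref{eq:profunctor-axiom} by the same residual manipulations. Enriched functoriality of $M^*$ and $M_*$ themselves follows from the hom-equality together with the Yoneda formulas.
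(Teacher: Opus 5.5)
Your proposal is correct and is essentially the paper's argument: you expand both hom-values as double infima, pull the outer residual inside because $z\mapsto z-y$ is a right adjoint and so preserves all infima, and exchange the residuals by $f(c)$ and $g(d)$. The paper leaves one step tacit when it writes $M(c,d)-f(c)-g(d)$ without brackets, and that step is exactly your identity $(z-x)-y=(z-y)-x$ on all of $\Rbar$, which your residuation chain proves correctly; for functoriality, the paper checks $[f,f']+(M^*f')(d)\le (M^*f)(d)$ directly, while your derivation from the hom-equality also works, though the real input is the unit inequality $f'\le M_*M^*f'$ together with monotonicity of $[f,-]$, not the Yoneda formulas.
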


\begin{proof}
  We record the verification in the present $\Rbar$-valued notation.  The
  functoriality calculation is the usual one for Isbell conjugacy.  If
  $\delta=[f,f']=\inf_c(f'(c)-f(c))$, then
  $f(c)+\delta\le f'(c)$ for every $c$.  By the residuation law, this implies
  \[
    \delta+\left(M(c,d)-f'(c)\right)\le M(c,d)-f(c).
  \]
  Since $(M^*f')(d)\le M(c,d)-f'(c)$ for every $c$, we obtain
  \[
    \delta+(M^*f')(d)\le M(c,d)-f(c)
  \]
  for every $c$, and hence
  $\delta+(M^*f')(d)\le (M^*f)(d)$.  This is precisely the enriched
  functoriality inequality for $M^*$.  The proof for $M_*$ is the same. For the adjunction, residuation by $g(d)$ preserves infima, since it is a
  right adjoint.  Hence
  \[
    \begin{aligned}
      \relax[\cat D,\Rbar]^{\op}(M^*f,g)
       & =
      \inf_d\left(\inf_c\left(M(c,d)-f(c)\right)-g(d)\right) \\
       & =
      \inf_d\inf_c\left(M(c,d)-f(c)-g(d)\right)              \\
       & =
      \inf_c\left(\inf_d\left(M(c,d)-g(d)\right)-f(c)\right) \\
       & =
      \relax[\cat C^{\op},\Rbar](f,M_*g).
    \end{aligned}
  \]
\end{proof}

\begin{lemma}\label{lem:antitone}
  With respect to pointwise order, the maps $M^*$ and $M_*$ are order-reversing:
  \[
    f\le f'\Rightarrow M^*f'\le M^*f,
    \qquad
    g\le g'\Rightarrow M_*g'\le M_*g.
  \]
\end{lemma}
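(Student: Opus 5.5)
The plan is to argue pointwise, directly from the defining formulas \eqref{eq:Mupperstar} and \eqref{eq:Mlowerstar}. The only input needed is that residuation $z\mapsto z-y$ is antitone in the subtracted variable $y$, with the endpoint conventions of \eqref{eq:minus} in force.

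First I will record the monotonicity fact in $\Rbar$: if $y\le y'$, then $z-y'\le z-y$ for every $z\in\Rbar$. This follows from \eqref{eq:minus}. Addition $x\mapsto x+y$ is monotone in $y$, so any $x$ with $x+y'\le z$ also satisfies $x+y\le x+y'\le z$. Hence the set $\{x\mid x+y'\le z\}$ is contained in $\{x\mid x+y\le z\}$, and taking suprema gives the inequality. The one point to check here is that $x+y\le x+y'$ holds for all $x\in\Rbar$, including $x=-\infty$. This is immediate from the convention $-\infty+y=-\infty$: both sides are then $-\infty$. When $x$ is finite or $+\infty$, it is the usual monotonicity of extended addition. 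This endpoint verification is the only place any care is needed, so I regard it as the (mild) main obstacle.

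Next, suppose $f\le f'$ pointwise. For each $c$ and $d$, the fact above gives $M(c,d)-f'(c)\le M(c,d)-f(c)$. Taking the infimum over $c\in\cat C$ preserves the inequality, so $(M^*f')(d)\le (M^*f)(d)$ for every $d$, which is the first claim. The second claim is symmetric: from $g\le g'$ we get $M(c,d)-g'(d)\le M(c,d)-g(d)$, and taking the infimum over $d$ gives $M_*g'\le M_*g$.

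Alternatively, one could derive the lemma from the enriched functoriality established in the proof of Proposition~\ref{prop:Isbell-adjunction}. There, $f\le f'$ means $[f,f']\ge 0$, hence $0+(M^*f')(d)\le (M^*f)(d)$. Recalling that the enriched order on copresheaves is the opposite of the pointwise order, this is the same inequality. I would present the direct pointwise argument as the proof and mention this as a remark.
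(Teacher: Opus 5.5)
Your proof is correct and is the paper's argument: the residual $z-y$ is antitone in $y$, so $M(c,d)-f'(c)\le M(c,d)-f(c)$ holds termwise and passes through the infimum, and $M_*$ is handled symmetrically. Your extra check that this antitonicity follows from the definition of the residual, including the $-\infty$ endpoint convention, fills in a step the paper simply asserts.
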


\begin{proof}
  If $f\le f'$, then
  $M(c,d)-f'(c)\le M(c,d)-f(c)$ for all $c,d$, because the residual is
  antitone in its second variable.  Taking infima gives
  $M^*f'\le M^*f$.  The proof for $M_*$ is identical.
\end{proof}

Thus the composites
\[
  \begin{aligned}
    \mathrm{cl}_{\cat C}:=M_*M^*\colon [\cat C^{\op},\Rbar] & \to[\cat C^{\op},\Rbar], \\
    \mathrm{cl}_{\cat D}:=M^*M_*\colon [\cat D,\Rbar]^{\op} & \to[\cat D,\Rbar]^{\op}
  \end{aligned}
\]
are monotone for pointwise order.  The adjunction $M^*\dashv M_*$ implies that
they are closure operators:
\begin{equation}\label{eq:closure}
  \begin{aligned}
    f & \le \mathrm{cl}_{\cat C}(f), & \mathrm{cl}_{\cat C}^2 & =\mathrm{cl}_{\cat C}, \\
    g & \le \mathrm{cl}_{\cat D}(g), & \mathrm{cl}_{\cat D}^2 & =\mathrm{cl}_{\cat D}.
  \end{aligned}
\end{equation}

\begin{definition}\label{def:nucleus}
  The \emph{nucleus} of $M$ is the $\Rbar$-category $\Nuc(M)$ whose objects are
  pairs
  \[
    \Nuc(M)=\set{(f,g)\mid f\in[\cat C^{\op},\Rbar],\ g\in[\cat D,\Rbar]^{\op},\ g=M^*f,\ f=M_*g},
  \]
  with hom-values inherited from either side:
  \[
    \Nuc(M)\paren{(f,g),(f',g')}=[\cat C^{\op},\Rbar](f,f')
    =[\cat D,\Rbar]^{\op}(g,g').
  \]
  The equality follows from the adjunction $M^*\dashv M_*$.
\end{definition}

The fixed-point description of the nucleus is used throughout this paper.

\begin{proposition}\label{prop:nucleus-fixedpoints}
  There are canonical isomorphisms of $\Rbar$-categories
  \[
    \Nuc(M)\cong\fix(\mathrm{cl}_{\cat C})
    \cong\fix(\mathrm{cl}_{\cat D})
    \cong\mathrm{im}(M^*)
    \cong\mathrm{im}(M_*),
  \]
  where $\fix(\mathrm{cl}_{\cat C})=\set{f\mid \mathrm{cl}_{\cat C}(f)=f}$ and
  similarly for $\cat D$, and $\mathrm{im}(M^*)$ and $\mathrm{im}(M_*)$ denote
  the full subcategories spanned by objects of the corresponding forms.
\end{proposition}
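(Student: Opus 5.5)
The plan is to derive everything from the adjunction $M^*\dashv M_*$ of Proposition~\ref{prop:Isbell-adjunction}, through the triangle identities $M^*M_*M^*=M^*$ and $M_*M^*M_*=M_*$. First I will establish these. Applying the adjunction with $g=M^*f$ gives $0\le[f,M_*M^*f]$, so $f\le M_*M^*f$ pointwise; symmetrically, taking $f=M_*g$ gives $g\ge M^*M_*g$ pointwise. Applying the antitone map $M^*$ (Lemma~\ref{lem:antitone}) to $f\le M_*M^*f$ yields $M^*M_*M^*f\le M^*f$. The instance $g=M^*f$ of the second inequality yields the reverse inequality, so $M^*M_*M^*=M^*$. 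The identity $M_*M^*M_*=M_*$ follows in the same way. This also justifies the idempotence asserted in \eqref{eq:closure}.

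Next I will write down the maps. Projection $(f,g)\mapsto f$ sends $\Nuc(M)$ into $\fix(\mathrm{cl}_{\cat C})$, since $f=M_*g=M_*M^*f$. Its inverse is $f\mapsto(f,M^*f)$, which lands in $\Nuc(M)$ precisely because $f$ is fixed. The two composites are identities: $g=M^*f$ is forced. The case $\fix(\mathrm{cl}_{\cat D})$ is handled in the same way using $(f,g)\mapsto g$, with inverse $g\mapsto(M_*g,g)$.

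For the images, I will show that $\fix(\mathrm{cl}_{\cat C})=\mathrm{im}(M_*)$ as sets of presheaves. A fixed $f$ equals $M_*(M^*f)$, so it lies in the image. Conversely, if $f=M_*g$, then $M_*M^*f=M_*M^*M_*g=M_*g=f$ by the triangle identity. Dually, $\fix(\mathrm{cl}_{\cat D})=\mathrm{im}(M^*)$. Note that the identifications cross over: $\mathrm{im}(M^*)$ corresponds to the $\cat D$ side. The statement lists all four as isomorphic to $\Nuc(M)$, so this is harmless.

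Finally comes enrichment, which I expect to be the only point needing care. By Definition~\ref{def:nucleus}, hom-values on $\Nuc(M)$ are $[f,f']$, which is exactly the hom in the full subcategory $\fix(\mathrm{cl}_{\cat C})\subseteq[\cat C^{\op},\Rbar]$. So the first projection is an isomorphism of $\Rbar$-categories: it is bijective on objects and preserves homs on the nose.

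For the $\cat D$ side, I must check that $[f,f']=[\cat D,\Rbar]^{\op}(g,g')$ when $g=M^*f$ and $g'=M^*f'$. Using the adjunction with $g'$ and the fixed point $f'=M_*g'$,
\[
  [\cat D,\Rbar]^{\op}(M^*f,g')=[f,M_*g']=[f,f'],
\]
which is exactly the equality asserted in Definition~\ref{def:nucleus}. With this, all the bijections are hom-preserving, and canonicity is clear since every map is built only from $M^*$, $M_*$, and projections.

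The main obstacle is keeping the copresheaf order reversed consistently in these steps. Extended-real endpoint arithmetic does not enter, because only the adjunction and antitonicity are used.
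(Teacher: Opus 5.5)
Your proposal follows the paper's proof. The two coordinate projections identify $\Nuc(M)$ with $\fix(\mathrm{cl}_{\cat C})$ and $\fix(\mathrm{cl}_{\cat D})$, and the triangle identities, obtained from expansion plus antitonicity, match fixed points with images. Your explicit verification of the hom equality $[\cat D,\Rbar]^{\op}(M^*f,g')=[f,M_*g']=[f,f']$ is a small addition that the paper leaves inside Definition~\ref{def:nucleus}.

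There is one sign slip, and as written it breaks a step. Taking $f=M_*g$ in the adjunction gives
$0\le[M_*g,M_*g]=[\cat D,\Rbar]^{\op}(M^*M_*g,g)=\inf_d\bigl(M^*M_*g(d)-g(d)\bigr)$.
This says $g\le M^*M_*g$ \emph{pointwise}; only in the reversed enriched order does $M^*M_*g$ lie below $g$. With your stated inequality $g\ge M^*M_*g$, the instance $g=M^*f$ would only reproduce $M^*M_*M^*f\le M^*f$, not the reverse inequality you need. With the correct direction, the instance gives $M^*f\le M^*M_*M^*f$, so $M^*M_*M^*=M^*$ holds and the rest of your argument goes through unchanged.
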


\begin{proof}
  The projection $(f,g)\mapsto f$ identifies $\Nuc(M)$ with
  $\fix(M_*M^*)$: indeed, $(f,g)\in\Nuc(M)$ if and only if
  $f=M_*g=M_*M^*f$.  The projection $(f,g)\mapsto g$ gives the dual
  identification with $\fix(M^*M_*)$. For any closure operator, fixed points and image agree.  Finally,
  $M_*$ already lands in $\fix(M_*M^*)$, because the expansion
  $g\le M^*M_*g$ and antitonicity of $M_*$ give
  $M_*M^*M_*g\le M_*g$, while expansion gives the reverse inequality.  Thus
  $M_*M^*M_*=M_*$.  Dually, $M^*M_*M^*=M^*$.
\end{proof}

\begin{remark}\label{rem:constructing-nucleus-points}
  Proposition~\ref{prop:nucleus-fixedpoints} gives an explicit way to produce
  objects of $\Nuc(M)$.  For any presheaf $f$, the pair
  $\left(\mathrm{cl}_{\cat C}(f),M^*f\right)$ lies in $\Nuc(M)$.  For any
  copresheaf $g$, the pair
  $\left(M_*g,\mathrm{cl}_{\cat D}(g)\right)$ lies in $\Nuc(M)$.
\end{remark}

\begin{remark}\label{cor:fiber-max}
  A presheaf $f$ is $\mathrm{cl}_{\cat C}$-closed if and only if it is the
  largest presheaf, for the pointwise order, among those with the same
  $M^*$-image:
  \[
    M^*h=M^*f \Longrightarrow h\le f.
  \]
  Dually, a copresheaf $g$ is $\mathrm{cl}_{\cat D}$-closed if and only if it is
  the largest copresheaf, for the pointwise order, among those with the same
  $M_*$-image.
\end{remark}

In the next section, we introduce a real linear realisability situation: a set \(\cat C\) equipped with an associative execution operation and a real-valued measurement $p$.  From these data
one obtains a binary measurement \(M(a,b)=p(ab)-p(a)-p(b)\); after viewing the
set \(\cat C\) as a discrete \(\Rbar\)-category, this measurement becomes an
\(\Rbar\)-profunctor of the kind studied above.  The Isbell-nuclear formalism
then provides one side of the comparison with the logical types generated by
orthogonality.

\section{Linear realisability and types}\label{sec:types}

Section~\ref{sec:nucleus} developed the profunctorial construction: a
profunctor \(M\) determines conjugate Isbell closure operators and a nucleus.
We now give the corresponding realisability construction.  A linear
realisability situation starts from a set \(\cat C\) equipped with an associative product and a real-valued
function; from these data we obtain a binary
measurement \(M\), a weighted orthogonality relation on \(\cat C\times\R\),
and the left and right types generated by the associated
orthogonal-complement operations.

\subsection{The realisability datum and its measurement}

\begin{definition}\label{def:linear-realisability-situation}
  A \emph{real linear realisability situation} is a triple
  \((\cat C,\Ex,p)\), where \(\cat C\) is a set, \(\Ex\colon
  \cat C\times\cat C\to\cat C\) is an associative execution product, and
  \(p\colon \cat C\to\R\) is a function.  We write
  \[
    \Ex(a,b)=ab.
  \]
\end{definition}

No unit is assumed, and no compatibility between \(p\) and execution is
assumed.  Instead we measure the defect of execution to be additive by the function
\(
M\colon \cat C\times\cat C\longrightarrow\R\) defined by
\begin{equation}\label{eq:defofM}
  \qquad
  M(a,b):=p(ab)-p(a)-p(b).
\end{equation}
We call $M$ the \emph{measurement} of the linear realisability situation.

\begin{proposition}[Trefoil identity]
  The measurement \(M\) satisfies
  \begin{equation}\label{trefoil}
    M(ab,c)+M(a,b)=M(a,bc)+M(b,c)
  \end{equation}
  for all \(a,b,c\in\cat C\).
\end{proposition}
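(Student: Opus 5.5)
The plan is to expand both sides of \eqref{trefoil} directly from the definition \eqref{eq:defofM} and check that each reduces to the same symmetric expression. Since \(p\) is real-valued, every quantity here is finite, so the arithmetic is ordinary real arithmetic. None of the endpoint conventions of \(\Rbar\) enter.

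First, I would compute the left-hand side:
\[
  M(ab,c)+M(a,b)=\bigl(p((ab)c)-p(ab)-p(c)\bigr)+\bigl(p(ab)-p(a)-p(b)\bigr).
\]
The terms \(p(ab)\) cancel, leaving \(p((ab)c)-p(a)-p(b)-p(c)\).

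Second, I would expand the right-hand side in the same way:
\[
  M(a,bc)+M(b,c)=\bigl(p(a(bc))-p(a)-p(bc)\bigr)+\bigl(p(bc)-p(b)-p(c)\bigr).
\]
Here the terms \(p(bc)\) cancel, leaving \(p(a(bc))-p(a)-p(b)-p(c)\).

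Finally, associativity of \(\Ex\) gives \((ab)c=a(bc)\), so the two reduced expressions coincide. Both equal the ternary quantity \(p(abc)-p(a)-p(b)-p(c)\), which anticipates \eqref{eq:ternary-measurement}.

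There is no real obstacle. The one point to state explicitly is that associativity of execution is exactly what is used. No compatibility between \(p\) and \(\Ex\) is needed: the identity holds for an arbitrary real function \(p\), because it is a cocycle-type condition satisfied by any coboundary.
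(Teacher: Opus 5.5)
Your proposal is correct and follows essentially the same route as the paper: expand both sides from the definition of \(M\), cancel \(p(ab)\) and \(p(bc)\) respectively, and use associativity \((ab)c=a(bc)\) to identify both with \(p(abc)-p(a)-p(b)-p(c)\). Your closing coboundary remark also matches the paper's subsequent observation that \(M=\delta(-p)\) in the bar complex.
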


\begin{proof}
  Associativity gives \(p((ab)c)=p(a(bc))=p(abc)\).  Therefore
  \begin{align*}
    M(ab,c)+M(a,b)
     & =p(abc)-p(ab)-p(c)+p(ab)-p(a)-p(b) \\
     & =p(abc)-p(a)-p(b)-p(c)             \\
     & =p(abc)-p(a)-p(bc)+p(bc)-p(b)-p(c) \\
     & =M(a,bc)+M(b,c).
  \end{align*}
\end{proof}
\begin{remark}
  The construction above should be compared with Seiller's
  linear-realisability framework.  There the measurement is a primitive binary
  map
  \[
    [|\,\cdot,\cdot\,|]_m\colon P\times P\to\Theta
  \]
  with values in a commutative group \(\Theta\), and the trefoil identity is
  assumed as an axiom; see \cite[Chapter 10]{seiller-hdr}.  In the present
  paper, the target group is \(\R\), no commutativity of the execution product
  is assumed, and the measurement is not specified independently.  Instead it
  is induced from the unary function \(p\) by Equation \eqref{eq:defofM}.
  Thus the trefoil identity follows formally.

  The same fact can be recognized in the real bar cochain complex of the
  associative product on \(\cat C\).  Let
  \[
    C^n(\cat C;\R)=\{Q:\cat C^n\to\R\}
  \]
  be the real bar cochains, and write \(\delta\) for the bar differential.  In
  low degrees,
  \[
    (\delta q)(a,b)=q(a)+q(b)-q(ab)
  \]
  for \(q:\cat C\to\R\), and
  \[
    (\delta Q)(a,b,c)
    =
    Q(b,c)-Q(ab,c)+Q(a,bc)-Q(a,b)
  \]
  for \(Q:\cat C^2\to\R\).  With these signs,
  \[
    M=\delta(-p).
  \]
  Hence the cocycle equation \(\delta M=0\) follows from \(\delta^2=0\).
  Written out, this cocycle equation is exactly the trefoil identity.  We will
  not otherwise use the bar complex; in what follows, we work directly with
  the displayed trefoil identity.
\end{remark}

For the link with the enriched constructions, regard \(\cat C\) as the
discrete \(\Rbar\)-category on this set.  Then the real-valued function \(M\)
becomes an \(\Rbar\)-valued profunctor
\[
  M\colon \cat C\nrightarrow\cat C
\]
by the inclusion \(\R\subset\Rbar\).  This is the datum to which the Isbell
constructions of Section~\ref{sec:nucleus} apply.

\subsection{Weighted orthogonality}

The type construction is generated by a weighted form of orthogonality.  For
\((a,\alpha),(b,\beta)\in\cat C\times\R\), define
\begin{equation}\label{eq:weighted-orthogonality}
  (a,\alpha)\perp(b,\beta)
  \quad\Longleftrightarrow\quad
  \alpha+\beta\le M(a,b).
\end{equation}
The unweighted inequality \(0\le M(a,b)\) may be used as a sign test for
interaction between \(a\) and \(b\), but the formal objects of this paper are
the weighted subsets of \(\cat C\times\R\) closed under
\eqref{eq:weighted-orthogonality}. For later multiplicative constructions, it is useful to record the associated
weighted execution product
\[
  (a,\alpha)(b,\beta):=(ab,\alpha+\beta-M(a,b)).
\]
Its associativity is exactly the trefoil identity: the two second coordinates
of \(((a,\alpha)(b,\beta))(c,\gamma)\) and
\((a,\alpha)((b,\beta)(c,\gamma))\) are
\[
  \alpha+\beta+\gamma-M(a,b)-M(ab,c)
  \quad\text{and}\quad
  \alpha+\beta+\gamma-M(b,c)-M(a,bc),
\]
which agree by \eqref{trefoil}.  The complement calculus below uses only the
relation \(\perp\); later sections use the displayed product when discussing
products of types.

\subsection{Complements and types}

For subsets \(A,B\subseteq\cat C\times\R\), define the left and right
orthogonal complements by
\[
  {}^\perp B
  :=
  \{(a,\alpha)\mid
  \alpha+\beta\le M(a,b)
  \text{ for all }(b,\beta)\in B\},
\]
and
\[
  A^\perp
  :=
  \{(b,\beta)\mid
  \alpha+\beta\le M(a,b)
  \text{ for all }(a,\alpha)\in A\}.
\]
The two constructions are oriented.  Since \(M(a,b)\) need not equal
\(M(b,a)\), the left and right complements need not agree.

\begin{definition}\label{def:left-right-closures}
  For \(A\subseteq\cat C\times\R\), its \emph{left closure} is
  \[
    \clL{A}:={}^\perp(A^\perp).
  \]
  Its \emph{right closure} is
  \[
    \clR{A}:=({}^\perp A)^\perp.
  \]
\end{definition}

\begin{definition}\label{def:left-right-types}
  A subset \(A\subseteq\cat C\times\R\) is a \emph{left type} if
  \(A={}^\perp B\) for some \(B\subseteq\cat C\times\R\).  A subset
  \(B\subseteq\cat C\times\R\) is a \emph{right type} if
  \(B=A^\perp\) for some \(A\subseteq\cat C\times\R\).  A \emph{paired type}
  is a pair \((A,B)\) such that
  \[
    A={}^\perp B
    \qquad\text{and}\qquad
    B=A^\perp.
  \]
\end{definition}

Thus left and right types are not primitive data.  They are the closed objects
generated by weighted orthogonality. We now describe the properties of our orthogonal-complement operation.

\begin{proposition}[Orthogonal-complement calculus]
  \label{prop:acontainedinaperpperp}
  \label{prop:threecomplements}
  \label{prop:typesdoubleperp}
  Let all sets below be subsets of \(\cat C\times\R\).
  \begin{enumerate}[label=\textup{(\roman*)}]
    \item If \(A_1\subseteq A_2\), then
          \(A_2^\perp\subseteq A_1^\perp\).  If \(B_1\subseteq B_2\), then
          \({}^\perp B_2\subseteq{}^\perp B_1\).

    \item For every \(A\) and \(B\),
          \[
            A\subseteq{}^\perp(A^\perp),
            \qquad
            B\subseteq({}^\perp B)^\perp.
          \]

    \item Alternating three complements cancels to one complement:
          \[
            {}^\perp\bigl(({}^\perp B)^\perp\bigr)={}^\perp B,
            \qquad
            \bigl({}^\perp(A^\perp)\bigr)^\perp=A^\perp.
          \]

    \item A subset \(A\) is a left type if and only if \(\clL{A}=A\).
          A subset \(B\) is a right type if and only if \(\clR{B}=B\).
          In particular, \(\clL{A}\) is a left type and \(\clR{B}\) is a right type for
          arbitrary \(A\) and \(B\).

    \item The assignments
          \[
            A\longmapsto A^\perp,
            \qquad
            B\longmapsto{}^\perp B
          \]
          restrict to inverse bijections between left types and right types.  Under
          these bijections, a left type \(A\) and a right type \(B\) correspond
          precisely when \((A,B)\) is a paired type.
  \end{enumerate}
\end{proposition}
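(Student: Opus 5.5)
This is the standard Galois-connection calculus for a binary relation, so the plan is to derive everything from one symmetric reformulation. For all \(A,B\subseteq\cat C\times\R\),
\[
  A\subseteq{}^\perp B
  \quad\Longleftrightarrow\quad
  \text{every }(a,\alpha)\in A\text{ is orthogonal to every }(b,\beta)\in B
  \quad\Longleftrightarrow\quad
  B\subseteq A^\perp .
\]
This holds directly from the definitions of the two complements, with no use of the structure of \(M\) beyond the relation \(\perp\). In other words, \(A\mapsto A^\perp\) and \(B\mapsto{}^\perp B\) form an antitone Galois connection between the power set of \(\cat C\times\R\) and itself. Items (i)--(v) then follow formally.

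\textbf{Item (i).} For antitonicity, suppose \(A_1\subseteq A_2\). An element orthogonal to everything in \(A_2\) is in particular orthogonal to everything in \(A_1\), so \(A_2^\perp\subseteq A_1^\perp\). The statement for \({}^\perp\) is the mirror image.

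\textbf{Item (ii).} Apply the displayed equivalence to the trivial inclusion \(A^\perp\subseteq A^\perp\), read with \(B=A^\perp\). This gives \(A\subseteq{}^\perp(A^\perp)\). Likewise, starting from \({}^\perp B\subseteq{}^\perp B\) gives \(B\subseteq({}^\perp B)^\perp\).

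\textbf{Item (iii).} Apply (ii) to the set \({}^\perp B\) to get \({}^\perp B\subseteq{}^\perp(({}^\perp B)^\perp)\). Conversely, (ii) gives \(B\subseteq({}^\perp B)^\perp\), and applying the antitone map \({}^\perp\) from (i) reverses this to \({}^\perp(({}^\perp B)^\perp)\subseteq{}^\perp B\). The other identity is symmetric.

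\textbf{Item (iv).} If \(\clL{A}=A\), then \(A={}^\perp(A^\perp)\) is a left complement, hence a left type. Conversely, if \(A={}^\perp B\), then (iii) gives \(\clL{A}={}^\perp(({}^\perp B)^\perp)={}^\perp B=A\). The fact that \(\clL{A}\) is always a left type is immediate, since \(\clL{A}\) is by definition the left complement \({}^\perp(A^\perp)\). The statements for right types are symmetric.

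\textbf{Item (v).} By (iii), \(A^\perp\) is a right type whenever \(A\) is a left type, and \({}^\perp B\) is a left type whenever \(B\) is a right type. The two maps compose to \(\clL{\cdot}\) and \(\clR{\cdot}\), which are the identity on left types and right types respectively by (iv). Hence the maps are inverse bijections.

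For the characterisation of paired types, suppose \(A\) is a left type and \(B=A^\perp\). Then \({}^\perp B=\clL{A}=A\), so \((A,B)\) is a paired type. Conversely, a paired type satisfies \(B=A^\perp\) by definition.

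None of these steps is a genuine obstacle. The only point needing care is to keep the left/right orientation straight, since \(M(a,b)\neq M(b,a)\) in general. The argument avoids this issue by never invoking symmetry of \(\perp\): it works solely through the oriented equivalence \(A\subseteq{}^\perp B\iff B\subseteq A^\perp\).
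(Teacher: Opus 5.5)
Your proof is correct and follows essentially the same route as the paper: antitonicity and the double-complement inclusions come straight from the definitions, triple-complement cancellation follows from those, and the type characterisations and the bijection are then formal consequences. Deriving (ii) from the oriented equivalence $A\subseteq{}^\perp B\iff B\subseteq A^\perp$ only repackages the paper's elementwise argument, and your treatment of (v) states the inverse-bijection claim a little more explicitly than the paper does.
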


\begin{proof}
  The first assertion is immediate from the definitions: enlarging the set to
  be tested against imposes more inequalities. For the second assertion, if \((a,\alpha)\in A\) and
  \((b,\beta)\in A^\perp\), then \((a,\alpha)\perp(b,\beta)\) by definition of
  \(A^\perp\).  Hence \((a,\alpha)\in{}^\perp(A^\perp)\).  The proof of
  \(B\subseteq({}^\perp B)^\perp\) is the same with left and right interchanged.

  For the first identity in (iii), apply (ii) to the set \({}^\perp B\) to get
  \[
    {}^\perp B\subseteq{}^\perp\bigl(({}^\perp B)^\perp\bigr).
  \]
  On the other hand, (ii) gives \(B\subseteq({}^\perp B)^\perp\), and
  antitonicity of the left complement gives
  \[
    {}^\perp\bigl(({}^\perp B)^\perp\bigr)\subseteq{}^\perp B.
  \]
  The second identity in (iii) is analogous. If \(A\) is a left type, say \(A={}^\perp B\), then (iii) gives
  \[
    \clL{A}={}^\perp(A^\perp)
    ={}^\perp\bigl(({}^\perp B)^\perp\bigr)
    ={}^\perp B=A.
  \]
  Conversely, if \(\clL{A}=A\), then \(A={}^\perp(A^\perp)\), so \(A\) is a
  left type.  The right-type assertion is parallel.  The final assertion then
  follows directly: if \(A\) is a left type, then \(A={}^\perp(A^\perp)\), so
  \(A\) is paired with \(A^\perp\); if \(B\) is a right type, then
  \(B=({}^\perp B)^\perp\), so \(B\) is paired with \({}^\perp B\).
\end{proof}

We write
\[
  \mathrm{LType}(M),\qquad
  \mathrm{RType}(M),\qquad
  \mathrm{PType}(M)
\]
for the left types, right types, and paired types determined by \(M\).  The
preceding proposition gives canonical bijections among these three
presentations:
\[
  \begin{tikzcd}
    & {\mathrm{PType}(M)} \\
    \\
    {\mathrm{RType}(M)} && {\mathrm{LType}(M)}
    \arrow["{(A,B)\mapsto B}"', from=1-2, to=3-1]
    \arrow["{(A,B)\mapsto A}", from=1-2, to=3-3]
    \arrow["{B\mapsto{}^\perp B}", shift left, from=3-1, to=3-3]
    \arrow["{A\mapsto A^\perp}", shift left, from=3-3, to=3-1]
  \end{tikzcd}
\]


The enriched language developed in Section~\ref{sec:nucleus} now meets the
construction from weighted orthogonality developed here.  In the next section we
introduce the maps \(\Omega\) and \(\varphi\), which translate between weighted
subsets of \(\cat C\times\R\) and \(\Rbar\)-valued coordinates, and use them to
compare the left and right types above with fixed points in the Isbell nucleus
of \(M\).

\section{The nucleus-types correspondence}\label{sec:correspondence}

Section~\ref{sec:nucleus} associates to a profunctor \(M\) an Isbell
adjunction and its nucleus.  Section~\ref{sec:types} associates to the same
measurement, when it comes from a real linear realisability situation, a
weighted orthogonality relation on \(\cat C\times\R\), together with its left
and right complement operations.  We now compare these two closure
constructions.

Throughout this section \(\cat C\) is the set from Section~\ref{sec:types},
viewed as a discrete \(\Rbar\)-category, and
\[
  M(a,b)=p(ab)-p(a)-p(b)
\]
is regarded as an \(\Rbar\)-valued profunctor by the inclusion
\(\R\subset\Rbar\).  Since \(\cat C\) is discrete, both
presheaves and copresheaves have the same underlying data: functions
\(\cat C\to\Rbar\).  The distinction between them is therefore not in their
coordinates, but in the enriched order conventions described in
Section~\ref{sec:nucleus}.  In the arguments below, whenever an order
comparison is needed, we state it in the pointwise order explicitly.

\subsection[Profiles and recovered weighted subsets]
{Profiles and recovered weighted subsets}

For a function \(f:\cat C\to\Rbar\), define the weighted subset
\[
  \Omega_f:=\{(a,\alpha)\in\cat C\times\R\mid \alpha\le f(a)\}.
\]
Although \(f\) takes values in \(\Rbar\), the subset \(\Omega_f\) lies in
\(\cat C\times\R\).  The endpoint values determine the exceptional fibres:
\(f(a)=+\infty\) gives the whole fibre over \(a\), while \(f(a)=-\infty\)
gives the empty fibre.

Conversely, any weighted subset \(X\subseteq\cat C\times\R\) has a profile
\[
  \varphi_X(c):=\sup\{\xi\in\R\mid (c,\xi)\in X\},
\]
where the supremum is taken in \(\Rbar\).

These two constructions are inverse in one direction only:
\[
  \varphi_{\Omega_f}=f,
  \qquad
  X\subseteq\Omega_{\varphi_X}.
\]
Thus passing from a function to its weighted subset and back recovers the
function exactly.  Passing from a subset to its profile and back generally
enlarges the subset: fibrewise, it replaces the original fibre by the lower
ray determined by its supremum.  For example, over a single object, the subsets
\[
  \{0\},\qquad (-\infty,0),\qquad (-\infty,0]
\]
all have profile \(0\), but reconstruction from the profile gives
\((-\infty,0]\).  We formalize this terminology:

\begin{definition}
  A weighted subset \(X\subseteq\cat C\times\R\) is \emph{recovered from its
    profile} if
  \(
  X=\Omega_{\varphi_X}.
  \)
\end{definition}

Equivalently, \(X\) is recovered from its profile if each fibre is either
empty, a closed lower ray \((-\infty,r]\) with \(r\in\R\), or all of \(\R\).
In other words, the fibres are down-closed and contain their supremum whenever
that supremum is finite.

The assignments
\[
  f\longmapsto\Omega_f,
  \qquad
  X\longmapsto\varphi_X
\]
therefore identify functions \(\cat C\to\Rbar\) with weighted subsets of $\cat C \times \R$ recovered from their profiles.

For some situations it is helpful to distinguish handedness: when a weighted subset is used in the left variable, we write
\[
  \lambda_A:=\varphi_A,
\]
and when it is used in the right variable, we write
\[
  \rho_B:=\varphi_B.
\]
These are the same profile construction, but the notation keeps track of which
coordinate of the Isbell adjunction the profile occupies: \(\lambda_A\) is the
left, or presheaf, coordinate, while \(\rho_B\) is the right, or copresheaf,
coordinate.

\subsection{Orthogonality in profile coordinates}

Recall from Section~\ref{sec:nucleus} that the Isbell conjugates associated to
\(M\) are
\[
  (M^*f)(b)=\inf_{a\in\cat C}\bigl(M(a,b)-f(a)\bigr),
  \qquad
  (M_*g)(a)=\inf_{b\in\cat C}\bigl(M(a,b)-g(b)\bigr),
\]
where subtraction means the residual in \(\Rbar\).  The first formula sends a
left coordinate to a right coordinate; the second sends a right coordinate to a
left coordinate.

\begin{proposition}\label{prop:orthogonality-isbell-profiles}
  For all \(A,B\subseteq\cat C\times\R\),
  \[
    A^\perp=\Omega_{M^*\lambda_A},
    \qquad
    {}^\perp B=\Omega_{M_*\rho_B}.
  \]
\end{proposition}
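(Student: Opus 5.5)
The plan is to prove the first identity \(A^\perp=\Omega_{M^*\lambda_A}\) by a chain of equivalences for a fixed \((b,\beta)\in\cat C\times\R\); the second identity then follows by the same argument with left and right interchanged, using \(M_*\) and \(\rho_B\). The membership \((b,\beta)\in A^\perp\) means \(\alpha+\beta\le M(a,b)\) for all \((a,\alpha)\in A\). Membership in \(\Omega_{M^*\lambda_A}\) means \(\beta\le \inf_a\bigl(M(a,b)-\lambda_A(a)\bigr)\), which is equivalent to \(\beta\le M(a,b)-\lambda_A(a)\) for every \(a\in\cat C\). By the residuation law \eqref{eq:minus}, this is equivalent to \(\beta+\lambda_A(a)\le M(a,b)\) for every \(a\). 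Here I would use commutativity of \(+\) on \(\Rbar\) to place \(\beta\) on the appropriate side of the adjunction \(x+y\le z\Leftrightarrow x\le z-y\).

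The key step is then the fibrewise claim: for fixed \(a\), and with \(\beta\) and \(M(a,b)\) finite reals,
\[
  \alpha+\beta\le M(a,b)\ \text{for all }\alpha\text{ with }(a,\alpha)\in A
  \quad\Longleftrightarrow\quad
  \lambda_A(a)+\beta\le M(a,b).
\]
Since \(y\mapsto y+\beta\) preserves arbitrary suprema in \(\Rbar\), we have
\(\lambda_A(a)+\beta=\sup\{\alpha+\beta\mid (a,\alpha)\in A\}\). The supremum is bounded by \(M(a,b)\) exactly when every element of the set is, which gives the claim.

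The only place I expect care to be needed is the endpoint cases. If the fibre is empty, then \(\lambda_A(a)=-\infty\), the condition on the left is vacuous, and \(-\infty+\beta=-\infty\le M(a,b)\) holds, consistent with the residual \(M(a,b)-(-\infty)=+\infty\). If the fibre is unbounded, then \(\lambda_A(a)=+\infty\) and no finite \(\beta\) satisfies either side. This agrees with \(M(a,b)-(+\infty)=-\infty\), and with \(\Omega\) placing no point of \(\cat C\times\R\) at weight \(-\infty\). These cases are covered by the sup-preservation convention \(-\infty+y=-\infty\), but I would check them explicitly.

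Taking the conjunction over all \(a\) completes the equivalence, so \(A^\perp=\Omega_{M^*\lambda_A}\).
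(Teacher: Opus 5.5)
Your proof is correct and is essentially the paper's argument: translation by the real number \(\beta\) preserves suprema, which reduces each fibre to \(\lambda_A(a)+\beta\le M(a,b)\); residuation and the infimum over \(a\) then finish the proof, and you check the same endpoint cases. You run the chain of equivalences from both ends toward the middle rather than in one direction, but that ordering is cosmetic.
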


\begin{proof}
  We prove the first identity.  Let \((b,\beta)\in\cat C\times\R\).  By
  definition of the right complement,
  \[
    (b,\beta)\in A^\perp
    \quad\Longleftrightarrow\quad
    \alpha+\beta\le M(a,b)
    \text{ for all }(a,\alpha)\in A.
  \]
  Fixing \(a\), this condition over the fibre of \(A\) is equivalent to
  \[
    \lambda_A(a)+\beta\le M(a,b),
  \]
  because translation by the real number \(\beta\) preserves suprema in
  \(\Rbar\).  This remains meaningful at the endpoints: an empty fibre gives
  \(\lambda_A(a)=-\infty\) and no constraint, while an unbounded fibre gives
  \(\lambda_A(a)=+\infty\), which in the present real-valued measurement
  setting gives no real \(\beta\) satisfying the inequality. By residuation, the displayed inequality is equivalent to
  \[
    \beta\le M(a,b)-\lambda_A(a).
  \]
  Requiring this for every \(a\) is equivalent to
  \[
    \beta\le \inf_{a\in\cat C}\bigl(M(a,b)-\lambda_A(a)\bigr)
    =(M^*\lambda_A)(b).
  \]
  This is precisely \((b,\beta)\in\Omega_{M^*\lambda_A}\).  The proof of
  \({}^\perp B=\Omega_{M_*\rho_B}\) is the same calculation with the two
  variables interchanged.
\end{proof}

\begin{corollary}\label{cor:types-profile-complete}
  Every left type and every right type is recovered from its profile.
  More precisely, if \(A\) is a left type and \(B\) is a right type, then
  \[
    A=\Omega_{\lambda_A},
    \qquad
    B=\Omega_{\rho_B}.
  \]
\end{corollary}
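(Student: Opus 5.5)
The corollary should follow directly from Proposition~\ref{prop:orthogonality-isbell-profiles} together with the identity \(\varphi_{\Omega_f}=f\).

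Let \(A\) be a left type. By Definition~\ref{def:left-right-types} there is some \(B\) with \(A={}^\perp B\). Proposition~\ref{prop:orthogonality-isbell-profiles} then gives \(A=\Omega_{h}\) with \(h=M_*\rho_B\). So \(A\) has the form \(\Omega_h\) for some function \(h\colon\cat C\to\Rbar\).

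Taking profiles and using \(\varphi_{\Omega_f}=f\), we get \(\lambda_A=\varphi_{\Omega_h}=h\). Substituting back gives
\[
A=\Omega_h=\Omega_{\lambda_A}.
\]
The right-type case is symmetric. If \(B=A'^\perp\) for some \(A'\), the same proposition gives \(B=\Omega_{M^*\lambda_{A'}}\), hence \(\rho_B=M^*\lambda_{A'}\) and \(B=\Omega_{\rho_B}\).

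The only point needing care is the identity \(\varphi_{\Omega_f}=f\) at the endpoints, since \(\Omega_f\) only contains real weights:

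\begin{itemize}
\item If \(f(a)=-\infty\), the fibre of \(\Omega_f\) over \(a\) is empty, and its supremum in \(\Rbar\) is \(-\infty\).
\item If \(f(a)=+\infty\), the fibre is all of \(\R\), with supremum \(+\infty\).
\item If \(f(a)\) is finite, the fibre is \((-\infty,f(a)]\), which attains its supremum.
\end{itemize}

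With this check in place, the argument is a two-line substitution and presents no real obstacle.
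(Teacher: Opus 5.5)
Your proposal is correct and is essentially the paper's proof. Both write the type as a single complement and use Proposition~\ref{prop:orthogonality-isbell-profiles} to exhibit it as some $\Omega_h$. Both then read off $h$ as the profile via $\varphi_{\Omega_h}=h$. Your endpoint check of that identity makes explicit a step the paper leaves implicit.
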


\begin{proof}
  A right type \(B\) has the form \(X^\perp\) for some \(X\), and
  Proposition~\ref{prop:orthogonality-isbell-profiles} writes it as
  \(\Omega_{M^*\lambda_X}\).  A left type \(A\) has the form
  \({}^\perp Y\) and is handled by the second identity in the same
  proposition.  The identities with \(\lambda_A\) and \(\rho_B\) then follow.
\end{proof}

\subsection{Fixed points and paired types}

The preceding proposition identifies orthogonal complement with Isbell
conjugacy.  Applying it twice identifies biorthogonal closure with the two
Isbell closure operators.

\begin{corollary}\label{cor:single-coordinate-fixedpoints}
  Let \(f:\cat C\to\Rbar\) be a left coordinate and
  \(g:\cat C\to\Rbar\) a right coordinate.  Then
  \[
    \Omega_f \text{ is a left type}
    \quad\Longleftrightarrow\quad
    M_*M^*f=f,
  \]
  and
  \[
    \Omega_g \text{ is a right type}
    \quad\Longleftrightarrow\quad
    M^*M_*g=g.
  \]
\end{corollary}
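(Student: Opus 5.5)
The plan is to rewrite the biorthogonal closure of \(\Omega_f\) in profile coordinates by applying Proposition~\ref{prop:orthogonality-isbell-profiles} twice, and then to compare the resulting functions using the fact that \(\varphi_{\Omega_h}=h\) for every \(h\colon\cat C\to\Rbar\).

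First I compute the closure. Taking \(A=\Omega_f\), Proposition~\ref{prop:orthogonality-isbell-profiles} gives \(\Omega_f^\perp=\Omega_{M^*\lambda_{\Omega_f}}=\Omega_{M^*f}\). Applying the second identity of the same proposition to \(B=\Omega_{M^*f}\), and using \(\rho_{\Omega_{M^*f}}=M^*f\), I obtain
\[
  \clL{\Omega_f}={}^\perp\bigl(\Omega_f^\perp\bigr)=\Omega_{M_*M^*f}.
\]

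Next I prove the equivalence. By Proposition~\ref{prop:threecomplements}(iv), \(\Omega_f\) is a left type if and only if \(\clL{\Omega_f}=\Omega_f\), that is, if and only if \(\Omega_{M_*M^*f}=\Omega_f\).

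For the forward direction, I apply \(\varphi\) to both sides of this equality. Since \(\varphi_{\Omega_h}=h\), this yields \(M_*M^*f=f\).

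For the reverse direction, if \(M_*M^*f=f\), then the displayed closure formula gives \(\clL{\Omega_f}=\Omega_f\) directly.

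The statement for right coordinates follows in the same way. Proposition~\ref{prop:orthogonality-isbell-profiles} gives \({}^\perp\Omega_g=\Omega_{M_*g}\), and then \(\clR{\Omega_g}=\Omega_{M^*M_*g}\). The equivalence again follows from Proposition~\ref{prop:threecomplements}(iv) together with the injectivity of \(h\mapsto\Omega_h\).

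I do not expect any real obstacle here. The only point that needs care is the endpoint behaviour of profiles. The identity \(\varphi_{\Omega_h}=h\) must hold even when \(h\) takes the value \(\pm\infty\). This is true: a value of \(+\infty\) corresponds to the full fibre \(\R\), whose supremum is \(+\infty\), and a value of \(-\infty\) corresponds to the empty fibre, whose supremum is \(-\infty\). Both cases were already recorded in the definition of \(\Omega_f\), so the injectivity of \(h\mapsto\Omega_h\) holds without further restriction.
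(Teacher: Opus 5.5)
Your proof is correct and matches the paper's argument. Both apply Proposition~\ref{prop:orthogonality-isbell-profiles} twice to get ${}^\perp(\Omega_f^\perp)=\Omega_{M_*M^*f}$, then conclude from the characterization of left types as $\operatorname{cl}_L$-fixed sets and the injectivity of $h\mapsto\Omega_h$ (via $\varphi_{\Omega_h}=h$, which you rightly check at $\pm\infty$), with the right-coordinate case handled in parallel.
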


\begin{proof}
  By Proposition~\ref{prop:orthogonality-isbell-profiles},
  \[
    {}^\perp(\Omega_f^\perp)
    ={}^\perp\Omega_{M^*f}
    =\Omega_{M_*M^*f}.
  \]
  Thus \(\Omega_f\) is a left type if and only if
  \(\Omega_f=\Omega_{M_*M^*f}\), and this is equivalent to
  \(f=M_*M^*f\).  The right-hand
  statement is parallel.
\end{proof}

\begin{theorem}[Nucleus--types correspondence]\label{thm:nucleus-types}
  The assignments
  \[
    (f,g)\longmapsto(\Omega_f,\Omega_g),
    \qquad
    (A,B)\longmapsto(\lambda_A,\rho_B)
  \]
  are inverse bijections between the objects of \(\Nuc(M)\) and the paired
  types \(\mathrm{PType}(M)\).  Explicitly,
  \[
    (f,g)\in\Nuc(M)
    \quad\Longleftrightarrow\quad
    \Omega_f={}^\perp\Omega_g
    \text{ and }
    \Omega_g=\Omega_f^\perp.
  \]
\end{theorem}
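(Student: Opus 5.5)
The plan is to reduce everything to Proposition~\ref{prop:orthogonality-isbell-profiles} together with the two one-sided inverse relations $\varphi_{\Omega_f}=f$ and $X\subseteq\Omega_{\varphi_X}$. First I would establish the displayed equivalence. Fix functions $f,g\colon\cat C\to\Rbar$. Proposition~\ref{prop:orthogonality-isbell-profiles} applies to $A=\Omega_f$ and $B=\Omega_g$, and since $\lambda_{\Omega_f}=f$ and $\rho_{\Omega_g}=g$ it gives
\[
  \Omega_f^\perp=\Omega_{M^*f},
  \qquad
  {}^\perp\Omega_g=\Omega_{M_*g}.
\]
The map $h\mapsto\Omega_h$ is injective on functions $\cat C\to\Rbar$, because $\varphi_{\Omega_h}=h$. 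Hence $\Omega_g=\Omega_f^\perp$ holds exactly when $g=M^*f$, and $\Omega_f={}^\perp\Omega_g$ holds exactly when $f=M_*g$. Taken together, these two equivalences say that $(f,g)\in\Nuc(M)$ if and only if $(\Omega_f,\Omega_g)$ is a paired type. In particular, the first assignment sends $\Nuc(M)$ into $\mathrm{PType}(M)$.

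Next I would check that $(A,B)\mapsto(\lambda_A,\rho_B)$ lands in $\Nuc(M)$ and is inverse to the first assignment. Let $(A,B)$ be a paired type. Then $A$ is a left type and $B$ is a right type, so Corollary~\ref{cor:types-profile-complete} gives $A=\Omega_{\lambda_A}$ and $B=\Omega_{\rho_B}$. The pair $(\Omega_{\lambda_A},\Omega_{\rho_B})$ is therefore paired, and the equivalence from the first step shows $(\lambda_A,\rho_B)\in\Nuc(M)$. For the composites:
\begin{itemize}
  \item starting from $(f,g)\in\Nuc(M)$, we have $(\lambda_{\Omega_f},\rho_{\Omega_g})=(f,g)$ because $\varphi_{\Omega_h}=h$;
  \item starting from $(A,B)\in\mathrm{PType}(M)$, we have $(\Omega_{\lambda_A},\Omega_{\rho_B})=(A,B)$, which is exactly Corollary~\ref{cor:types-profile-complete}.
\end{itemize}
So the two assignments are inverse bijections.

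The main obstacle is the second composite, since for a general subset $X$ one only has $X\subseteq\Omega_{\varphi_X}$. This step relies on types being profile-complete, which is the content of Corollary~\ref{cor:types-profile-complete}. Everything else uses only that $\Omega$ is injective on coordinates. I would also note that no endpoint subtleties arise here: they were already handled inside the proof of Proposition~\ref{prop:orthogonality-isbell-profiles}, in particular the $+\infty$ fibres, where the real-valuedness of $M$ is used.
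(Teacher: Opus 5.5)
Your proposal is correct and follows essentially the paper's route. It uses the same ingredients: Proposition~\ref{prop:orthogonality-isbell-profiles}, the identity $\varphi_{\Omega_h}=h$, and Corollary~\ref{cor:types-profile-complete} for $(\Omega_{\lambda_A},\Omega_{\rho_B})=(A,B)$. The only cosmetic difference is that you get both directions of the displayed equivalence at once from injectivity of $h\mapsto\Omega_h$ and reduce the paired-type direction to it, whereas the paper applies the proposition directly to $(A,B)$ and then takes profiles.
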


\begin{proof}
  Suppose first that \((f,g)\in\Nuc(M)\).  Then \(g=M^*f\) and \(f=M_*g\).
  Proposition~\ref{prop:orthogonality-isbell-profiles} gives
  \[
    \Omega_f^\perp=\Omega_{M^*f}=\Omega_g,
    \qquad
    {}^\perp\Omega_g=\Omega_{M_*g}=\Omega_f.
  \]
  Hence \((\Omega_f,\Omega_g)\) is a paired type.

  Conversely, let \((A,B)\in\mathrm{PType}(M)\).  Thus
  \(B=A^\perp\) and \(A={}^\perp B\).  By
  Proposition~\ref{prop:orthogonality-isbell-profiles},
  \[
    B=A^\perp=\Omega_{M^*\lambda_A},
    \qquad
    A={}^\perp B=\Omega_{M_*\rho_B}.
  \]
  Applying profiles and using Corollary~\ref{cor:types-profile-complete}, gives
  \[
    \rho_B=M^*\lambda_A,
    \qquad
    \lambda_A=M_*\rho_B.
  \]
  Therefore \((\lambda_A,\rho_B)\in\Nuc(M)\). Finally, \(\varphi_{\Omega_f}=f\) for every coordinate \(f\), and paired
  types are recovered from their profiles by
  Corollary~\ref{cor:types-profile-complete}.
  Thus the two displayed assignments are inverse to one another.
\end{proof}

This completes the binary comparison.  Sections~\ref{sec:execution-types}
through~\ref{sec:categorical-structure} keep the same realisability datum but
ask a different question: how the execution product interacts with these
closed objects.

\section{Extending the execution to types}\label{sec:execution-types}

By assumption, the weighted execution product of Section~\ref{sec:types} is associative on
\(\cat C\times\R\).  We now ask how much of this multiplicative structure
descends to the binary left and right types generated by weighted
orthogonality.  The answer is limited: taking a raw product and
then closing it produces left and right types, but the resulting closed product
need not be associative.

Write
\[
  (a,\alpha)(b,\beta)=(ab,\alpha+\beta-M(a,b))
\]
for the weighted execution product.  If \(X,Y\subseteq\cat C\times\R\), define
their raw product by
\[
  XY:=\{xy\mid x\in X,\ y\in Y\}.
\]
Since weighted execution is associative, raw products of subsets are
associative:
\[
  (XY)Z=X(YZ).
\]

Recall the notation
\[
  \operatorname{cl}_L(X):={}^\perp(X^\perp),
  \qquad
  \operatorname{cl}_R(X):=({}^\perp X)^\perp
\]
for the left and right closures of Section~\ref{sec:types}.  Thus \(X\) is a
left type precisely when \(\operatorname{cl}_L(X)=X\), and a right type
precisely when \(\operatorname{cl}_R(X)=X\).

\begin{definition}\label{def:closed-products}
  For \(X,Y\subseteq\cat C\times\R\), define the \emph{left closed product}
  and \emph{right closed product} by
  \[
    X\odot_LY:=\operatorname{cl}_L(XY),
    \qquad
    X\odot_RY:=\operatorname{cl}_R(XY).
  \]
\end{definition}

Thus \(X\odot_LY\) is always a left type and \(X\odot_RY\) is always a right
type, by the orthogonal-complement calculus of Section~\ref{sec:types}.  First multiply, then apply the
appropriate binary closure; the notation keeps track of which closure is used.

\paragraph{Profile form of closed products.}
We shall also use the profile coordinates of
Section~\ref{sec:correspondence} for the closed products.  For any weighted
subset \(X\subseteq\cat C\times\R\), Proposition~\ref{prop:orthogonality-isbell-profiles}
gives, after applying complements twice,
\[
  \operatorname{cl}_L(X)=\Omega_{M_*M^*\varphi_X},
  \qquad
  \operatorname{cl}_R(X)=\Omega_{M^*M_*\varphi_X}.
\]

Suppose \(A\) and \(B\) are left types.  By
Corollary~\ref{cor:types-profile-complete},
\[
  A=\Omega_{\lambda_A},
  \qquad
  B=\Omega_{\lambda_B}.
\]
Let
\[
  \pi_{A,B}:=\varphi_{AB}
\]
be the profile of their raw product.  Unwinding the weighted execution product,
an element of \(AB\) lying over \(c\in\cat C\) has the form
\[
  (a,\alpha)(b,\beta)
  =
  (c,\alpha+\beta-M(a,b)),
  \qquad ab=c,
\]
with \(\alpha\le\lambda_A(a)\) and \(\beta\le\lambda_B(b)\).  Hence
\[
  \pi_{A,B}(c)
  =
  \sup_{ab=c}
  \bigl(\lambda_A(a)+\lambda_B(b)-M(a,b)\bigr).
\]
The supremum is taken in \(\Rbar\), with \(\sup\varnothing=-\infty\).  In a
finite unital monoid this supremum is a maximum, since every \(c\) has at least
the decompositions \(ec=c=ce\).

Therefore
\[
  A\odot_LB
  =
  \Omega_{M_*M^*\pi_{A,B}},
  \qquad
  \lambda_{A\odot_LB}=M_*M^*\pi_{A,B}.
\]
For compactness, we sometimes write the same operation directly at the level of
left profiles: for left profiles \(\ell,\ell'\),
\[
  \pi_{\ell,\ell'}(c)
  :=
  \sup_{ab=c}\bigl(\ell(a)+\ell'(b)-M(a,b)\bigr),
  \qquad
  \ell\odot_L\ell':=M_*M^*\pi_{\ell,\ell'}.
\]

Similarly, for right profiles \(\rho,\rho'\), we write
\[
  \rho\odot_R\rho':=M^*M_*\pi_{\rho,\rho'},
\]
where \(\pi_{\rho,\rho'}\) is defined by the same raw-product formula.

\subsection{The one-sided associativity obstruction}

The trefoil identity gives the basic rule for moving a weighted product across
orthogonality.

\begin{lemma}\label{lem.experp}
  For \(x=(a,\alpha)\), \(y=(b,\beta)\), and \(z=(c,\gamma)\) in
  \(\cat C\times\R\),
  \[
    xy\perp z
    \quad\Longleftrightarrow\quad
    x\perp yz .
  \]
\end{lemma}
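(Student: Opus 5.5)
The plan is to unfold both sides into a single inequality in \(\R\) and compare them using the trefoil identity \eqref{trefoil}. Everything is real-valued here: the weights \(\alpha,\beta,\gamma\) are real and \(M\) is real-valued. So no endpoint residuals are involved, and ordinary arithmetic suffices.

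\textbf{Left-hand side.} By the definition of the weighted execution product,
\(xy=(ab,\alpha+\beta-M(a,b))\). By \eqref{eq:weighted-orthogonality}, the condition \(xy\perp z\) therefore reads
\[
  \alpha+\beta-M(a,b)+\gamma\le M(ab,c).
\]
Adding \(M(a,b)\) to both sides, this is equivalent to
\[
  \alpha+\beta+\gamma\le M(ab,c)+M(a,b).
\]

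\textbf{Right-hand side.} Similarly, \(yz=(bc,\beta+\gamma-M(b,c))\). The condition \(x\perp yz\) reads
\[
  \alpha+\beta+\gamma-M(b,c)\le M(a,bc),
\]
which is equivalent to
\[
  \alpha+\beta+\gamma\le M(a,bc)+M(b,c).
\]

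\textbf{Comparison.} By the trefoil identity \eqref{trefoil}, the two right-hand sides agree; both equal \(p(abc)-p(a)-p(b)-p(c)\). Hence the two conditions coincide, which proves the equivalence.

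There is no real obstacle in this argument. The only point needing care is that moving \(M(a,b)\) or \(M(b,c)\) across the inequality is legitimate. It is, because these values are finite reals, so translation by them is an order-isomorphism of \(\R\).
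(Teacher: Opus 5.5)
Your proof is correct and follows essentially the same route as the paper: unfold both orthogonality conditions into inequalities against \(M(a,b)+M(ab,c)\) and \(M(a,bc)+M(b,c)\), then identify these via the trefoil identity \eqref{trefoil}. Your remark that all quantities are finite reals, so translation is an order-isomorphism, is a correct justification of the step the paper leaves implicit.
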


\begin{proof}
  The condition \(xy\perp z\) is
  \[
    \alpha+\beta-M(a,b)+\gamma\le M(ab,c),
  \]
  or equivalently
  \[
    \alpha+\beta+\gamma\le M(a,b)+M(ab,c).
  \]
  By the trefoil identity \eqref{trefoil}, the right-hand side is
  \(M(b,c)+M(a,bc)\).  This is exactly the condition
  \[
    \alpha+\beta+\gamma-M(b,c)\le M(a,bc),
  \]
  which says \(x\perp yz\).
\end{proof}

The next identities isolate the asymmetry that will obstruct associativity.

\begin{proposition}\label{prop:left-closed-product-basic}
  For all \(X,Y\subseteq\cat C\times\R\),
  \[
    \operatorname{cl}_L(X)\odot_LY=X\odot_LY
  \]
  and
  \[
    X\odot_LY\subseteq X\odot_L\operatorname{cl}_L(Y).
  \]
\end{proposition}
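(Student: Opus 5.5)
Both claims reduce to a comparison of right complements. Since $\operatorname{cl}_L(Z)={}^\perp(Z^\perp)$, it suffices to show
\[
  (\operatorname{cl}_L(X)\,Y)^\perp=(XY)^\perp
  \qquad\text{and}\qquad
  (X\,\operatorname{cl}_L(Y))^\perp\subseteq(XY)^\perp ,
\]
and then apply the antitone left complement of Proposition~\ref{prop:acontainedinaperpperp}(i).

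The second inclusion is immediate. Part (ii) of that proposition gives $Y\subseteq\operatorname{cl}_L(Y)$, so $XY\subseteq X\operatorname{cl}_L(Y)$ as raw products. Taking right complements reverses the inclusion, and taking left complements reverses it back. This yields $X\odot_LY\subseteq X\odot_L\operatorname{cl}_L(Y)$.

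For the first identity, the inclusion $(\operatorname{cl}_L(X)Y)^\perp\subseteq(XY)^\perp$ follows in the same way from $X\subseteq\operatorname{cl}_L(X)$.

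For the reverse inclusion, take $z\in(XY)^\perp$. We must show $x'y\perp z$ for every $x'\in\operatorname{cl}_L(X)$ and $y\in Y$. Here I will use Lemma~\ref{lem.experp} to move $y$ across the orthogonality, which is the key step:
\[
  x'y\perp z\iff x'\perp yz .
\]
So it suffices to show $yz\in X^\perp$, since then $x'\in{}^\perp(X^\perp)$ gives $x'\perp yz$. But for any $x\in X$ we have $xy\in XY$, hence $xy\perp z$, and Lemma~\ref{lem.experp} again gives $x\perp yz$. Thus $yz\in X^\perp$, as needed.

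Applying ${}^\perp(-)$ to the resulting equality of right complements gives $\operatorname{cl}_L(X)\odot_LY=X\odot_LY$.

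I expect no serious obstacle; the only point needing care is the orientation. The left closure only lets us absorb a closure on the left factor, because the lemma transports the right factor $y$ onto the tester $z$, which lives on the right. The analogous argument for $Y$ fails, since the tester sits on the wrong side. This explains why the second statement is only an inclusion and not an equality.
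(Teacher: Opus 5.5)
Your proposal is correct and follows essentially the same route as the paper: compare right complements, use Lemma~\ref{lem.experp} to move $y$ onto the tester $z$, and obtain the second inclusion from $Y\subseteq\operatorname{cl}_L(Y)$ by applying the two antitone complements. The only cosmetic difference is that you prove the first equality as two separate inclusions and unpack $x'\in{}^\perp(X^\perp)$ directly, whereas the paper runs a single chain of equivalences using $(\operatorname{cl}_L(X))^\perp=X^\perp$.
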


\begin{proof}
  We prove the first identity by comparing right complements.  For
  \(z\in\cat C\times\R\),
  \begin{align*}
    z\in(\operatorname{cl}_L(X)Y)^\perp
     & \Longleftrightarrow
    ay\perp z
    \text{ for all }a\in\operatorname{cl}_L(X),\ y\in Y \\
     & \Longleftrightarrow
    a\perp yz
    \text{ for all }a\in\operatorname{cl}_L(X),\ y\in Y \\
     & \Longleftrightarrow
    yz\in(\operatorname{cl}_L(X))^\perp
    \text{ for all }y\in Y.
  \end{align*}
  By triple-complement cancellation,
  \((\operatorname{cl}_L(X))^\perp=X^\perp\).  Therefore the last condition is
  equivalent to
  \[
    x\perp yz
    \text{ for all }x\in X,\ y\in Y,
  \]
  which, by Lemma~\ref{lem.experp}, is equivalent to
  \(xy\perp z\) for all \(x\in X\) and \(y\in Y\).  Hence
  \[
    (\operatorname{cl}_L(X)Y)^\perp=(XY)^\perp.
  \]
  Applying the left complement gives
  \[
    \operatorname{cl}_L(\operatorname{cl}_L(X)Y)
    =\operatorname{cl}_L(XY),
  \]
  which is the first identity. For the second assertion, \(Y\subseteq\operatorname{cl}_L(Y)\), hence
  \(XY\subseteq X\operatorname{cl}_L(Y)\).  Taking right complements reverses
  the inclusion, and taking left complements reverses it again:
  \[
    \operatorname{cl}_L(XY)
    \subseteq
    \operatorname{cl}_L\bigl(X\operatorname{cl}_L(Y)\bigr).
  \]
  This is the desired containment.
\end{proof}

\begin{lemma}\label{lem.associntermed}
  For all \(X,Y,Z\subseteq\cat C\times\R\),
  \[
    \bigl((X\odot_LY)Z\bigr)^\perp=(X(YZ))^\perp.
  \]
\end{lemma}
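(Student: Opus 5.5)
We compare right complements elementwise, exactly as in the proof of Proposition~\ref{prop:left-closed-product-basic}, using Lemma~\ref{lem.experp} to move factors across orthogonality and triple-complement cancellation to remove the closure. Since $X\odot_LY=\operatorname{cl}_L(XY)$, fix $w\in\cat C\times\R$. Then $w\in\bigl((X\odot_LY)Z\bigr)^\perp$ holds if and only if $uz\perp w$ for all $u\in\operatorname{cl}_L(XY)$ and all $z\in Z$.

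The proof proceeds in four steps.

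\begin{enumerate}
\item Apply Lemma~\ref{lem.experp} to the triple $(u,z,w)$. The condition $uz\perp w$ becomes $u\perp zw$. So the condition is that $zw\in(\operatorname{cl}_L(XY))^\perp$ for every $z\in Z$.

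\item Use Proposition~\ref{prop:threecomplements}(iii) in the form $({}^\perp(S^\perp))^\perp=S^\perp$ with $S=XY$. This replaces $(\operatorname{cl}_L(XY))^\perp$ by $(XY)^\perp$. The condition now reads $xy\perp zw$ for all $x\in X$, $y\in Y$, $z\in Z$.

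\item Move factors back across orthogonality. Lemma~\ref{lem.experp} applied to $(x,y,zw)$ gives $x\perp y(zw)$. Associativity of the weighted execution product, which is the trefoil identity, rewrites $y(zw)=(yz)w$. Lemma~\ref{lem.experp} applied to $(x,yz,w)$ then gives $x(yz)\perp w$.

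\item Quantify over all $x\in X$, $y\in Y$, $z\in Z$. The resulting condition is exactly $w\in(X(YZ))^\perp$, and all the equivalences are reversible.
\end{enumerate}

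There is no serious obstacle. The only point needing care is step 2. The cancellation must be applied on the correct side: the product $XY$ sits in the \emph{left} slot of orthogonality against $zw$, so the relevant complement is the right complement of a left closure. This is precisely the case covered by the second identity of Proposition~\ref{prop:threecomplements}(iii).

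Each equivalence is pointwise and quantifier-preserving, so no profile or $\Rbar$ endpoint issues arise; everything takes place in $\cat C\times\R$ with real weights. One could also cite $(X\odot_LY)Z$ versus $(XY)Z=X(YZ)$ directly. However, closure does not commute with multiplication on the right, which is exactly the asymmetry the lemma isolates, so the argument should be run elementwise as above rather than through Proposition~\ref{prop:left-closed-product-basic}.
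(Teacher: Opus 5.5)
Your proof is correct and is essentially the paper's argument: you move $z$ across with Lemma~\ref{lem.experp}, drop the closure using $(\operatorname{cl}_L(XY))^\perp=(XY)^\perp$, and move back using associativity of weighted execution. Your closing remark is mistaken, though, because the closure here sits in the \emph{left} factor: applying the first identity of Proposition~\ref{prop:left-closed-product-basic} to $XY$ and $Z$, then taking right complements and using triple-complement cancellation, gives the lemma at once (the one-sided failure concerns closure of the right factor).
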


\begin{proof}
  Let \(t\in\cat C\times\R\).  Then \(t\in((X\odot_LY)Z)^\perp\) if and only
  if \(dz\perp t\) for every \(d\in X\odot_LY\) and every \(z\in Z\).  By
  Lemma~\ref{lem.experp}, this is equivalent to requiring
  \(d\perp zt\) for every such \(d\) and \(z\).  Since
  \[
    (X\odot_LY)^\perp
    =(\operatorname{cl}_L(XY))^\perp
    =(XY)^\perp
  \]
  by triple-complement cancellation, the preceding condition is equivalent to
  \(zt\in(XY)^\perp\) for every \(z\in Z\).  Unwinding this, we get
  \(xy\perp zt\) for all \(x\in X\), \(y\in Y\), and \(z\in Z\).  Applying
  Lemma~\ref{lem.experp} once more, this is equivalent to
  \(x(yz)\perp t\) for all \(x,y,z\), namely \(t\in(X(YZ))^\perp\).
\end{proof}

\begin{proposition}\label{prop:left-associativity-obstruction}
  For all \(X,Y,Z\subseteq\cat C\times\R\),
  \[
    (X\odot_LY)\odot_LZ=X\odot_L(YZ)
  \]
  and therefore
  \[
    (X\odot_LY)\odot_LZ\subseteq X\odot_L(Y\odot_LZ).
  \]
  In particular, for a fixed triple \(X,Y,Z\), associativity of the left closed
  product is equivalent to the equality
  \[
    X\odot_L(YZ)=X\odot_L(Y\odot_LZ).
  \]
\end{proposition}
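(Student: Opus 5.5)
The plan is to reduce everything to Lemma~\ref{lem.associntermed} together with the monotonicity facts in Proposition~\ref{prop:left-closed-product-basic}. Since \(\operatorname{cl}_L(W)={}^\perp(W^\perp)\) depends on \(W\) only through its right complement \(W^\perp\), two raw sets with the same right complement have the same left closure.

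For the equality, Lemma~\ref{lem.associntermed} gives
\[
  \bigl((X\odot_LY)Z\bigr)^\perp=(X(YZ))^\perp .
\]
Applying the left complement to both sides yields
\[
  \operatorname{cl}_L\bigl((X\odot_LY)Z\bigr)=\operatorname{cl}_L\bigl(X(YZ)\bigr).
\]
By Definition~\ref{def:closed-products}, the left-hand side is \((X\odot_LY)\odot_LZ\) and the right-hand side is \(X\odot_L(YZ)\).

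For the inclusion, apply the second assertion of Proposition~\ref{prop:left-closed-product-basic} with \(YZ\) in place of \(Y\). This gives
\[
  X\odot_L(YZ)\subseteq X\odot_L\operatorname{cl}_L(YZ)=X\odot_L(Y\odot_LZ).
\]
Combining this with the equality just proved gives \((X\odot_LY)\odot_LZ\subseteq X\odot_L(Y\odot_LZ)\).

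The final claim is then a direct rewriting. Associativity for the triple means \((X\odot_LY)\odot_LZ=X\odot_L(Y\odot_LZ)\). By the first identity, the left-hand side may be replaced by \(X\odot_L(YZ)\), so associativity holds exactly when \(X\odot_L(YZ)=X\odot_L(Y\odot_LZ)\).

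There is no real obstacle: all of the work is contained in Lemma~\ref{lem.associntermed}, which in turn rests on the trefoil identity. The only point needing care is to remember that \(\operatorname{cl}_L\) is determined by the right complement, so that Lemma~\ref{lem.associntermed} can be used directly.
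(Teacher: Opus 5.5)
Your proof is correct and follows the paper's argument. The equality comes from applying the left complement to Lemma~\ref{lem.associntermed}, and the inclusion from monotonicity of left closure in the second factor. The only cosmetic difference is that you cite the second assertion of Proposition~\ref{prop:left-closed-product-basic} with \(YZ\) in place of \(Y\), whereas the paper reruns that monotonicity argument inline from \(YZ\subseteq Y\odot_LZ\).
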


\begin{proof}
  By Lemma~\ref{lem.associntermed},
  \[
    (X\odot_LY)\odot_LZ
    =
    {}^\perp\bigl(((X\odot_LY)Z)^\perp\bigr)
    =
    {}^\perp\bigl((X(YZ))^\perp\bigr)
    =
    X\odot_L(YZ).
  \]
  Since \(YZ\subseteq Y\odot_LZ\), we have
  \(X(YZ)\subseteq X(Y\odot_LZ)\).  Applying right and then left complements
  gives
  \[
    X\odot_L(YZ)\subseteq X\odot_L(Y\odot_LZ).
  \]
  The final assertion follows from the first displayed equality.
\end{proof}
The right closed product has a parallel obstruction with
the variance reversed.

\subsection{A finite non-associativity example}

\begin{example}\label{ex:left-product-nonassociativity}
  We now give a finite example in which the containment of
  Proposition~\ref{prop:left-associativity-obstruction} is strict. We'll return to this example for the later unit-profile and two-out-of-three
  sharpness calculations.  Let
  \[
    \cat C=\{e,a,c,d\}
  \]
  be the monoid with identity \(e\) and multiplication table
  \[
    \begin{array}{c|cccc}
      \cdot & e & a & c & d \\
      \hline
      e     & e & a & c & d \\
      a     & a & e & c & d \\
      c     & c & d & c & d \\
      d     & d & c & c & d
    \end{array}
  \]
  and let
  \[
    p(e)=0,\qquad p(a)=-3,\qquad p(c)=1,\qquad p(d)=4.
  \]
  The associated measurement \(M(u,v)=p(uv)-p(u)-p(v)\), with rows and columns
  ordered as \((e,a,c,d)\), is
  \[
    \begin{array}{c|rrrr}
      M & e & a & c  & d    \\
      \hline
      e & 0 & 0 & 0  & 0    \\
      a & 0 & 6 & 3  & 3    \\
      c & 0 & 6 & -1 & -1   \\
      d & 0 & 0 & -4 & -4 .
    \end{array}
  \]
  We work in the profile coordinates of Section~\ref{sec:correspondence}.  Thus
  a vector records a function \(\cat C\to\Rbar\) in the order
  \((e,a,c,d)\), and the vector \(\ell\) represents the weighted subset
  \(\Omega_\ell\).  Let
  \[
    S_c=(-\infty,-\infty,0,-\infty),
    \qquad
    S_a=(-\infty,0,-\infty,-\infty).
  \]
  The corresponding subsets are generated, fibrewise downward in the weight
  coordinate, by \((c,0)\) and \((a,0)\).  Their left closures have profiles
  \[
    \ell_c:=M_*M^*S_c=(-6,0,0,-6),
    \qquad
    \ell_a:=M_*M^*S_a=(-6,0,-4,-7).
  \]
  Equivalently, these are the left types
  \[
    A_c:=\Omega_{\ell_c},
    \qquad
    A_a:=\Omega_{\ell_a}.
  \]

  In the present finite unital monoid, the raw product profile used above is
  \[
    \pi_{\ell,\ell'}(c)
    =
    \max_{ab=c}\bigl(\ell(a)+\ell'(b)-M(a,b)\bigr),
  \]
  and the left closed product of profiles is
  \[
    \ell\odot_L\ell':=M_*M^*\pi_{\ell,\ell'}.
  \]
  Using the displayed matrix for \(M\), a direct calculation gives
  \[
    \ell_c\odot_L\ell_a=(-6,-3,-3,-6),
    \qquad
    \ell_a\odot_L\ell_c=(-6,-3,-3,-6).
  \]
  Continuing the same calculation,
  \[
    (\ell_c\odot_L\ell_a)\odot_L\ell_c=(-8,-2,-2,-8),
  \]
  whereas
  \[
    \ell_c\odot_L(\ell_a\odot_L\ell_c)=(-5,-2,-2,-5).
  \]
  These are the profiles of the corresponding left closed products of
  \(A_c\) and \(A_a\).  Since inclusion between subsets of the form
  \(\Omega_\ell\) is equivalent to pointwise comparison of profiles, the first
  profile is strictly smaller than the second.  Hence
  \[
    (A_c\odot_LA_a)\odot_LA_c
    \subsetneq
    A_c\odot_L(A_a\odot_LA_c),
  \]
  so the left closed product is not associative, even for left types obtained by
  closing principal weighted elements.

\end{example}

\section{Middle types}\label{sec:middle}

Section~\ref{sec:execution-types} shows that the one-sided closed products of
binary left and right types need not be associative, even though the execution on $\cat{C} \times \R$ they extend is associative. What is missing from a binary left or right type is information
about an element placed between a left context and a right context.  This
section introduces the corresponding ternary orthogonality relation and its
complements.

\subsection{The ternary measurement}

For \(x,b,z\in\cat C\), define
\[
  M_3(x,b,z):=p(xbz)-p(x)-p(b)-p(z).
\]
This is the measurement of a middle element \(b\) placed between a left context
\(x\) and a right context \(z\).  It is related to the binary measurement by
the two splittings
\begin{equation}\label{eq:M3-splittings}
  M_3(x,b,z)
  =
  M(x,bz)+M(b,z)
  =
  M(xb,z)+M(x,b).
\end{equation}
Both identities follow immediately by expanding the definitions and using
associativity of execution. For weighted elements
\[
  x=(x_0,\xi),\qquad b=(b_0,\beta),\qquad z=(z_0,\zeta)
  \quad\text{in }\cat C\times\R,
\]
define \emph{middle/peripheral orthogonality} by
\begin{equation}\label{eq:middle-peripheral-orthogonality}
  b\Perp(x,z)
  \quad\Longleftrightarrow\quad
  \xi+\beta+\zeta\le M_3(x_0,b_0,z_0).
\end{equation}
Thus a single weighted element is tested in the middle position against an
ordered pair of peripheral weighted elements.

\begin{lemma}\label{lem:middle-orthogonality-binary-forms}
  For \(x,b,z\in\cat C\times\R\),
  \[
    b\Perp(x,z)
    \quad\Longleftrightarrow\quad
    xb\perp z
    \quad\Longleftrightarrow\quad
    x\perp bz .
  \]
\end{lemma}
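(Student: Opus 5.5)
The plan is to unwind each of the three conditions into a single real inequality among \(\xi,\beta,\zeta\) and the values of \(M\), and then to compare these inequalities using the two splittings \eqref{eq:M3-splittings} of \(M_3\).

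Write \(x=(x_0,\xi)\), \(b=(b_0,\beta)\), \(z=(z_0,\zeta)\). By the definition of the weighted execution product, \(xb=(x_0b_0,\xi+\beta-M(x_0,b_0))\). The condition \(xb\perp z\) therefore reads
\[
  \xi+\beta-M(x_0,b_0)+\zeta\le M(x_0b_0,z_0).
\]
All the quantities here are real, since \(p\) is real-valued and so \(M\) takes values in \(\R\). We may therefore move \(M(x_0,b_0)\) across the inequality without any endpoint subtleties. This gives
\[
  \xi+\beta+\zeta\le M(x_0b_0,z_0)+M(x_0,b_0).
\]
By the second splitting in \eqref{eq:M3-splittings}, the right-hand side equals \(M_3(x_0,b_0,z_0)\). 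This is exactly the condition \(b\Perp(x,z)\) of \eqref{eq:middle-peripheral-orthogonality}.

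Symmetrically, \(bz=(b_0z_0,\beta+\zeta-M(b_0,z_0))\). So \(x\perp bz\) is equivalent to
\[
  \xi+\beta+\zeta\le M(x_0,b_0z_0)+M(b_0,z_0),
\]
and this equals \(M_3(x_0,b_0,z_0)\) by the first splitting in \eqref{eq:M3-splittings}.

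The last equivalence, between \(xb\perp z\) and \(x\perp bz\), is also the special case \(y=b\) of Lemma~\ref{lem.experp}, so it could simply be cited. Routing both one-sided forms through \(M_3\) makes the middle condition the common hub.

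There is no real obstacle. The only point needing care is that the rearrangements are ordinary real arithmetic rather than residuals in \(\Rbar\), which holds because \(M\) is real-valued. The splittings themselves follow by expanding \(p\) together with associativity, \(p((x_0b_0)z_0)=p(x_0(b_0z_0))\).
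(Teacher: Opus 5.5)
Your proposal is correct and follows essentially the same route as the paper. Both proofs unwind \(xb\perp z\) and \(x\perp bz\) into the inequality \(\xi+\beta+\zeta\le\cdots\), then identify each right-hand side with \(M_3(x_0,b_0,z_0)\) via the corresponding splitting in \eqref{eq:M3-splittings}.
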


\begin{proof}
  Write \(x=(x_0,\xi)\), \(b=(b_0,\beta)\), and \(z=(z_0,\zeta)\).  The
  condition \(xb\perp z\) is
  \[
    \xi+\beta-M(x_0,b_0)+\zeta\le M(x_0b_0,z_0),
  \]
  or equivalently
  \[
    \xi+\beta+\zeta\le M(x_0,b_0)+M(x_0b_0,z_0).
  \]
  By \eqref{eq:M3-splittings}, the right-hand side is
  \(M_3(x_0,b_0,z_0)\).  This is exactly
  \(b\Perp(x,z)\).  The equivalence with \(x\perp bz\) is the same calculation
  using the other splitting in \eqref{eq:M3-splittings}.
\end{proof}

\begin{lemma}\label{lem.Perpmove}
  For all \(a,a',x,z\in\cat C\times\R\),
  \[
    aa'\Perp(x,z)
    \quad\Longleftrightarrow\quad
    a\Perp(x,a'z)
    \quad\Longleftrightarrow\quad
    a'\Perp(xa,z).
  \]
\end{lemma}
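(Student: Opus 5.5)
The plan is to reduce all three conditions to binary orthogonality using Lemma~\ref{lem:middle-orthogonality-binary-forms}, and then to move factors across $\perp$ using Lemma~\ref{lem.experp} and associativity of weighted execution. No new computation with $M_3$ should be needed: everything routes through binary orthogonality and the trefoil identity, which has already been packaged into Lemma~\ref{lem.experp}.

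\emph{First equivalence.} Apply Lemma~\ref{lem:middle-orthogonality-binary-forms} to the middle element $aa'$. This gives
\[
aa'\Perp(x,z)\iff x\perp (aa')z .
\]
By associativity of weighted execution, $(aa')z=a(a'z)$. Hence the condition reads $x\perp a(a'z)$. Applying Lemma~\ref{lem:middle-orthogonality-binary-forms} again, now with middle element $a$ and peripheral pair $(x,a'z)$, this is exactly $a\Perp(x,a'z)$.

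\emph{Second equivalence.} Start again from $aa'\Perp(x,z)$. Lemma~\ref{lem:middle-orthogonality-binary-forms} gives
\[
aa'\Perp(x,z)\iff x(aa')\perp z .
\]
Associativity rewrites $x(aa')$ as $(xa)a'$. The condition $(xa)a'\perp z$ is, by Lemma~\ref{lem:middle-orthogonality-binary-forms} applied with middle element $a'$ and peripheral pair $(xa,z)$, the same as $a'\Perp(xa,z)$.

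There is essentially no obstacle here. The only point requiring care is that associativity of weighted execution on $\cat C\times\R$ holds on the nose, including the weight coordinates. This was established earlier via the trefoil identity, so the rebracketings above are genuine equalities of weighted elements rather than inequalities. All weights are real, so no endpoint conventions in $\Rbar$ intervene.
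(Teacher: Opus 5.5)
Your proof is correct and takes essentially the paper's route: you reduce every condition to binary orthogonality via Lemma~\ref{lem:middle-orthogonality-binary-forms} and rebracket using associativity of weighted execution. The only difference is cosmetic and sits in the first equivalence. You use the form $x\perp bz$ with $(aa')z=a(a'z)$, whereas the paper passes $x(aa')\perp z$ through Lemma~\ref{lem.experp} to $xa\perp a'z$; both rest on the same trefoil identity.
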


\begin{proof}
  Using Lemma~\ref{lem:middle-orthogonality-binary-forms} and the execution
  shift Lemma~\ref{lem.experp},
  \[
    aa'\Perp(x,z)
    \Longleftrightarrow
    x(aa')\perp z
    \Longleftrightarrow
    xa\perp a'z
    \Longleftrightarrow
    a\Perp(x,a'z).
  \]
  The same starting condition is also equivalent to
  \[
    x(aa')\perp z
    \Longleftrightarrow
    (xa)a'\perp z
    \Longleftrightarrow
    a'\Perp(xa,z),
  \]
  where the middle equivalence uses associativity of weighted execution.
\end{proof}

\subsection{Middle and peripheral types}

We can associate two complement operations to the ternary measurement. For \(A\subseteq\cat C\times\R\), define its peripheral complement by
\[
  A^\Perp
  :=
  \{(x,z)\in(\cat C\times\R)^2
  \mid a\Perp(x,z)\text{ for all }a\in A\}.
\]
For \(P\subseteq(\cat C\times\R)^2\), define its middle complement by
\[
  {}^\Perp P
  :=
  \{a\in\cat C\times\R
  \mid a\Perp(x,z)\text{ for all }(x,z)\in P\}.
\]

\begin{definition}\label{def:middle-peripheral-closures}
  For \(A\subseteq\cat C\times\R\) and
  \(P\subseteq(\cat C\times\R)^2\), set
  \[
    \operatorname{cl}_{\mathrm{mid}}(A):={}^\Perp(A^\Perp),
    \qquad
    \operatorname{cl}_{\mathrm{per}}(P):=({}^\Perp P)^\Perp.
  \]
\end{definition}

\begin{definition}\label{def:middle-peripheral-types}
  A \emph{middle type} is a subset \(A\subseteq\cat C\times\R\) such that
  \[
    \operatorname{cl}_{\mathrm{mid}}(A)=A.
  \]
  A \emph{peripheral type} is a subset
  \(P\subseteq(\cat C\times\R)^2\) such that
  \[
    \operatorname{cl}_{\mathrm{per}}(P)=P.
  \]
\end{definition}

Equivalently, middle types are the subsets of the form \({}^\Perp P\), and
peripheral types are the subsets of the form \(A^\Perp\).  The formal complement
calculus is the same as in Section~\ref{sec:types}; we describe it here in the middle-peripheral situation for ease of reference.

\begin{proposition}[Middle complement calculus]\label{prop:middle-polarity-calculus}
  Let \(A,A'\subseteq\cat C\times\R\) and
  \(P,P'\subseteq(\cat C\times\R)^2\).
  \begin{enumerate}[label=\textup{(\roman*)}]
    \item If \(A\subseteq A'\), then \((A')^\Perp\subseteq A^\Perp\).  If
          \(P\subseteq P'\), then \({}^\Perp P'\subseteq{}^\Perp P\).

    \item One has
          \[
            A\subseteq{}^\Perp(A^\Perp),
            \qquad
            P\subseteq({}^\Perp P)^\Perp.
          \]

    \item Alternating three complements cancels to one complement:
          \[
            {}^\Perp\bigl(({}^\Perp P)^\Perp\bigr)={}^\Perp P,
            \qquad
            \bigl({}^\Perp(A^\Perp)\bigr)^\Perp=A^\Perp.
          \]

    \item Closure does not change the opposite complement:
          \[
            (\operatorname{cl}_{\mathrm{mid}}A)^\Perp=A^\Perp,
            \qquad
            {}^\Perp(\operatorname{cl}_{\mathrm{per}}P)={}^\Perp P.
          \]
  \end{enumerate}
\end{proposition}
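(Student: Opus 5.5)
This is the Galois-connection argument of Proposition~\ref{prop:acontainedinaperpperp}, repeated verbatim for the relation \(\Perp\) between \(\cat C\times\R\) and \((\cat C\times\R)^2\). The only input is that \(A^\Perp\) and \({}^\Perp P\) are defined by universally quantified conditions \(a\Perp(x,z)\). The internal form of \(\Perp\), via \(M_3\) or execution, is never used.

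For (i), suppose \(A\subseteq A'\) and \((x,z)\in(A')^\Perp\). Then \(a\Perp(x,z)\) holds for every \(a\in A'\), in particular for every \(a\in A\), so \((x,z)\in A^\Perp\). The statement for \({}^\Perp\) is identical.

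For (ii), take \(a\in A\). Every \((x,z)\in A^\Perp\) satisfies \(a\Perp(x,z)\) by definition, so \(a\in{}^\Perp(A^\Perp)\). Symmetrically, each \((x,z)\in P\) is orthogonal to every element of \({}^\Perp P\), so \(P\subseteq({}^\Perp P)^\Perp\).

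For (iii), apply (ii) to the set \({}^\Perp P\) to get
\[
{}^\Perp P\subseteq{}^\Perp\bigl(({}^\Perp P)^\Perp\bigr).
\]
For the reverse inclusion, start from \(P\subseteq({}^\Perp P)^\Perp\), which is (ii), and apply the antitonicity of (i). The second identity is proved the same way, applying (ii) to \(A^\Perp\) and antitonicity to \(A\subseteq{}^\Perp(A^\Perp)\).

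Part (iv) is a restatement of (iii). By definition,
\[
(\operatorname{cl}_{\mathrm{mid}}A)^\Perp=\bigl({}^\Perp(A^\Perp)\bigr)^\Perp,
\]
and (iii) says this equals \(A^\Perp\). Likewise \({}^\Perp(\operatorname{cl}_{\mathrm{per}}P)={}^\Perp\bigl(({}^\Perp P)^\Perp\bigr)={}^\Perp P\).

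There is no real obstacle. The only care needed is to keep the two sides straight, since one side consists of subsets of \(\cat C\times\R\) and the other of subsets of pairs, and \(\Perp\) is a relation between them rather than on a single set. The argument treats it as an arbitrary relation, so the asymmetry plays no role.
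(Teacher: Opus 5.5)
Your proposal is correct and follows the paper's own proof essentially step for step. Antitonicity and the double-complement inclusions come straight from the universally quantified definitions, triple-complement cancellation combines (ii) with antitonicity, and (iv) is (iii) rewritten through the definitions of \(\operatorname{cl}_{\mathrm{mid}}\) and \(\operatorname{cl}_{\mathrm{per}}\). This matches the paper's remark that only the formal structure of an antitone complement pair is used, never the internal form of the relation via \(M_3\).
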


\begin{proof}
  These are the standard identities for antitone orthogonal-complement
  operations.  Antitonicity follows directly from the definitions: enlarging the
  set being tested against imposes more inequalities.  The double-complement
  containments follow because every element is orthogonal to all elements in its
  own complement.  Applying antitonicity to these containments gives the reverse
  inclusions needed for triple-complement cancellation.  The final identities are
  exactly those
  triple-complement identities applied to the closures in
  Definition~\ref{def:middle-peripheral-closures}.
\end{proof}

\subsection{Middle profiles}

The ternary measurement may also be regarded as an \(\Rbar\)-profunctor
\[
  \Mtc\colon \cat C\nrightarrow\cat C\times\cat C,
  \qquad
  \Mtc\bigl(b,(x,z)\bigr)=M_3(x,b,z).
\]
We now record the profile form of the middle--peripheral types.

For a middle subset \(A\subseteq\cat C\times\R\), write
\[
  \mu_A:=\varphi_A\colon\cat C\to\Rbar
\]
for its ordinary fibrewise profile.  For a peripheral subset
\(P\subseteq(\cat C\times\R)^2\), define its boundary profile by
\[
  \kappa_P(x,z)
  :=
  \sup\{\xi+\zeta\in\R
  \mid ((x,\xi),(z,\zeta))\in P\}.
\]
Conversely, for a function \(h\colon\cat C\times\cat C\to\Rbar\), set
\[
  \Omega^\partial_h
  :=
  \{((x,\xi),(z,\zeta))\in(\cat C\times\R)^2
  \mid \xi+\zeta\le h(x,z)\}.
\]
Thus \(\Omega^\partial_h\) is the peripheral subset recovered from the
two-variable boundary profile \(h\).

The Isbell conjugates for \(\Mtc\) are
\[
  (\Mtc^*\mu)(x,z)
  =
  \inf_{b\in\cat C}\bigl(M_3(x,b,z)-\mu(b)\bigr),
\]
and
\[
  (\Mtc_*\kappa)(b)
  =
  \inf_{x,z\in\cat C}\bigl(M_3(x,b,z)-\kappa(x,z)\bigr).
\]

\begin{proposition}[Middle orthogonality in profile coordinates]
  \label{prop:middle-orthogonality-profile}
  For \(A\subseteq\cat C\times\R\) and
  \(P\subseteq(\cat C\times\R)^2\),
  \[
    A^\Perp=\Omega^\partial_{\Mtc^*\mu_A},
    \qquad
    {}^\Perp P=\Omega_{\Mtc_*\kappa_P}.
  \]
\end{proposition}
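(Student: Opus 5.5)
The plan is to mirror the proof of Proposition~\ref{prop:orthogonality-isbell-profiles}. That argument translates a universally quantified family of weighted inequalities into a single inequality between profiles, using residuation in \(\Rbar\) and the fact that translation by a real number preserves suprema. The only new feature is that peripheral elements carry two weights, which enter only through their sum. This is exactly why the boundary profile \(\kappa_P\) is defined as a supremum of \(\xi+\zeta\).

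For the first identity, fix \(((x_0,\xi),(z_0,\zeta))\in(\cat C\times\R)^2\) and put \(s=\xi+\zeta\in\R\). By definition of \(A^\Perp\), membership means \(\alpha+s\le M_3(x_0,a,z_0)\) for all \((a,\alpha)\in A\). Fixing \(a\) and taking the supremum over its fibre, this is equivalent to \(\mu_A(a)+s\le M_3(x_0,a,z_0)\), since \(-+s\) preserves suprema in \(\Rbar\). The endpoints behave as in the binary case: an empty fibre gives \(-\infty\) and no constraint, while an unbounded fibre gives \(+\infty\), which no real \(s\) satisfies because \(M_3\) is real-valued. Residuation turns this into \(s\le M_3(x_0,a,z_0)-\mu_A(a)\), and taking the infimum over \(a\) gives \(s\le(\Mtc^*\mu_A)(x_0,z_0)\). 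That inequality is precisely membership in \(\Omega^\partial_{\Mtc^*\mu_A}\).

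For the second identity, fix \((b,\beta)\). Membership in \({}^\Perp P\) means \(\xi+\zeta+\beta\le M_3(x_0,b,z_0)\) for all \(((x_0,\xi),(z_0,\zeta))\in P\). Group the elements of \(P\) according to their underlying pair \((x_0,z_0)\). Over a fixed pair, the condition says that \(\beta\) plus the supremum of the achieved sums \(\xi+\zeta\) is at most \(M_3(x_0,b,z_0)\); that supremum is \(\kappa_P(x_0,z_0)\). This uses sup-preservation of translation by \(\beta\), with the same endpoint conventions, where an empty fibre gives \(\kappa_P=-\infty\). Residuating gives \(\beta\le M_3(x_0,b,z_0)-\kappa_P(x_0,z_0)\), and taking the infimum over all \((x_0,z_0)\) gives \(\beta\le(\Mtc_*\kappa_P)(b)\). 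Hence \({}^\Perp P=\Omega_{\Mtc_*\kappa_P}\).

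There is no real obstacle. The one point needing care is the reduction, in the second identity, from the two-weight fibre of \(P\) over \((x_0,z_0)\) to the single number \(\kappa_P(x_0,z_0)\). This works because the orthogonality condition depends on \(\xi\) and \(\zeta\) only through \(\xi+\zeta\), so taking the supremum of sums loses nothing. The \(\pm\infty\) cases also need to be checked exactly as in the proof of Proposition~\ref{prop:orthogonality-isbell-profiles}.
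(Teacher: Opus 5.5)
Your proposal is correct and takes the same route as the paper's proof. For each fixed middle (resp.\ boundary) object you pass to the fibrewise supremum $\mu_A$ (resp.\ $\kappa_P$), using that translation by a real number preserves suprema and that the peripheral weights enter only through $\xi+\zeta$; you then residuate in $\Rbar$ and take the infimum defining $\Mtc^*\mu_A$ (resp.\ $\Mtc_*\kappa_P$), and your explicit treatment of empty and unbounded fibres is slightly more careful than the paper, which leaves those endpoint cases implicit here by analogy with Proposition~\ref{prop:orthogonality-isbell-profiles}.
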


\begin{proof}
  Let \(((x,\xi),(z,\zeta))\in(\cat C\times\R)^2\).  Then
  \(((x,\xi),(z,\zeta))\in A^\Perp\) if and only if
  \[
    \xi+\beta+\zeta\le M_3(x,b,z)
  \]
  for every \((b,\beta)\in A\).  For each fixed \(b\), this is equivalent to
  \[
    \xi+\zeta\le M_3(x,b,z)-\mu_A(b).
  \]
  Requiring this for every \(b\) gives
  \[
    \xi+\zeta
    \le
    \inf_{b\in\cat C}\bigl(M_3(x,b,z)-\mu_A(b)\bigr)
    =
    (\Mtc^*\mu_A)(x,z),
  \]
  which is precisely membership in \(\Omega^\partial_{\Mtc^*\mu_A}\).

  The second identity is the same calculation in the other direction.  A
  weighted element \((b,\beta)\) lies in \({}^\Perp P\) if and only if
  \[
    \beta+\xi+\zeta\le M_3(x,b,z)
  \]
  for every \(((x,\xi),(z,\zeta))\in P\).  For each pair \((x,z)\), this is
  equivalent to
  \[
    \beta\le M_3(x,b,z)-\kappa_P(x,z).
  \]
  Requiring this for all \(x,z\) gives
  \[
    \beta
    \le
    \inf_{x,z\in\cat C}
    \bigl(M_3(x,b,z)-\kappa_P(x,z)\bigr)
    =
    (\Mtc_*\kappa_P)(b),
  \]
  as required.
\end{proof}

\begin{corollary}
  \label{cor:middle-types-profile}
  Every middle type and every peripheral type is recovered from its profile.
  More precisely, if \(A\) is a middle type and \(P\) is a peripheral type, then
  \[
    A=\Omega_{\mu_A},
    \qquad
    P=\Omega^\partial_{\kappa_P}.
  \]
  Moreover, for a coordinate \(\mu\colon\cat C\to\Rbar\),
  \[
    \Omega_\mu\text{ is a middle type}
    \quad\Longleftrightarrow\quad
    \Mtc_*\Mtc^*\mu=\mu,
  \]
  and for a boundary coordinate
  \(\kappa\colon\cat C\times\cat C\to\Rbar\),
  \[
    \Omega^\partial_\kappa\text{ is a peripheral type}
    \quad\Longleftrightarrow\quad
    \Mtc^*\Mtc_*\kappa=\kappa.
  \]
  Thus a middle type \(A\), together with its peripheral complement, is
  represented by the nuclear point
  \[
    (\mu_A,\Mtc^*\mu_A)\in\Nuc(\Mtc).
  \]
\end{corollary}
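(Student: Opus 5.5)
The argument runs parallel to the binary case, Corollaries~\ref{cor:types-profile-complete} and \ref{cor:single-coordinate-fixedpoints}, with Proposition~\ref{prop:middle-orthogonality-profile} playing the role of Proposition~\ref{prop:orthogonality-isbell-profiles}.

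\emph{Recovery from profiles.} A middle type \(A\) satisfies \(A={}^\Perp(A^\Perp)\), so it has the form \({}^\Perp P\) with \(P=A^\Perp\). By Proposition~\ref{prop:middle-orthogonality-profile}, \(A=\Omega_{\Mtc_*\kappa_P}\). Since \(\varphi_{\Omega_h}=h\), taking profiles gives \(\mu_A=\Mtc_*\kappa_P\), and hence \(A=\Omega_{\mu_A}\).

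For a peripheral type \(P=A^\Perp\), the same proposition gives \(P=\Omega^\partial_{\Mtc^*\mu_A}\). We then need the two-variable analogue of \(\varphi_{\Omega_f}=f\): for any \(h\colon\cat C\times\cat C\to\Rbar\), we have \(\kappa_{\Omega^\partial_h}=h\). This holds because \(\kappa_{\Omega^\partial_h}(x,z)\) is the supremum of the real numbers \(\xi+\zeta\) with \(\xi+\zeta\le h(x,z)\), and every real value of \(\xi+\zeta\) is attained by some choice of \(\xi,\zeta\). The endpoint cases work out as follows:
\begin{itemize}
\item if \(h(x,z)=-\infty\), there are no admissible pairs, and the supremum of the empty set is \(-\infty\);
\item if \(h(x,z)=+\infty\), all real sums are admissible, so the supremum is \(+\infty\).
\end{itemize}
With this identity, \(\kappa_P=\Mtc^*\mu_A\) and therefore \(P=\Omega^\partial_{\kappa_P}\).

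\emph{Fixed-point criteria.} Applying Proposition~\ref{prop:middle-orthogonality-profile} twice and using \(\mu_{\Omega_\mu}=\mu\) gives
\[
\operatorname{cl}_{\mathrm{mid}}(\Omega_\mu)={}^\Perp\Omega^\partial_{\Mtc^*\mu}=\Omega_{\Mtc_*\kappa_{\Omega^\partial_{\Mtc^*\mu}}}=\Omega_{\Mtc_*\Mtc^*\mu}.
\]
So \(\Omega_\mu\) is a middle type if and only if \(\Omega_\mu=\Omega_{\Mtc_*\Mtc^*\mu}\). Because \(h\mapsto\Omega_h\) is injective (it has \(\varphi\) as a left inverse), this is equivalent to \(\mu=\Mtc_*\Mtc^*\mu\). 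The peripheral criterion follows the same way: \(\operatorname{cl}_{\mathrm{per}}(\Omega^\partial_\kappa)=\Omega^\partial_{\Mtc^*\Mtc_*\kappa}\), and \(\Omega^\partial\) is injective because \(\kappa\) is a left inverse.

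\emph{The nuclear point.} For a middle type \(A\), the criterion above gives \(\mu_A=\Mtc_*\Mtc^*\mu_A\). Setting \(g=\Mtc^*\mu_A\), we get \(\Mtc_*g=\mu_A\), while \(g=\Mtc^*\mu_A\) holds by definition. By Definition~\ref{def:nucleus} (or Proposition~\ref{prop:nucleus-fixedpoints}), \((\mu_A,g)\in\Nuc(\Mtc)\). Moreover \(\Omega^\partial_g=A^\Perp\) by Proposition~\ref{prop:middle-orthogonality-profile}, so this pair encodes both \(A\) and its peripheral complement.

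The only point that needs care is the identity \(\kappa_{\Omega^\partial_h}=h\) at the endpoints \(\pm\infty\), since the boundary profile sums two weights rather than reading off one. Everything else is a direct transcription of the binary argument.
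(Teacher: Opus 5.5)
Your proposal is correct and matches the argument the paper intends. The paper gives no separate proof, since the corollary follows from Proposition~\ref{prop:middle-orthogonality-profile} exactly as Corollaries~\ref{cor:types-profile-complete} and~\ref{cor:single-coordinate-fixedpoints} and Theorem~\ref{thm:nucleus-types} follow in the binary case; your explicit check that \(\kappa_{\Omega^\partial_h}=h\), including the endpoint cases, supplies the one ingredient the paper leaves implicit.
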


\subsection{The middle product}

If \(A,B\subseteq\cat C\times\R\), let \(AB\) denote their raw product under
weighted execution:
\[
  AB:=\{ab\mid a\in A,\ b\in B\}.
\]
The \emph{middle closed product} is
\begin{equation}\label{eq:middle-product}
  A\odot B:=\operatorname{cl}_{\mathrm{mid}}(AB)
  ={}^\Perp((AB)^\Perp).
\end{equation}
Thus \(A\odot B\) is a middle type for arbitrary subsets \(A\) and \(B\), and
in particular for middle types.

\begin{lemma}\label{lem:middle-product-complement}
  For all \(A,B,C\subseteq\cat C\times\R\),
  \[
    ((A\odot B)C)^\Perp=(A(BC))^\Perp
  \]
  and
  \[
    (A(B\odot C))^\Perp=(A(BC))^\Perp.
  \]
\end{lemma}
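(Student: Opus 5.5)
We want to show that in the middle/peripheral orthogonality, closing a factor by $\operatorname{cl}_{\mathrm{mid}}$ does not change the peripheral complement of a raw product containing it. The approach mirrors Lemma~\ref{lem.associntermed}: unwind membership in the peripheral complement elementwise, move factors into the peripheral contexts using Lemma~\ref{lem.Perpmove}, and replace $(A\odot B)^\Perp$ by $(AB)^\Perp$ using Proposition~\ref{prop:middle-polarity-calculus}(iv).

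For the first identity, fix $(x,z)\in(\cat C\times\R)^2$. Then $(x,z)\in((A\odot B)C)^\Perp$ iff $dc\Perp(x,z)$ for all $d\in A\odot B$ and $c\in C$. By Lemma~\ref{lem.Perpmove}, $dc\Perp(x,z)$ is equivalent to $d\Perp(x,cz)$. So the condition says that $(x,cz)\in(A\odot B)^\Perp$ for every $c\in C$. Since $A\odot B=\operatorname{cl}_{\mathrm{mid}}(AB)$, Proposition~\ref{prop:middle-polarity-calculus}(iv) gives $(A\odot B)^\Perp=(AB)^\Perp$. The condition therefore becomes $(ab)\Perp(x,cz)$ for all $a\in A$, $b\in B$, $c\in C$. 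Applying Lemma~\ref{lem.Perpmove} again, this is $(ab)c\Perp(x,z)$. By associativity of weighted execution (trefoil), $(ab)c=a(bc)$, so the condition is exactly $(x,z)\in(A(BC))^\Perp$.

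The second identity is the mirror image, and here the point of using middle rather than one-sided types becomes visible. Now $(x,z)\in(A(B\odot C))^\Perp$ iff $ad\Perp(x,z)$ for all $a\in A$ and $d\in B\odot C$. By the other equivalence in Lemma~\ref{lem.Perpmove}, this is $d\Perp(xa,z)$, i.e.\ $(xa,z)\in(B\odot C)^\Perp=(BC)^\Perp$ for all $a\in A$. This holds iff $bc\Perp(xa,z)$ for all $b,c$, iff $a(bc)\Perp(x,z)$ for all $a,b,c$. Associativity again identifies this with $(x,z)\in(A(BC))^\Perp$.

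The main point to check carefully is that Lemma~\ref{lem.Perpmove} is applied with the correct orientation in each case: absorb the right factor into the right context in the first identity, and the left factor into the left context in the second. The crucial observation is that the peripheral pair retains both boundaries, so either factor can be moved out. This is precisely what fails in the one-sided binary case, where only the right factor could be absorbed (Lemma~\ref{lem.associntermed} versus the inclusion in Proposition~\ref{prop:left-associativity-obstruction}). Everything else is the formal complement calculus.
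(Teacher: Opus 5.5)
Your proof is correct and is essentially the paper's argument. You unwind the peripheral complement elementwise, shift the outer factor into the appropriate boundary via Lemma~\ref{lem.Perpmove}, replace the peripheral complement of $A\odot B$ (resp.\ $B\odot C$) by that of $AB$ (resp.\ $BC$) using Proposition~\ref{prop:middle-polarity-calculus}(iv), and shift back; the only cosmetic difference is that you state the associativity step $(ab)c=a(bc)$ explicitly in the first identity, where the paper leaves it implicit in its second appeal to Lemma~\ref{lem.Perpmove}.
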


\begin{proof}
  Let \((x,z)\in(\cat C\times\R)^2\).  Then
  \((x,z)\in((A\odot B)C)^\Perp\) if and only if
  \(dc\Perp(x,z)\) for every \(d\in A\odot B\) and \(c\in C\).  By
  Lemma~\ref{lem.Perpmove}, this is equivalent to
  \(d\Perp(x,cz)\) for every such \(d\) and \(c\).  Since
  \[
    (A\odot B)^\Perp
    =
    (\operatorname{cl}_{\mathrm{mid}}(AB))^\Perp
    =
    (AB)^\Perp,
  \]
  the preceding condition is equivalent to requiring
  \((x,cz)\in(AB)^\Perp\) for every \(c\in C\).  Unwinding again, this says
  \(ab\Perp(x,cz)\) for all \(a\in A\), \(b\in B\), and \(c\in C\).  By
  Lemma~\ref{lem.Perpmove}, this is equivalent to
  \(a(bc)\Perp(x,z)\) for all \(a,b,c\), which is precisely
  \((x,z)\in(A(BC))^\Perp\).

  The second identity is parallel.  The condition
  \((x,z)\in(A(B\odot C))^\Perp\) is equivalent to
  \(ae\Perp(x,z)\) for every \(a\in A\) and \(e\in B\odot C\), hence to
  \(e\Perp(xa,z)\) for every such \(a,e\).  Since
  \((B\odot C)^\Perp=(BC)^\Perp\), this is equivalent to
  \(bc\Perp(xa,z)\) for all \(b\in B,c\in C\).  Applying
  Lemma~\ref{lem.Perpmove} once more gives \(a(bc)\Perp(x,z)\), as required.
\end{proof}

\begin{proposition}[Associativity of the middle product]\label{thm.midassoc}
  For all \(A,B,C\subseteq\cat C\times\R\),
  \[
    (A\odot B)\odot C=A\odot(B\odot C).
  \]
  In particular, the middle closed product restricts to an associative product
  on middle types.
\end{proposition}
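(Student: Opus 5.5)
The plan is to compare both sides through their peripheral complements. Since middle closure satisfies \(\operatorname{cl}_{\mathrm{mid}}(X)={}^\Perp(X^\Perp)\), two subsets of \(\cat C\times\R\) have the same middle closure as soon as they have the same peripheral complement. So it suffices to show
\[
  \bigl((A\odot B)C\bigr)^\Perp=\bigl(A(B\odot C)\bigr)^\Perp .
\]
Applying \({}^\Perp(-)\) to this equality then gives \((A\odot B)\odot C=A\odot(B\odot C)\) directly from the definition \eqref{eq:middle-product}.

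The two sides are handled by Lemma~\ref{lem:middle-product-complement}. Its first identity gives \(((A\odot B)C)^\Perp=(A(BC))^\Perp\), and its second gives \((A(B\odot C))^\Perp=(A(BC))^\Perp\). Chaining these two equalities produces the displayed identity, with the common middle term \((A(BC))^\Perp\). Since the raw weighted products are associative, we also have \((AB)C=A(BC)\), so both closed products equal \(\operatorname{cl}_{\mathrm{mid}}(ABC)\).

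For the restriction statement, \(A\odot B\) is a middle type for arbitrary \(A,B\), because it is of the form \({}^\Perp P\) and therefore closed by Proposition~\ref{prop:middle-polarity-calculus}(iii). Hence \(\odot\) is a well-defined binary operation on middle types, and associativity is inherited from the general identity.

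There is essentially no obstacle left, because the real content was already established in Lemma~\ref{lem:middle-product-complement}. That lemma is where the two-sided context matters: Lemma~\ref{lem.Perpmove} allows an inner closure to be absorbed on \emph{either} side, which is exactly what fails for \(\odot_L\) in Proposition~\ref{prop:left-associativity-obstruction}. The only point to state carefully is that equality of peripheral complements implies equality of middle closures, which follows immediately from the definition of \(\operatorname{cl}_{\mathrm{mid}}\).
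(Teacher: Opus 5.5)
Your proposal is correct and follows the paper's proof: both sides are reduced through Lemma~\ref{lem:middle-product-complement} to the common peripheral complement \((A(BC))^\Perp\), and then \({}^\Perp(-)\) is applied. Your extra remarks are also sound: both products equal \(\operatorname{cl}_{\mathrm{mid}}(ABC)\), and \(A\odot B\) is middle-closed by triple-complement cancellation.
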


\begin{proof}
  By Lemma~\ref{lem:middle-product-complement},
  \[
    ((A\odot B)C)^\Perp=(A(BC))^\Perp=(A(B\odot C))^\Perp.
  \]
  Applying the middle complement \({}^\Perp(-)\) to the two outer terms gives
  \[
    (A\odot B)\odot C
    =
    {}^\Perp(((A\odot B)C)^\Perp)
    =
    {}^\Perp((A(B\odot C))^\Perp)
    =
    A\odot(B\odot C).
  \]
\end{proof}

\begin{remark}
  The same raw product also records edge actions with the binary types of
  Section~\ref{sec:types}.  If \(L\) is a left type and \(A\) is a middle type,
  the left edge action is the Section~\ref{sec:execution-types} closed product
  \[
    L\odot_L A:=\operatorname{cl}_L(LA).
  \]
  If \(A\) is a middle type and \(R\) is a right type, the right edge action is
  \[
    A\odot_R R:=\operatorname{cl}_R(AR).
  \]
  These actions record how a middle type meets one binary boundary type.  Unit or
  residual laws for them require the additional closure and variance hypotheses
  stated in the next section.
\end{remark}

Section~\ref{sec:categorical-structure} adds the extra structure needed to
discuss units and residual operations for the middle product.

\section{Units and residual operations}\label{sec:categorical-structure}

Having constructed the associative middle product from the ternary
orthogonality relation, we now record the additional algebraic structure that
can be used later in the coordinate calculations: a unit when the execution
product has one, and the two oriented residual operations for the middle
product.

\subsection{Units}

The original realisability datum did not include a unit.  For the unit
statements in this section, assume that the execution product on \(\cat C\) has
a two-sided unit \(\epsilon\).  Set
\[
  \mathbf e:=(\epsilon,-p(\epsilon))\in\cat C\times\R
\]
and define the middle type generated by this weighted element by
\[
  \mathbf 1
  :=
  \operatorname{cl}_{\mathrm{mid}}(\{\mathbf e\})
  =
  {}^\Perp(\{\mathbf e\}^\Perp).
\]

\begin{lemma}\label{lem:weighted-unit}
  The weighted element \(\mathbf e\) is a two-sided unit for weighted
  execution on \(\cat C\times\R\).
\end{lemma}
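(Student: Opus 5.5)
The claim is a direct computation with the weighted execution product \((a,\alpha)(b,\beta)=(ab,\alpha+\beta-M(a,b))\). The plan is to compute \(\mathbf e\,x\) and \(x\,\mathbf e\) for an arbitrary weighted element \(x=(a,\alpha)\in\cat C\times\R\) and check that both equal \(x\). No closure or orthogonality arguments are needed; only the definition \eqref{eq:defofM} of \(M\) and the unit law \(\epsilon a=a=a\epsilon\) in \(\cat C\) enter.

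First compute the measurement against the unit. Since \(\epsilon a=a\),
\[
  M(\epsilon,a)=p(\epsilon a)-p(\epsilon)-p(a)=p(a)-p(\epsilon)-p(a)=-p(\epsilon),
\]
and symmetrically, since \(a\epsilon=a\), we get \(M(a,\epsilon)=-p(\epsilon)\). All quantities are real, so this is ordinary arithmetic and no endpoint conventions of \(\Rbar\) arise.

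Next substitute into the weighted product. On the left,
\[
  \mathbf e\,x=(\epsilon a,\,-p(\epsilon)+\alpha-M(\epsilon,a))=(a,\,-p(\epsilon)+\alpha+p(\epsilon))=(a,\alpha)=x.
\]
The right-hand computation \(x\,\mathbf e=(a\epsilon,\,\alpha-p(\epsilon)-M(a,\epsilon))=x\) is identical.

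There is no real obstacle. The only point needing care is that the weight \(-p(\epsilon)\) attached to \(\epsilon\) is chosen exactly to cancel the defect \(M(\epsilon,a)=M(a,\epsilon)=-p(\epsilon)\); this is why \(\mathbf e\) must carry that weight rather than weight \(0\). One might note in passing that this cancellation is consistent with the trefoil identity, but the trefoil identity is not needed for the proof.
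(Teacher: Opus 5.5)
Your proposal is correct and follows the paper's proof: compute \(M(\epsilon,a)=M(a,\epsilon)=-p(\epsilon)\) from the unit law, then substitute into the weighted product to get \(\mathbf e\,x=x=x\,\mathbf e\). Your remark that the weight \(-p(\epsilon)\) is chosen exactly to cancel this defect is a useful explanation the paper leaves implicit.
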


\begin{proof}
  Let \(x=(a,\alpha)\).  Since \(a\epsilon=\epsilon a=a\),
  \begin{align*}
    M(a,\epsilon) & =p(a)-p(a)-p(\epsilon)=-p(\epsilon),
    \\
    M(\epsilon,a) & =p(a)-p(\epsilon)-p(a)=-p(\epsilon).
  \end{align*}
  Hence
  \[
    x\mathbf e
    =
    (a,\alpha)(\epsilon,-p(\epsilon))
    =
    (a,\alpha),
  \]
  and similarly \(\mathbf e x=x\).
\end{proof}

\begin{proposition}\label{prop:middle-unit}
  Under the unit hypothesis above, \(\mathbf 1\) is a two-sided unit for the
  middle product.  For every middle type \(A\),
  \[
    \mathbf 1\odot A=A=A\odot\mathbf 1.
  \]
\end{proposition}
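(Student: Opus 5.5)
The plan is to reduce to the raw product and then close, using the same pattern as the associativity proof. By Lemma~\ref{lem:middle-product-complement},
\[
  ((\mathbf 1\odot A)\,\cdot)^\Perp \text{ behaves like } (\{\mathbf e\}A)^\Perp .
\]
Concretely, I first want the identity $(\mathbf 1 A)^\Perp=(\{\mathbf e\}A)^\Perp$. This is the middle-product analogue of Proposition~\ref{prop:left-closed-product-basic}: replacing a factor by its middle closure does not change the peripheral complement of the raw product. The argument unwinds $(x,z)\in(\mathbf 1A)^\Perp$ as $u a\Perp(x,z)$ for all $u\in\mathbf 1$ and $a\in A$. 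By Lemma~\ref{lem.Perpmove}, this is equivalent to $u\Perp(x,az)$ for all such $u,a$, i.e.\ $(x,az)\in\mathbf 1^\Perp$. Proposition~\ref{prop:middle-polarity-calculus}(iv) gives $\mathbf 1^\Perp=\{\mathbf e\}^\Perp$, so the condition becomes $\mathbf e\Perp(x,az)$. Reversing Lemma~\ref{lem.Perpmove} turns this into $\mathbf e a\Perp(x,z)$.

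Second, Lemma~\ref{lem:weighted-unit} gives $\mathbf e a=a$, so $\{\mathbf e\}A=A$ as raw products. Hence $(\mathbf 1A)^\Perp=A^\Perp$, and applying ${}^\Perp(-)$ yields
\[
  \mathbf 1\odot A=\operatorname{cl}_{\mathrm{mid}}(\mathbf 1A)={}^\Perp(A^\Perp)=\operatorname{cl}_{\mathrm{mid}}(A)=A,
\]
where the last equality uses that $A$ is a middle type. The right unit law is symmetric. I would unwind $(x,z)\in(A\mathbf 1)^\Perp$ as $au\Perp(x,z)$, move $a$ to the left context via Lemma~\ref{lem.Perpmove} to get $u\Perp(xa,z)$, replace $\mathbf 1^\Perp$ by $\{\mathbf e\}^\Perp$, and move back to obtain $a\mathbf e\Perp(x,z)$, i.e.\ $a\Perp(x,z)$.

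The only delicate point is making sure both directions of Lemma~\ref{lem.Perpmove} are used with the correct context (left versus right) in each of the two unit laws. Everything else is formal complement calculus, so I expect no real obstacle. In particular, the hypothesis that $A$ is a middle type enters only in the final step, while the identity $\mathbf 1\odot X=\operatorname{cl}_{\mathrm{mid}}(X)$ holds for arbitrary $X$.
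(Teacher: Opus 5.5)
Your proof is correct, but it is organised differently from the paper's. The paper never computes $(\mathbf 1A)^\Perp$ directly. It argues in three steps:
\begin{itemize}
\item $\{\mathbf e\}\odot A=\operatorname{cl}_{\mathrm{mid}}(A)=A$, because $\{\mathbf e\}A=A$;
\item $\mathbf 1=\{\mathbf e\}\odot\{\mathbf e\}$, because $\{\mathbf e\}\{\mathbf e\}=\{\mathbf e\}$;
\item associativity for arbitrary subsets (Proposition~\ref{thm.midassoc}) then gives $\mathbf 1\odot A=\{\mathbf e\}\odot(\{\mathbf e\}\odot A)=A$, and symmetrically on the right.
\end{itemize}
You instead inline the shifting argument of Lemma~\ref{lem.Perpmove}, together with $\mathbf 1^\Perp=\{\mathbf e\}^\Perp$.

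The two mechanisms coincide underneath. Associativity is itself proved from exactly this shifting. Moreover, the two identities of Lemma~\ref{lem:middle-product-complement} specialise to yours:
\begin{itemize}
\item the first identity, with its first two arguments equal to $\{\mathbf e\}$, is literally $(\mathbf 1A)^\Perp=A^\Perp$;
\item the second identity, with its last two arguments equal to $\{\mathbf e\}$, is literally $(A\mathbf 1)^\Perp=A^\Perp$.
\end{itemize}
This is the precise form of your loosely stated opening display (``behaves like'').

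The paper's route is a two-line reduction that exploits idempotence of the weighted unit. Yours is self-contained at the level of orthogonality. It also exposes a more general fact, $\operatorname{cl}_{\mathrm{mid}}(X)\odot Y=X\odot Y=X\odot\operatorname{cl}_{\mathrm{mid}}(Y)$ for arbitrary $X,Y$. This is the two-sided middle counterpart of Proposition~\ref{prop:left-closed-product-basic}. In the binary case, closure is invisible only in the first factor of $\odot_L$ and in the last factor of $\odot_R$. That contrast is the structural reason $\mathbf 1$ is a two-sided unit, whereas $\mathbf 1_L$ and $\mathbf 1_R$ are only one-sided units.
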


\begin{proof}
  Since \(\{\mathbf e\}A=A=A\{\mathbf e\}\) by
  Lemma~\ref{lem:weighted-unit}, we have
  \[
    \{\mathbf e\}\odot A
    =
    \operatorname{cl}_{\mathrm{mid}}(A)
    =
    A
  \]
  and similarly \(A\odot\{\mathbf e\}=A\).  Since
  \(\{\mathbf e\}\{\mathbf e\}=\{\mathbf e\}\), we have
  \(\mathbf 1=\{\mathbf e\}\odot\{\mathbf e\}\).  Proposition~\ref{thm.midassoc}
  then gives
  \[
    \mathbf 1\odot A
    =
    (\{\mathbf e\}\odot\{\mathbf e\})\odot A
    =
    \{\mathbf e\}\odot(\{\mathbf e\}\odot A)
    =
    A,
  \]
  and the right unit identity is analogous.
\end{proof}

\subsection{Left and right units}

The same weighted unit also generates one-sided units for the binary closed
products of Section~\ref{sec:execution-types}.  Define
\[
  \mathbf 1_L:=\operatorname{cl}_L(\{\mathbf e\}),
  \qquad
  \mathbf 1_R:=\operatorname{cl}_R(\{\mathbf e\}).
\]
Thus \(\mathbf 1_L\) is a left type and \(\mathbf 1_R\) is a right type.
Their profiles make visible the difference between testing the weighted unit
against one boundary and testing it against both boundaries.

\begin{proposition}[Profiles of the unit types]
  \label{prop:left-right-unit-profiles}
  Let
  \[
    \lambda_{\mathbf 1_L},
    \qquad
    \rho_{\mathbf 1_R},
    \qquad
    \mu_{\mathbf 1}
  \]
  denote the left, right, and middle profiles of \(\mathbf 1_L\),
  \(\mathbf 1_R\), and \(\mathbf 1\), respectively.  As \(\Rbar\)-valued
  functions,
  \begin{align*}
    \lambda_{\mathbf 1_L}(b)
     & =
    \inf_{z\in\cat C}M(b,z),
    \\
    \rho_{\mathbf 1_R}(b)
     & =
    \inf_{x\in\cat C}M(x,b),
    \\
    \mu_{\mathbf 1}(b)
     & =
    \inf_{x,z\in\cat C}\bigl(M_3(x,b,z)-M(x,z)\bigr).
  \end{align*}
  Equivalently, the term in the third infimum is
  \[
    M_3(x,b,z)-M(x,z)=p(xbz)-p(xz)-p(b).
  \]
\end{proposition}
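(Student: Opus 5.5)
The plan is to compute each profile through the profile form of the complements, Proposition~\ref{prop:orthogonality-isbell-profiles} in the binary case and Proposition~\ref{prop:middle-orthogonality-profile} in the middle case, applied to the singleton $\{\mathbf e\}$. The profile of $\{\mathbf e\}$ is the indicator-type function $\delta$ with $\delta(\epsilon)=-p(\epsilon)$ and $\delta(b)=-\infty$ for $b\neq\epsilon$. Since $\mathbf 1_L$, $\mathbf 1_R$, $\mathbf 1$ are types, Corollary~\ref{cor:types-profile-complete} and Corollary~\ref{cor:middle-types-profile} identify their profiles with the closure of $\delta$. Thus
\[
\lambda_{\mathbf 1_L}=M_*M^*\delta,\qquad \rho_{\mathbf 1_R}=M^*M_*\delta,\qquad \mu_{\mathbf 1}=\Mtc_*\Mtc^*\delta .
\]

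The first step is to evaluate the inner conjugate. For $M^*\delta$ the infimum over $a$ has only one finite contribution, namely $a=\epsilon$. The terms with $\delta(a)=-\infty$ contribute $M(a,b)-(-\infty)=+\infty$, by the residual convention \eqref{eq:minus}. Hence
\[
(M^*\delta)(z)=M(\epsilon,z)+p(\epsilon).
\]
By Lemma~\ref{lem:weighted-unit} we have $M(\epsilon,z)=-p(\epsilon)$, so $M^*\delta\equiv 0$. In the same way $M_*\delta\equiv 0$. For the ternary case,
\[
(\Mtc^*\delta)(x,z)=M_3(x,\epsilon,z)+p(\epsilon)=p(xz)-p(x)-p(z)=M(x,z).
\]

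The second step applies the outer conjugate. We get $(M_*0)(b)=\inf_z M(b,z)$ and $(M^*0)(b)=\inf_x M(x,b)$, because subtracting the finite value $0$ is ordinary subtraction. For the middle profile,
\[
(\Mtc_*M)(b)=\inf_{x,z}\bigl(M_3(x,b,z)-M(x,z)\bigr).
\]
Here $M(x,z)$ is real, so the residual is ordinary subtraction. Expanding the definitions gives the stated closed form $p(xbz)-p(xz)-p(b)$.

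The only delicate point is the endpoint bookkeeping in the first step. We must check that $r-(-\infty)=+\infty$ for real $r$, so that the empty fibres impose no constraint, and that no $+\infty-(+\infty)$ issues arise. The latter holds because the measurements are real-valued. Everything else is direct substitution.
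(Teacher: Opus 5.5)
Your proof is correct and is essentially the paper's two-step computation: the complement of $\{\mathbf e\}$ has constant profile $0$ (its peripheral complement has boundary profile $M(x,z)$), and the outer complement then gives the stated infima. You carry it out in Isbell coordinates ($M_*M^*\delta$, etc., via Propositions~\ref{prop:orthogonality-isbell-profiles} and~\ref{prop:middle-orthogonality-profile} together with $\varphi_{\Omega_f}=f$), whereas the paper works directly with the weighted subsets, where empty fibres impose no constraint and so no $r-(-\infty)=+\infty$ bookkeeping is needed.
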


\begin{proof}
  Since \(M(\epsilon,z)=-p(\epsilon)\), a weighted element \((z,\zeta)\) lies
  in \(\{\mathbf e\}^{\perp}\) if and only if
  \[
    -p(\epsilon)+\zeta\le -p(\epsilon),
  \]
  equivalently \(\zeta\le 0\).  Hence \(\{\mathbf e\}^{\perp}\) is the right
  type with constant profile \(0\).  Therefore \((b,\beta)\in\mathbf 1_L\) if
  and only if
  \[
    \beta+\zeta\le M(b,z)
    \quad\text{for all }z\in\cat C\text{ and all }\zeta\le 0.
  \]
  Since \(\zeta=0\) is allowed, this is equivalent to
  \[
    \beta\le\inf_{z\in\cat C}M(b,z),
  \]
  which gives the formula for \(\lambda_{\mathbf 1_L}\).

  The calculation for \(\mathbf 1_R\) is the same with the two variables
  interchanged.  Namely, \({}^{\perp}\{\mathbf e\}\) has constant profile \(0\),
  and taking its right complement gives
  \[
    \rho_{\mathbf 1_R}(b)=\inf_{x\in\cat C}M(x,b).
  \]

  For the middle unit, first observe that \(\{\mathbf e\}^{\Perp}\) is exactly
  the binary orthogonality relation on the two peripheral variables.  Indeed,
  \(M_3(x,\epsilon,z)=M(x,z)-p(\epsilon)\), so
  \[
    (\epsilon,-p(\epsilon))\Perp((x,\xi),(z,\zeta))
    \quad\Longleftrightarrow\quad
    \xi+\zeta\le M(x,z).
  \]
  Hence \((b,\beta)\in\mathbf 1={}^\Perp(\{\mathbf e\}^{\Perp})\) if and only
  if
  \[
    \beta+\xi+\zeta\le M_3(x,b,z)
  \]
  for all \(x,z\in\cat C\) and all \(\xi,\zeta\in\R\) with
  \(\xi+\zeta\le M(x,z)\).  Since \(\xi+\zeta=M(x,z)\) is allowed, this is
  equivalent to
  \[
    \beta\le
    \inf_{x,z\in\cat C}\bigl(M_3(x,b,z)-M(x,z)\bigr),
  \]
  which gives the formula for \(\mu_{\mathbf 1}\).

  The final displayed identity is obtained by expanding \(M_3(x,b,z)\) and
  \(M(x,z)\).
\end{proof}

\begin{proposition}[Comparison of unit profiles]
  \label{prop:unit-profile-comparison}
  Under the unit hypothesis,
  \[
    \mu_{\mathbf 1}(b)\le \lambda_{\mathbf 1_L}(b),
    \qquad
    \mu_{\mathbf 1}(b)\le \rho_{\mathbf 1_R}(b)
  \]
  for every \(b\in\cat C\).  Equivalently,
  \[
    \mu_{\mathbf 1}\le
    \min(\lambda_{\mathbf 1_L},\rho_{\mathbf 1_R})
  \]
  pointwise.
\end{proposition}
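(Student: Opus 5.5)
The plan is to specialise the infimum defining \(\mu_{\mathbf 1}\) in Proposition~\ref{prop:left-right-unit-profiles} to the unit \(\epsilon\) in one peripheral slot. That restricts the infimum to a subfamily, which can only raise it, and should reproduce exactly the one-sided unit profiles.

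For the first inequality, we take \(x=\epsilon\). The summand becomes
\[
  M_3(\epsilon,b,z)-M(\epsilon,z)=p(bz)-p(z)-p(b)=M(b,z),
\]
using the expanded form \(p(xbz)-p(xz)-p(b)\) recorded in Proposition~\ref{prop:left-right-unit-profiles} together with \(\epsilon bz=bz\) and \(\epsilon z=z\). Since the infimum over all pairs \((x,z)\) is at most the infimum over pairs with \(x=\epsilon\), we get
\[
  \mu_{\mathbf 1}(b)\le\inf_{z\in\cat C}M(b,z)=\lambda_{\mathbf 1_L}(b).
\]

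Symmetrically, we set \(z=\epsilon\). The summand is then \(p(xb)-p(x)-p(b)=M(x,b)\), so
\[
  \mu_{\mathbf 1}(b)\le\inf_{x\in\cat C}M(x,b)=\rho_{\mathbf 1_R}(b).
\]
The pointwise minimum formulation follows at once from these two inequalities.

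There is no real obstacle here. The only point needing care is that all quantities are finite reals, since \(p\) is real-valued, so the residual subtraction in \(\Rbar\) coincides with ordinary subtraction and the expansion is legitimate. Taking infima in \(\Rbar\) is monotone under enlarging the index set, so the possible values \(-\infty\) cause no trouble.

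Alternatively, one could argue at the level of sets. Any pair in \(\{\mathbf e\}^\perp\), paired with \(\mathbf e\) on the left, gives an element of \(\{\mathbf e\}^\Perp\), which would yield \(\mathbf 1\subseteq\mathbf 1_L\). However, the profile computation above is more direct, and I would present that.
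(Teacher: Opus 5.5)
Your proof is correct and matches the paper's: you restrict the infimum defining \(\mu_{\mathbf 1}\) to \(x=\epsilon\) and to \(z=\epsilon\). In these cases the summand reduces to \(M(b,z)\) and \(M(x,b)\) respectively, so the one-sided unit profiles \(\lambda_{\mathbf 1_L}(b)\) and \(\rho_{\mathbf 1_R}(b)\) bound \(\mu_{\mathbf 1}(b)\) from above.
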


\begin{proof}
  By Proposition~\ref{prop:left-right-unit-profiles},
  \[
    \mu_{\mathbf 1}(b)
    =
    \inf_{x,z\in\cat C}\bigl(M_3(x,b,z)-M(x,z)\bigr).
  \]
  Restricting the infimum to \(x=\epsilon\) gives
  \[
    \mu_{\mathbf 1}(b)
    \le
    \inf_{z\in\cat C}\bigl(M_3(\epsilon,b,z)-M(\epsilon,z)\bigr)
    =
    \inf_{z\in\cat C}M(b,z)
    =
    \lambda_{\mathbf 1_L}(b).
  \]
  Restricting the infimum to \(z=\epsilon\) gives
  \[
    \mu_{\mathbf 1}(b)
    \le
    \inf_{x\in\cat C}\bigl(M_3(x,b,\epsilon)-M(x,\epsilon)\bigr)
    =
    \inf_{x\in\cat C}M(x,b)
    =
    \rho_{\mathbf 1_R}(b).
  \]
\end{proof}

\begin{proposition}[One-sided unit laws]\label{prop:left-right-unit-laws}
  Under the unit hypothesis, if \(A\) is a left type, then
  \[
    \mathbf 1_L\odot_L A=A.
  \]
  If \(B\) is a right type, then
  \[
    B\odot_R\mathbf 1_R=B.
  \]
\end{proposition}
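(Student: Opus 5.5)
The plan is to reduce both unit laws to Proposition~\ref{prop:left-closed-product-basic} and to the weighted unit of Lemma~\ref{lem:weighted-unit}, so that no profile computation is needed.

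For the left law, let \(A\) be a left type. By definition \(\mathbf 1_L=\operatorname{cl}_L(\{\mathbf e\})\). The first identity of Proposition~\ref{prop:left-closed-product-basic} says that closing the left factor does not change the left closed product:
\[
  \mathbf 1_L\odot_L A=\operatorname{cl}_L(\{\mathbf e\})\odot_L A=\{\mathbf e\}\odot_L A=\operatorname{cl}_L(\{\mathbf e\}A).
\]
By Lemma~\ref{lem:weighted-unit}, \(\{\mathbf e\}A=A\). Since \(A\) is a left type, Proposition~\ref{prop:threecomplements}(iv) gives \(\operatorname{cl}_L(A)=A\), which proves the left law.

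For the right law, I first need the mirror image of Proposition~\ref{prop:left-closed-product-basic}: for all \(X,Y\), \(X\odot_R\operatorname{cl}_R(Y)=X\odot_R Y\). The proof would compare left complements. For \(w\in\cat C\times\R\), the condition \(w\perp xb\) for all \(x\in X\) and \(b\in\operatorname{cl}_R(Y)\) is equivalent, by Lemma~\ref{lem.experp}, to \(wx\perp b\) for all such \(x,b\). This says \(wx\in{}^\perp(\operatorname{cl}_R(Y))\) for every \(x\in X\). By triple-complement cancellation, \({}^\perp(\operatorname{cl}_R(Y))={}^\perp Y\). Reversing the same steps gives \({}^\perp(X\operatorname{cl}_R(Y))={}^\perp(XY)\), and applying \((-)^\perp\) yields the claimed identity.

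With this in hand, let \(B\) be a right type. Then
\[
  B\odot_R\mathbf 1_R=B\odot_R\{\mathbf e\}=\operatorname{cl}_R(B\{\mathbf e\})=\operatorname{cl}_R(B)=B,
\]
using \(B\{\mathbf e\}=B\) from Lemma~\ref{lem:weighted-unit} and \(\operatorname{cl}_R(B)=B\) from Proposition~\ref{prop:threecomplements}(iv).

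The only step that goes beyond citing earlier results is this variance check on the right side: one must confirm that the closure is placed on the correct factor. For \(\odot_L\), closing the left factor is harmless but closing the right factor is not. Dually, for \(\odot_R\), closing the right factor is harmless. This explains why the statement pairs \(\mathbf 1_L\) on the left with \(\mathbf 1_R\) on the right, and does not claim the opposite-sided unit laws.
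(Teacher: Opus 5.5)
Your proposal is correct and follows the paper's proof essentially verbatim. The left law comes from the first identity of Proposition~\ref{prop:left-closed-product-basic} together with Lemma~\ref{lem:weighted-unit}; the right law comes from the mirror identity \(X\odot_R\operatorname{cl}_R(Y)=X\odot_R Y\), which you prove, as the paper does, by comparing left complements via Lemma~\ref{lem.experp} and triple-complement cancellation.
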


\begin{proof}
  The left identity follows from Proposition~\ref{prop:left-closed-product-basic}
  and Lemma~\ref{lem:weighted-unit}:
  \[
    \mathbf 1_L\odot_L A
    =
    \operatorname{cl}_L(\{\mathbf e\})\odot_L A
    =
    \{\mathbf e\}\odot_L A
    =
    \operatorname{cl}_L(A)
    =
    A.
  \]

  For the right identity we use the right-hand analogue
  \[
    X\odot_R\operatorname{cl}_R(Y)=X\odot_RY.
  \]
  Indeed, comparing left complements, \(q\in{}^\perp(X\operatorname{cl}_R(Y))\)
  if and only if \(q\perp xy\) for all \(x\in X\) and
  \(y\in\operatorname{cl}_R(Y)\).  By Lemma~\ref{lem.experp}, this is
  equivalent to \(qx\perp y\) for all such \(x\) and \(y\), or
  \(qx\in{}^\perp\operatorname{cl}_R(Y)={}^\perp Y\) for all \(x\in X\).  This
  is equivalent, again by Lemma~\ref{lem.experp}, to
  \(q\in{}^\perp(XY)\).  Taking right complements gives the displayed identity.
  Hence, for a right type \(B\),
  \[
    B\odot_R\mathbf 1_R
    =
    B\odot_R\operatorname{cl}_R(\{\mathbf e\})
    =
    B\odot_R\{\mathbf e\}
    =
    \operatorname{cl}_R(B)
    =
    B.
  \]
\end{proof}

\begin{remark}
  The preceding proposition is deliberately one-sided.  It gives a left unit for
  the left closed product and a right unit for the right closed product.  The
  identities \(A\odot_L\mathbf 1_L=A\) and \(\mathbf 1_R\odot_RB=B\) would
  require control of closure in the second variable of \(\odot_L\), respectively
  the first variable of \(\odot_R\).  Section~\ref{sec:execution-types} isolates
  this one-sided closure issue.
\end{remark}

\begin{example}[The unit profiles in the running example]
  Return to the finite monoid of
  Example~\ref{ex:left-product-nonassociativity}.  Its unit is \(e\), and since
  \(p(e)=0\), the weighted unit is \(\mathbf e=(e,0)\).  From the displayed
  matrix for \(M\), the row minima and column minima give the one-sided unit
  profiles, and the middle unit formula gives the middle profile.  In the order
  \((e,a,c,d)\),
  \[
    \lambda_{\mathbf 1_L}=(0,0,-1,-4),
    \qquad
    \rho_{\mathbf 1_R}=(0,0,-4,-4),
    \qquad
    \mu_{\mathbf 1}=(0,0,-4,-7).
  \]
  Thus the running example distinguishes the left, right, and middle unit
  profiles.  It also shows that the comparison in
  Proposition~\ref{prop:unit-profile-comparison} can be strict: at \(d\),
  \[
    \mu_{\mathbf 1}(d)=-7
    <
    -4
    =
    \min(\lambda_{\mathbf 1_L}(d),\rho_{\mathbf 1_R}(d)).
  \]
  The value \(-7\) is attained in the middle unit formula by the peripheral
  pair \((x,z)=(c,a)\): the multiplication changes from \(ca=d\) to
  \(cda=c\).
\end{example}

\subsection{Residuals}

The middle product is not commutative, so it has two oriented residuals.  We
first define them as subsets of \(\cat C\times\R\).

\begin{definition}\label{def.lrimplications}
  Let \(A\subseteq\cat C\times\R\) and let \(B\) be a middle type.  Define
  \[
    A\multimap_l B
    :=
    \{r\in\cat C\times\R\mid ar\in B\text{ for all }a\in A\},
  \]
  and
  \[
    A\multimap_r B
    :=
    \{r\in\cat C\times\R\mid ra\in B\text{ for all }a\in A\}.
  \]
\end{definition}

For residuals of middle types we also need two actions of a middle subset on a
peripheral subset.  If \(A\subseteq\cat C\times\R\) and
\(P\subseteq(\cat C\times\R)^2\), set
\[
  A\odot_l P
  :=
  \{(x a,z)\mid a\in A,\ (x,z)\in P\},
\]
and
\[
  A\odot_r P
  :=
  \{(x,a z)\mid a\in A,\ (x,z)\in P\}.
\]
These are raw peripheral subsets; no peripheral closure is included in the
notation.

\begin{proposition}[Residuals as middle types]\label{thm.dotmultimap}
  If \(A\subseteq\cat C\times\R\) and \(B\) is a middle type, then
  \[
    A\multimap_l B={}^\Perp(A\odot_l B^\Perp),
    \qquad
    A\multimap_r B={}^\Perp(A\odot_r B^\Perp).
  \]
  In particular, \(A\multimap_l B\) and \(A\multimap_r B\) are middle types.
\end{proposition}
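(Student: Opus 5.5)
The plan is to prove both identities by a direct chain of equivalences for a single weighted element \(r\in\cat C\times\R\), using that \(B\) is a middle type, so that \(B={}^\Perp(B^\Perp)\). The only tool needed is the execution shift for middle orthogonality, Lemma~\ref{lem.Perpmove}, and that lemma holds with no closure hypotheses. The final assertion then follows at once: every set of the form \({}^\Perp P\) is a middle type, by Proposition~\ref{prop:middle-polarity-calculus}(iii).

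For the left residual, fix \(r\). First, \(r\in A\multimap_l B\) means \(ar\in B\) for every \(a\in A\). Since \(B={}^\Perp(B^\Perp)\), this is equivalent to
\[
  ar\Perp(x,z)\quad\text{for all }a\in A,\ (x,z)\in B^\Perp .
\]
Next apply Lemma~\ref{lem.Perpmove} to the product \(ar\) in the middle position, using the equivalence \(aa'\Perp(x,z)\Leftrightarrow a'\Perp(xa,z)\) with \(a'=r\). This turns the condition into \(r\Perp(xa,z)\) for all \(a\in A\) and \((x,z)\in B^\Perp\). That is exactly the statement that \(r\) is middle-orthogonal to every element of the raw peripheral set \(A\odot_l B^\Perp=\{(xa,z)\}\). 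Hence \(r\in{}^\Perp(A\odot_l B^\Perp)\).

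The right residual is handled in the same way. Here \(ra\in B\) for all \(a\in A\) is equivalent to \(ra\Perp(x,z)\) for all \((x,z)\in B^\Perp\). Lemma~\ref{lem.Perpmove}, in the form \(aa'\Perp(x,z)\Leftrightarrow a\Perp(x,a'z)\) with the first factor equal to \(r\), converts this into \(r\Perp(x,az)\). This is membership in \({}^\Perp(A\odot_r B^\Perp)\).

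There is no substantial obstacle here. The one point needing care is orientation. One must check that the left residual, where \(r\) sits to the right of \(a\), absorbs \(a\) into the left peripheral context \(xa\), and that this matches the definition of \(\odot_l\). Symmetrically, the right residual absorbs \(a\) into the right context \(az\), matching \(\odot_r\). The other point is that the hypothesis that \(B\) is a middle type is used exactly once, to replace ``\(\in B\)'' by ``orthogonal to \(B^\Perp\)''. No assumption on \(A\) is needed, and no unit is needed.
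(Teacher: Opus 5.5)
Your proof is correct and matches the paper's argument step for step. You use \(B={}^\Perp(B^\Perp)\) to turn \(ar\in B\) (respectively \(ra\in B\)) into orthogonality against \(B^\Perp\), and then Lemma~\ref{lem.Perpmove} moves \(a\) into the left context \(xa\) (respectively the right context \(az\)), exactly as in the definitions of \(\odot_l\) and \(\odot_r\); your explicit citation of triple-complement cancellation for the final assertion only spells out what the paper leaves implicit.
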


\begin{proof}
  We prove the left residual formula.  For \(r\in\cat C\times\R\),
  \[
    r\in A\multimap_l B
  \]
  means that \(ar\in B\) for every \(a\in A\).  Since \(B\) is a middle type,
  this is equivalent to
  \[
    ar\Perp(x,z)
    \quad
    \text{for all }a\in A\text{ and }(x,z)\in B^\Perp.
  \]
  By Lemma~\ref{lem.Perpmove}, the displayed condition is equivalent to
  \[
    r\Perp(xa,z)
    \quad
    \text{for all }a\in A\text{ and }(x,z)\in B^\Perp,
  \]
  which is precisely \(r\in{}^\Perp(A\odot_l B^\Perp)\).

  The proof of the right residual formula is the same: \(ra\Perp(x,z)\) is
  equivalent to \(r\Perp(x,az)\) by Lemma~\ref{lem.Perpmove}.
\end{proof}

\begin{proposition}[Residuation]\label{prop.adj}
  Let \(A,B,C\) be middle types.  Then
  \[
    A\odot B\subseteq C
    \quad\Longleftrightarrow\quad
    B\subseteq A\multimap_l C
    \quad\Longleftrightarrow\quad
    A\subseteq B\multimap_r C.
  \]
\end{proposition}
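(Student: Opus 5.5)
The plan is to reduce both equivalences to the raw product condition $AB\subseteq C$, using only that $C$ is a middle type together with the definition of the residuals. No closure property of $A$ or $B$ is needed for this reduction.

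\emph{Step 1: collapse the closure.} Since $C$ is a middle type, $\operatorname{cl}_{\mathrm{mid}}(C)=C$. The closure $\operatorname{cl}_{\mathrm{mid}}$ is monotone, being a composite of two antitone maps by Proposition~\ref{prop:middle-polarity-calculus}(i), and it is extensive by part (ii) of the same proposition. Hence
\[
  A\odot B=\operatorname{cl}_{\mathrm{mid}}(AB)\subseteq C
  \quad\Longleftrightarrow\quad
  AB\subseteq C .
\]
For the forward direction, $AB\subseteq\operatorname{cl}_{\mathrm{mid}}(AB)\subseteq C$. For the reverse, $\operatorname{cl}_{\mathrm{mid}}(AB)\subseteq\operatorname{cl}_{\mathrm{mid}}(C)=C$.

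\emph{Step 2: unwind the residuals.} By Definition~\ref{def.lrimplications}, $B\subseteq A\multimap_l C$ means that $ab\in C$ for every $b\in B$ and every $a\in A$. This is exactly $AB\subseteq C$. Likewise, $A\subseteq B\multimap_r C$ means that $ab\in C$ for every $a\in A$ and every $b\in B$, which is again $AB\subseteq C$. The orientation needs checking here: in $B\multimap_r C$ the residual element $r=a$ is multiplied on the left of $b$, so the raw product really is $ab$. Chaining Step 1 with Step 2 gives all three equivalences.

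There is no real obstacle; the hypotheses that $A$ and $B$ are middle types are not even needed for this argument. The only points to verify are that $\operatorname{cl}_{\mathrm{mid}}$ is monotone and extensive, and that the two residuals are oriented so that both produce the same product $ab$. Proposition~\ref{thm.dotmultimap} is only needed to know that the residuals are themselves middle types, which makes the statement an adjunction within middle types.
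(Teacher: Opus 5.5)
Your proof is correct and follows the paper's argument: since $C$ is middle closed, $A\odot B\subseteq C$ reduces to $AB\subseteq C$, and unwinding the definitions of $\multimap_l$ and $\multimap_r$ shows that each of the other two conditions is also equivalent to $AB\subseteq C$. Your check of the residual orientation and your remark that only the closedness of $C$ is used are both accurate.
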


\begin{proof}
  Since \(C\) is middle closed,
  \[
    A\odot B=\operatorname{cl}_{\mathrm{mid}}(AB)\subseteq C
    \quad\Longleftrightarrow\quad
    AB\subseteq C.
  \]
  The latter condition says that \(ab\in C\) for all \(a\in A\) and
  \(b\in B\).  Equivalently, every \(b\in B\) lies in \(A\multimap_l C\), and
  equivalently every \(a\in A\) lies in \(B\multimap_r C\).
\end{proof}

\begin{proposition}[Partial transitivity]\label{thm.partialtransitivity}
  Let \(A,B,C\) be middle types.  Then
  \[
    (A\multimap_l B)\odot(B\multimap_l C)\subseteq A\multimap_l C
  \]
  and
  \[
    (B\multimap_r C)\odot(A\multimap_r B)\subseteq A\multimap_r C.
  \]
\end{proposition}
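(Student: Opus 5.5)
The plan is to reduce both containments to residuation, Proposition~\ref{prop.adj}, and then verify the resulting raw statement elementwise using associativity of weighted execution. First, Proposition~\ref{thm.dotmultimap} tells us that \(A\multimap_l B\), \(B\multimap_l C\), \(A\multimap_l C\) (and likewise the right residuals) are middle types. Since \(C\) is a middle type, so is \(A\multimap_l C\). Hence, by the first step in the proof of Proposition~\ref{prop.adj}, namely that \(\operatorname{cl}_{\mathrm{mid}}(X)\subseteq D\) iff \(X\subseteq D\) for a middle type \(D\), it suffices to prove the raw containment
\[
  (A\multimap_l B)(B\multimap_l C)\subseteq A\multimap_l C .
\]

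For this, take \(r\in A\multimap_l B\) and \(s\in B\multimap_l C\), and let \(a\in A\). By definition \(ar\in B\), and then \((ar)s\in C\) because \(s\in B\multimap_l C\). Associativity of weighted execution on \(\cat C\times\R\) gives \(a(rs)=(ar)s\in C\). Since \(a\in A\) was arbitrary, \(rs\in A\multimap_l C\). Closing under \(\operatorname{cl}_{\mathrm{mid}}\) and using that \(A\multimap_l C\) is middle closed then yields the first statement.

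The right-hand statement is the mirror image. For \(u\in B\multimap_r C\), \(v\in A\multimap_r B\), and \(a\in A\), we have \(va\in B\), hence \(u(va)\in C\), and so \((uv)a\in C\). This gives \((B\multimap_r C)(A\multimap_r B)\subseteq A\multimap_r C\), and closure finishes as before.

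The only point needing care is that the target residual is genuinely a middle type, so that closure of the raw product stays inside it. Proposition~\ref{thm.dotmultimap} supplies this, and it needs only that \(C\) is a middle type. With that in hand, the argument is a direct elementwise check with no real obstacle. In particular, the hypothesis that \(A\) and \(B\) are middle types is not actually used beyond this, and no unit is needed.
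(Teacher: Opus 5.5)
Your proposal is correct and follows the paper's proof essentially verbatim: reduce to the raw product via middle-closedness of \(A\multimap_l C\) (Proposition~\ref{thm.dotmultimap}), then check elementwise using associativity of weighted execution, with the mirror argument for \(\multimap_r\). One small caveat: as the paper defines \(A\multimap_l B\) and \(A\multimap_r B\) only when \(B\) is a middle type, the hypothesis that \(B\) is a middle type is used to make the statement well-posed.
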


\begin{proof}
  We prove the left-handed statement.  Since \(A\multimap_l C\) is a middle
  type by Proposition~\ref{thm.dotmultimap}, it is enough to show that the raw
  product
  \[
    (A\multimap_l B)(B\multimap_l C)
  \]
  is contained in \(A\multimap_l C\).  Let \(r\in A\multimap_l B\) and
  \(s\in B\multimap_l C\).  For each \(a\in A\), one has \(ar\in B\), and
  therefore \((ar)s\in C\).  Associativity of weighted execution gives
  \(a(rs)=(ar)s\), so \(rs\in A\multimap_l C\).

  The right-handed statement is parallel.  If \(r\in B\multimap_r C\) and
  \(s\in A\multimap_r B\), then for \(a\in A\) one has \(sa\in B\), and hence
  \(r(sa)\in C\).  Associativity gives \((rs)a=r(sa)\), so
  \(rs\in A\multimap_r C\).
\end{proof}

\begin{lemma}\label{lem:unit-detects-inclusion}
  Assume the execution product has a unit, and let \(\mathbf e\) be the weighted
  unit of Lemma~\ref{lem:weighted-unit}.  If \(A\) and \(B\) are middle types,
  then
  \[
    A\subseteq B
    \quad\Longleftrightarrow\quad
    \mathbf e\in A\multimap_l B
    \quad\Longleftrightarrow\quad
    \mathbf e\in A\multimap_r B.
  \]
\end{lemma}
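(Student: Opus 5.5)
The plan is to unwind the definitions of the two residuals and use that \(\mathbf e\) is a two-sided unit for weighted execution (Lemma~\ref{lem:weighted-unit}). By Definition~\ref{def.lrimplications}, \(\mathbf e\in A\multimap_l B\) means that \(a\mathbf e\in B\) for every \(a\in A\). By Lemma~\ref{lem:weighted-unit}, \(a\mathbf e=a\). So this condition says exactly that every \(a\in A\) lies in \(B\), that is, \(A\subseteq B\).

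The second equivalence goes the same way. The condition \(\mathbf e\in A\multimap_r B\) means \(\mathbf e a\in B\) for every \(a\in A\). Since \(\mathbf e a=a\), this is again \(A\subseteq B\).

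Both equivalences are immediate, so there is no real obstacle. The only point to watch is the order of multiplication, which must match the definitions of the two residuals: \(ar\) for \(\multimap_l\) and \(ra\) for \(\multimap_r\). The middle-type hypotheses on \(A\) and \(B\) are only needed so that the residuals are defined as in Definition~\ref{def.lrimplications}, which requires \(B\) to be a middle type. They play no role in the argument.

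As a remark, one could also route the argument through Proposition~\ref{prop.adj} together with the unit \(\mathbf 1\). That would use \(\mathbf 1\subseteq A\multimap_l B\) if and only if \(\mathbf e\in A\multimap_l B\), which holds because the residual is a middle type (Proposition~\ref{thm.dotmultimap}) and \(\mathbf 1=\operatorname{cl}_{\mathrm{mid}}(\{\mathbf e\})\). It would also use \(\mathbf 1\odot A=A\) from Proposition~\ref{prop:middle-unit}. The direct elementwise argument is simpler, and I would present that.
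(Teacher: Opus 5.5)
Your proposal is correct and takes the same approach as the paper. The paper's proof also just unwinds the definitions of \(\multimap_l\) and \(\multimap_r\) using \(a\mathbf e=a=\mathbf e a\) from Lemma~\ref{lem:weighted-unit}.
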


\begin{proof}
  This is immediate from \(a\mathbf e=a=\mathbf e a\) for every
  \(a\in\cat C\times\R\).
\end{proof}

\subsection{The assembled type calculus}
\label{subsec:assembled-type-calculus}

The constructions of Sections~\ref{sec:execution-types}--\ref{sec:categorical-structure}
are organized by the closure applied after raw weighted execution.  Let
\[
  \mathsf L=\fix(\operatorname{cl}_L),
  \qquad
  \mathsf M=\fix(\operatorname{cl}_{\mathrm{mid}}),
  \qquad
  \mathsf R=\fix(\operatorname{cl}_R)
\]
denote the left, middle, and right types.  For arbitrary
\(X,Y\subseteq\cat C\times\R\), the same raw product \(XY\) has three closed
forms:
\[
  X\odot_LY=\operatorname{cl}_L(XY)\in\mathsf L,
  \qquad
  X\odot Y=\operatorname{cl}_{\mathrm{mid}}(XY)\in\mathsf M,
  \qquad
  X\odot_RY=\operatorname{cl}_R(XY)\in\mathsf R.
\]
The product symbol keeps track of which closure is used.  The middle types carry
the associative core of the calculus.  Ordered by inclusion, they carry the
following Lambek-style structure.

\begin{proposition}[The assembled middle calculus]
  \label{prop:assembled-middle-calculus}
  Let \(A,B,C\) be middle types.
  \begin{enumerate}[label=\textup{(\roman*)}]
    \item The product \(\odot\) is monotone in both variables and associative.  If
          the execution product has a unit, then
          \[
            \mathbf 1\odot A=A=A\odot\mathbf 1 .
          \]

    \item The residuals are middle types and are characterized by
          \[
            A\odot B\subseteq C
            \quad\Longleftrightarrow\quad
            B\subseteq A\multimap_l C
            \quad\Longleftrightarrow\quad
            A\subseteq B\multimap_r C .
          \]

    \item The residuals curry the product:
          \[
            (A\odot B)\multimap_l C
            =
            B\multimap_l(A\multimap_l C),
          \]
          and
          \[
            (A\odot B)\multimap_r C
            =
            A\multimap_r(B\multimap_r C).
          \]

    \item The two orientations satisfy the mixed identity
          \[
            A\multimap_l(B\multimap_r C)
            =
            B\multimap_r(A\multimap_l C).
          \]

    \item Residual arrows compose by the inclusions
          \[
            (A\multimap_l B)\odot(B\multimap_l C)
            \subseteq
            A\multimap_l C,
          \]
          and
          \[
            (B\multimap_r C)\odot(A\multimap_r B)
            \subseteq
            A\multimap_r C.
          \]

    \item Under the unit hypothesis, inclusion is detected by the middle unit:
          \[
            A\subseteq B
            \quad\Longleftrightarrow\quad
            \mathbf 1\subseteq A\multimap_l B
            \quad\Longleftrightarrow\quad
            \mathbf 1\subseteq A\multimap_r B .
          \]
  \end{enumerate}
\end{proposition}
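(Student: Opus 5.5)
Most items of the statement only assemble results already proved, so the plan is to dispatch those by citation and prove the genuinely new identities (iii), (iv) and (vi) by elementwise arguments. For (i), associativity is Proposition~\ref{thm.midassoc} and the unit law is Proposition~\ref{prop:middle-unit}. Monotonicity is also immediate: if \(A\subseteq A'\), then \(AB\subseteq A'B\). Applying \((-)^\Perp\) and then \({}^\Perp(-)\) preserves this inclusion by Proposition~\ref{prop:middle-polarity-calculus}(i), and the same holds in the second variable. Item (ii) combines Proposition~\ref{thm.dotmultimap} with Proposition~\ref{prop.adj}, and item (v) is Proposition~\ref{thm.partialtransitivity}.

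For the currying law (iii), I will argue elementwise using the residuation characterization in its raw form. By definition, \(r\in(A\odot B)\multimap_l C\) means \(dr\in C\) for every \(d\in A\odot B\).
\begin{itemize}
\item If this holds, then in particular it holds for \(d=ab\), since \(AB\subseteq A\odot B\). Associativity of weighted execution gives \(a(br)\in C\) for all \(a,b\), that is, \(br\in A\multimap_l C\) for all \(b\in B\). Hence \(r\in B\multimap_l(A\multimap_l C)\).
\item Conversely, suppose \(r\in B\multimap_l(A\multimap_l C)\). Then \((ab)r\in C\) for all \(a\in A\), \(b\in B\), so \(ABr\subseteq C\) and \(AB\subseteq \{r\}\multimap_r C\).
\end{itemize}
The converse is the main obstacle: we must extend from the raw product \(AB\) to its closure \(A\odot B\). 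I would handle it by noting that \(\{r\}\multimap_r C\) is a middle type by Proposition~\ref{thm.dotmultimap}, applied with \(A=\{r\}\). Since it is middle closed, \(\operatorname{cl}_{\mathrm{mid}}(AB)\subseteq\{r\}\multimap_r C\). This says \(dr\in C\) for all \(d\in A\odot B\), as required. The right-handed currying law is the mirror argument, using \(\{r\}\multimap_l C\).

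For the mixed identity (iv), no closure issue arises. Both sides consist of those \(r\) with \(a\,r\,b\in C\) for all \(a\in A\), \(b\in B\). The two sides bracket this as \(a(rb)\) and \((ar)b\) respectively, and these agree by associativity. Note that \(B\multimap_r C\) and \(A\multimap_l C\) are middle types, so each residual in (iv) is defined.

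For (vi), I would use Lemma~\ref{lem:unit-detects-inclusion} together with one additional observation: \(\mathbf 1\subseteq D\) if and only if \(\mathbf e\in D\), for any middle type \(D\). One direction holds because \(\mathbf e\in\mathbf 1\). For the other, \(\{\mathbf e\}\subseteq D\) gives \(\mathbf 1=\operatorname{cl}_{\mathrm{mid}}\{\mathbf e\}\subseteq D\) by closedness. Applying this with \(D=A\multimap_l B\) and \(D=A\multimap_r B\), which are middle types by Proposition~\ref{thm.dotmultimap}, finishes the argument.
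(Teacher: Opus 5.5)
Your proposal is correct. For (i), (ii), (v) and (vi) it matches the paper. The paper also cites the earlier propositions there, and for (vi) it likewise combines Lemma~\ref{lem:unit-detects-inclusion} with the observation that a middle type contains $\mathbf 1$ if and only if it contains $\mathbf e$.

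For (iii) and (iv) you take a different route. The paper argues Yoneda-style: for an arbitrary middle type $D$, it chains residuation (Proposition~\ref{prop.adj}) with associativity of $\odot$ (Proposition~\ref{thm.midassoc}) to get $D\subseteq$ left side iff $D\subseteq$ right side. It then concludes equality because both sides are middle types. This presents the currying and mixed laws as purely formal consequences of the residuated-monoid structure, valid in any such structure, but it relies on the middle associativity theorem.

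Your elementwise argument works directly from Definition~\ref{def.lrimplications}. It uses only associativity of weighted execution on $\cat C\times\R$, never associativity of $\odot$. The one place closure matters is the passage from $AB$ to $A\odot B$ in the converse half of (iii). You handle it by noting that the singleton residual $\{r\}\multimap_r C=\{s\mid sr\in C\}$ is a middle type by Proposition~\ref{thm.dotmultimap}. This route gives a little more information:
\begin{itemize}
\item it shows in effect that $(A\odot B)\multimap_l C=(AB)\multimap_l C$;
\item it shows that (iv) involves no closure at all;
\item it visibly uses only that $C$ is a middle type, so $A$ and $B$ may be arbitrary subsets.
\end{itemize}
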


\begin{proof}
  Monotonicity follows from monotonicity of raw product and of middle closure.
  Associativity is Proposition~\ref{thm.midassoc}, the unit law is
  Proposition~\ref{prop:middle-unit}, the fact that residuals are middle types is
  Proposition~\ref{thm.dotmultimap}, and the residual characterization is
  Proposition~\ref{prop.adj}.

  The currying identities and the mixed identity are formal consequences of
  associativity and residuation.  Let \(D\) be a middle type.  Then
  \[
    \begin{aligned}
      D\subseteq (A\odot B)\multimap_l C
       & \Longleftrightarrow
      (A\odot B)\odot D\subseteq C     \\
       & \Longleftrightarrow
      A\odot(B\odot D)\subseteq C      \\
       & \Longleftrightarrow
      B\odot D\subseteq A\multimap_l C \\
       & \Longleftrightarrow
      D\subseteq B\multimap_l(A\multimap_l C).
    \end{aligned}
  \]
  Similarly,
  \[
    \begin{aligned}
      D\subseteq (A\odot B)\multimap_r C
       & \Longleftrightarrow
      D\odot(A\odot B)\subseteq C      \\
       & \Longleftrightarrow
      (D\odot A)\odot B\subseteq C     \\
       & \Longleftrightarrow
      D\odot A\subseteq B\multimap_r C \\
       & \Longleftrightarrow
      D\subseteq A\multimap_r(B\multimap_r C),
    \end{aligned}
  \]
  and
  \[
    \begin{aligned}
      D\subseteq A\multimap_l(B\multimap_r C)
       & \Longleftrightarrow
      A\odot D\subseteq B\multimap_r C \\
       & \Longleftrightarrow
      (A\odot D)\odot B\subseteq C     \\
       & \Longleftrightarrow
      A\odot(D\odot B)\subseteq C      \\
       & \Longleftrightarrow
      D\odot B\subseteq A\multimap_l C \\
       & \Longleftrightarrow
      D\subseteq B\multimap_r(A\multimap_l C).
    \end{aligned}
  \]
  Since the two sides compared in each displayed identity are middle types,
  testing against all middle \(D\) gives equality.  The composition inclusions
  are Proposition~\ref{thm.partialtransitivity}.  Finally,
  Lemma~\ref{lem:unit-detects-inclusion} detects inclusion by membership of the
  weighted unit \(\mathbf e\) in either residual; because the residuals are
  middle types, this is equivalent to containing
  \(\operatorname{cl}_{\mathrm{mid}}(\{\mathbf e\})=\mathbf 1\).
\end{proof}

The boundary products sit alongside this middle calculus as one-sided boundary
forms of the same execution product.  If \(X_L\in\mathsf L\),
\(A\in\mathsf M\), and \(X_R\in\mathsf R\), then the raw products can be closed
in the middle sort,
\[
  X_L\odot A=\operatorname{cl}_{\mathrm{mid}}(X_LA),
  \qquad
  A\odot X_R=\operatorname{cl}_{\mathrm{mid}}(AX_R),
  \qquad
  X_L\odot X_R=\operatorname{cl}_{\mathrm{mid}}(X_LX_R),
\]
or at the boundary,
\[
  X_L\odot_L A=\operatorname{cl}_L(X_LA),
  \qquad
  A\odot_R X_R=\operatorname{cl}_R(AX_R).
\]
Under the unit hypothesis, the boundary unit laws proved above are
\[
  \mathbf 1_L\odot_L X_L=X_L,
  \qquad
  X_R\odot_R\mathbf 1_R=X_R.
\]
Further boundary identities are governed by the containment
\[
  X\odot_L(YZ)\subseteq X\odot_L(Y\odot_L Z)
\]
from Proposition~\ref{prop:left-associativity-obstruction}, together with the
right-handed analogue.  Closing an intermediate product at a boundary can change
the data seen by the next boundary product.

The output of the section is therefore a three-level type calculus, with the
middle level as its associative and residuated part.  Section~\ref{sec:derived}
uses the middle-profile coordinates introduced in Section~\ref{sec:middle}.  A
middle type \(A\), together with its peripheral complement, is represented by
the nuclear point
\[
  (\mu_A,\Mtc^*\mu_A)\in\Nuc(\Mtc).
\]
The second coordinate
\[
  \Mtc^*\mu_A\colon\cat C\times\cat C\to\Rbar
\]
is a binary kernel in the two boundary variables.  Viewed as a profunctor
\(\cat C\nrightarrow\cat C\), it has its own Isbell nucleus.
Section~\ref{sec:derived} studies these derived nuclei and rewrites the product
and residuals assembled above in those coordinates.

\begin{remark}
  Further operations can be considered between peripheral types, but their
  closure and variance hypotheses require a separate analysis.  The present
  section records the unit and residual structure needed for the middle calculus.
\end{remark}

\section{Derived nuclei in presheaf coordinates}\label{sec:derived}

The ternary measurement defines an \(\Rbar\)-profunctor
\[
  \Mtc\colon \cat C\nrightarrow\cat C\times\cat C,
  \qquad
  \Mtc\bigl(b,(x,z)\bigr)=M_3(x,b,z),
\]
where
\[
  M_3(x,b,z)=p(xbz)-p(x)-p(b)-p(z).
\]
We have two pictures of middle types.  In the type picture, a middle type is a
subset of \(\cat C\times\R\) closed under \(\Perp\)-biorthogonality;
Sections~\ref{sec:middle}--\ref{sec:categorical-structure} built the Lambek
calculus of middle types in that form.  In the Isbell picture, the same middle
type is represented, uusing the profile dictionary of Sections~\ref{sec:correspondence} and~\ref{sec:middle}, by a point
\[
  (f,g)\in\Nuc(\Mtc),
\]
where \(f\colon\cat C\to\Rbar\) is the presheaf coordinate and
\(g=\Mtc^*f\colon\cat C\times\cat C\to\Rbar\) is the copresheaf coordinate. The second coordinate \(g\) is itself a binary profunctor
\[
  g\colon\cat C\nrightarrow\cat C,
\]
which has its own Isbell nucleus.  We call \(\Nuc(g)\) the nucleus \emph{derived} from the middle nuclear point \((f,g)\).  If this point comes from a middle
type \(A\), so that \(f=\varphi_A\) and \(g=\Mtc^*f\), we also call
\(\Nuc(g)\) the nucleus derived from \(A\).  Its points will be called derived
types.

The first result of the section shows that the nucleus derived from a middle
point is computed by the two residuals \(f\multimap_l\) and
\(f\multimap_r\) of the Lambek calculus.
The ternary measurement \(M_3\) also
has two other one-variable arrangements, obtained by singling out the left or
the right coordinate rather than the middle one.  These give two further
iterated nuclei.  The three constructions are then compared by a
two-out-of-three theorem: compatibility with any two arrangements forces
compatibility with the third.  Their common intersection is the balanced locus
computed below.

We begin by recording the formulas that translate the type calculus into Isbell
coordinates.
It will be convenient to recall the trefoil identities:
\begin{equation}\label{eq:trefoil-identities}
  M_3(x,b,z)
  =
  M(x,bz)+M(b,z)
  =
  M(xb,z)+M(x,b).
\end{equation}
There are higher splitting identities as well.  For
\[
  M_4(a,b,c,d):=p(abcd)-p(a)-p(b)-p(c)-p(d),
\]
two adjacent variables may be contracted to their product:
\begin{equation}\label{eq:M4-adjacent-splitting}
  \begin{aligned}
    M_4(a,b,c,d)
     & =
    M_3(ab,c,d)+M(a,b) \\
     & =
    M_3(a,bc,d)+M(b,c) \\
     & =
    M_3(a,b,cd)+M(c,d).
  \end{aligned}
\end{equation}
Both \eqref{eq:trefoil-identities} and \eqref{eq:M4-adjacent-splitting}
follow immediately from the definitions of the measurements and associativity
of the execution product.

We write \(r\star s\) for the raw product profile
\(\pi_{r,s}\) of Section~\ref{sec:execution-types}:
\begin{equation}\label{eq:raw-convolution-star}
  (r\star s)(y)
  :=
  \pi_{r,s}(y)
  =
  \sup_{ab=y}
  \bigl(r(a)+s(b)-M(a,b)\bigr).
\end{equation}
This operation is a coordinate-level operation---to get a nucleus point
requires applying the relevant Isbell closure.

For example, let \(A\) and \(A'\) be middle types, and write
\[
  f=\varphi_A,
  \qquad
  f'=\varphi_{A'}
\]
for their presheaf coordinates.  Let
\[
  A''=A\odot A',
  \qquad
  f''=\varphi_{A''}.
\]
The product \(A''=A\odot A'\) is obtained by closing the raw coordinate
\(f\star f'\) in the middle nucleus:
\[
  f''
  =
  \Mtc_*\Mtc^*(f\star f').
\]

The copresheaf coordinate of this product has a single \(M_4\)-formula.  Define
\begin{equation}\label{eq:middle-product-boundary-kernel-def}
  P_{f,f'}(x,z)
  :=
  \inf_{a,a'\in\cat C}
  \bigl(
  M_4(x,a,a',z)-f(a)-f'(a')
  \bigr).
\end{equation}
Then
\begin{equation}\label{eq:middle-product-boundary-kernel}
  \Mtc^*(f\star f')=P_{f,f'}.
\end{equation}
Indeed,
\[
  \begin{aligned}
    \Mtc^*(f\star f')(x,z)
     & =
    \inf_y\bigl(M_3(x,y,z)-(f\star f')(y)\bigr) \\
     & =
    \inf_{a,a'}
    \bigl(
    M_3(x,aa',z)+M(a,a')-f(a)-f'(a')
    \bigr)                                      \\
     & =
    \inf_{a,a'}
    \bigl(
    M_4(x,a,a',z)-f(a)-f'(a')
    \bigr)                                      \\
     & =
    P_{f,f'}(x,z).
  \end{aligned}
\]
Here we use \(r-\sup_i s_i=\inf_i(r-s_i)\) and the middle
contraction in \eqref{eq:M4-adjacent-splitting}.  Consequently
\[
  f''=\Mtc_*P_{f,f'},
  \qquad
  A''=\Omega_{\Mtc_*P_{f,f'}}.
\]
Since \(P_{f,f'}=\Mtc^*(f\star f')\), the pair
\[
  \bigl(\Mtc_*P_{f,f'},P_{f,f'}\bigr)
\]
lies in \(\Nuc(\Mtc)\).  This is the middle nuclear point corresponding to
\(A\odot A'\).  Its copresheaf coordinate
\(\cat C\times\cat C\to\Rbar\) is the kernel \(P_{f,f'}\).

Finally, the residuals of the raw convolution are given by
\[
  (f\multimap_l r)(z)
  =
  \inf_{b\in\cat C}
  \bigl(r(bz)+M(b,z)-f(b)\bigr),
\]
and
\[
  (f\multimap_r \ell)(x)
  =
  \inf_{b\in\cat C}
  \bigl(\ell(xb)+M(x,b)-f(b)\bigr).
\]
They are characterized by the residuation laws
\[
  f\star v\le r
  \quad\Longleftrightarrow\quad
  v\le f\multimap_l r,
\]
and
\[
  u\star f\le \ell
  \quad\Longleftrightarrow\quad
  u\le f\multimap_r \ell.
\]
When the inputs are coordinates of the appropriate types, these coordinate
operations are the two residuals of the middle Lambek calculus.

\subsection{The middle derived nucleus}
\label{subsec:middle-derived-nucleus}

The first calculation concerns the derived nucleus of a middle nuclear point.

\begin{theorem}[Derived residual formulas]\label{thm:derived-formula}
  Let \((f,g)\in\Nuc(\Mtc)\), and let \((u,v)\in\Nuc(g)\).  Then
  \[
    v=f\multimap_l M^*u,
    \qquad
    u=f\multimap_r M_*v.
  \]
\end{theorem}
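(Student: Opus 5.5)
The proof is a direct unwinding of the two Isbell equations defining \(\Nuc(g)\), followed by one of the splittings \eqref{eq:trefoil-identities} to factor out a binary Isbell conjugate of \(M\). Of the hypothesis \((f,g)\in\Nuc(\Mtc)\) only \(g=\Mtc^*f\) is used, that is,
\[
  g(x,z)=\inf_{b\in\cat C}\bigl(M_3(x,b,z)-f(b)\bigr).
\]
The hypothesis \((u,v)\in\Nuc(g)\) gives
\[
  v(z)=\inf_{x}\bigl(g(x,z)-u(x)\bigr),
  \qquad
  u(x)=\inf_{z}\bigl(g(x,z)-v(z)\bigr).
\]

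For the first formula, substitute \(g\) into the expression for \(v\). Since residuation \(-\,u(x)\) is a right adjoint, it preserves infima, so
\[
  v(z)=\inf_{x,b}\bigl((M_3(x,b,z)-f(b))-u(x)\bigr).
\]
Next I will use the residuation identity \((c-s)-t=(c-t)-s\), which holds because both sides equal \(c-(s+t)\) by the adjunction. This lets me reorder the two subtractions. I then apply the splitting \(M_3(x,b,z)=M(x,bz)+M(b,z)\) and take the infimum over \(x\) first. The aim is to reach
\[
  v(z)=\inf_b\Bigl(\inf_x\bigl(M(x,bz)-u(x)\bigr)+M(b,z)-f(b)\Bigr)
  =\inf_b\bigl((M^*u)(bz)+M(b,z)-f(b)\bigr),
\]
which is exactly \((f\multimap_l M^*u)(z)\) by the coordinate formula for the left residual.

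The second formula follows by the mirror argument. Substitute \(g\) into the expression for \(u\), and use the other splitting \(M_3(x,b,z)=M(xb,z)+M(x,b)\). Taking the infimum over \(z\) first produces \((M_*v)(xb)=\inf_z\bigl(M(xb,z)-v(z)\bigr)\), hence
\[
  u(x)=\inf_b\bigl((M_*v)(xb)+M(x,b)-f(b)\bigr)=(f\multimap_r M_*v)(x).
\]

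The main obstacle is the endpoint arithmetic, since \(f\), \(u\), \(v\) may take the values \(\pm\infty\) and subtraction means the residual of \eqref{eq:minus}. The step needing care is moving the real term \(M(b,z)\), respectively \(M(x,b)\), past the residuals and the inner infimum. I will handle this by noting that for a finite real \(r\), translation \(y\mapsto y+r\) is an order automorphism of \(\Rbar\) with inverse \(y\mapsto y-r\). It therefore commutes with all infima and with residuals, giving \((c+r)-s=(c-s)+r\). This is legitimate because \(M\) is real-valued. Everything else is interchange of infima together with the rules \(r-\sup_i s_i=\inf_i(r-s_i)\) and \((c-s)-t=c-(s+t)\), both of which follow formally from the closed structure of \(\Rbar\).
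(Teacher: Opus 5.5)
Your proposal is correct and follows the paper's proof essentially step for step. The paper likewise substitutes $g=\Mtc^*f$ into the two $\Nuc(g)$ equations, applies the splitting $M_3(x,b,z)=M(x,bz)+M(b,z)$ (respectively $M(xb,z)+M(x,b)$), and collapses the inner infimum to $(M^*u)(bz)$ (respectively $(M_*v)(xb)$). Your extra bookkeeping for the residual arithmetic at $\pm\infty$ is sound and makes explicit what the paper leaves implicit: reordering residuals via $c-(s+t)$, and commuting the finite real translation by $M$ past residuals and infima.
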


\begin{proof}
  Since \((f,g)\in\Nuc(\Mtc)\), we have
  \[
    g(x,z)
    =
    \inf_b
    \bigl(M_3(x,b,z)-f(b)\bigr).
  \]
  Since \((u,v)\in\Nuc(g)\), we have
  \[
    v(z)=\inf_x\bigl(g(x,z)-u(x)\bigr),
    \qquad
    u(x)=\inf_z\bigl(g(x,z)-v(z)\bigr).
  \]
  Therefore
  \[
    v(z)
    =
    \inf_{x,b}
    \bigl(M_3(x,b,z)-u(x)-f(b)\bigr).
  \]
  Using the trefoil identity
  \[
    M_3(x,b,z)=M(x,bz)+M(b,z),
  \]
  we obtain
  \[
    \begin{aligned}
      v(z)
       & =
      \inf_{x,b}
      \bigl(M(x,bz)+M(b,z)-u(x)-f(b)\bigr) \\
       & =
      \inf_b
      \bigl((M^*u)(bz)+M(b,z)-f(b)\bigr)   \\
       & =
      (f\multimap_l M^*u)(z).
    \end{aligned}
  \]
  Similarly,
  \[
    u(x)
    =
    \inf_{b,z}
    \bigl(M_3(x,b,z)-f(b)-v(z)\bigr).
  \]
  Using the other trefoil identity
  \[
    M_3(x,b,z)=M(xb,z)+M(x,b),
  \]
  gives
  \[
    \begin{aligned}
      u(x)
       & =
      \inf_{b,z}
      \bigl(M(xb,z)+M(x,b)-f(b)-v(z)\bigr) \\
       & =
      \inf_b
      \bigl((M_*v)(xb)+M(x,b)-f(b)\bigr)   \\
       & =
      (f\multimap_r M_*v)(x).
    \end{aligned}
  \]
\end{proof}

\begin{corollary}[Binary nuclearity of derived coordinates]
  \label{cor:nuclearity-derived}
  Let \((f,g)\in\Nuc(\Mtc)\), and let \((u,v)\in\Nuc(g)\).  Then
  \[
    u=M_*(f\star v),
    \qquad
    v=M^*(u\star f).
  \]
  Consequently
  \[
    M_*M^*u=u,
    \qquad
    M^*M_*v=v.
  \]
  Thus the two coordinates of a derived type are separately fixed by the
  binary Isbell closures:
  \[
    u\in\fix(M_*M^*),
    \qquad
    v\in\fix(M^*M_*).
  \]
  Equivalently,
  \[
    (u,M^*u)\in\Nuc(M),
    \qquad
    (M_*v,v)\in\Nuc(M).
  \]
  The derived pair \((u,v)\) need not itself lie in \(\Nuc(M)\).
\end{corollary}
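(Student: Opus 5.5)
The plan is to return to the two expressions obtained inside the proof of Theorem~\ref{thm:derived-formula}, before they were rewritten as residuals, and to regroup them along the execution product instead. Since \((f,g)\in\Nuc(\Mtc)\) and \((u,v)\in\Nuc(g)\), that proof already gives
\[
  u(x)=\inf_{b,z}\bigl(M_3(x,b,z)-f(b)-v(z)\bigr),
  \qquad
  v(z)=\inf_{x,b}\bigl(M_3(x,b,z)-u(x)-f(b)\bigr).
\]
For \(u\) I will use the splitting \(M_3(x,b,z)=M(x,bz)+M(b,z)\) from \eqref{eq:trefoil-identities}. Because \(M(b,z)\) is a real number, the iterated residual \(M(x,bz)+M(b,z)-f(b)-v(z)\) can be rewritten as \(M(x,bz)-\bigl(f(b)+v(z)-M(b,z)\bigr)\).

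I will then reindex the infimum over pairs \((b,z)\) by their product \(y=bz\), so that the expression becomes
\[
  \inf_y\Bigl(M(x,y)-\sup_{bz=y}\bigl(f(b)+v(z)-M(b,z)\bigr)\Bigr)
  =\inf_y\bigl(M(x,y)-(f\star v)(y)\bigr)=M_*(f\star v)(x),
\]
using \(r-\sup_i s_i=\inf_i(r-s_i)\) and the definition \eqref{eq:raw-convolution-star}. The formula \(v=M^*(u\star f)\) follows in the same way from the other splitting \(M_3(x,b,z)=M(xb,z)+M(x,b)\), grouping by \(y=xb\).

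The step needing real care is the endpoint arithmetic in \(\Rbar\). There are three things to check.
\begin{enumerate}
  \item Nested residuals combine as \((r-s)-t=r-(s+t)\). This holds because residuation is right adjoint to addition and addition is associative in \(\Rbar\).
  \item Moving the finite quantity \(M(b,z)\) across a residual is legitimate, since translation by a real number is an order isomorphism of \(\Rbar\).
  \item If some \(y\) has no factorisation \(y=bz\), the supremum is empty and equals \(-\infty\), so the residual is \(+\infty\). Such a \(y\) then contributes nothing to the infimum, exactly as in the unreindexed form.
\end{enumerate}
I expect this bookkeeping to be the only delicate point. I will handle it by checking the rewritten identity in the three cases where \(f(b)+v(z)\) is finite, \(+\infty\), or \(-\infty\), under the convention \(-\infty+y=-\infty\).

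Once \(u\in\operatorname{im}M_*\) and \(v\in\operatorname{im}M^*\) are established, Proposition~\ref{prop:nucleus-fixedpoints} gives \(M_*M^*M_*=M_*\) and \(M^*M_*M^*=M^*\). Hence \(M_*M^*u=u\) and \(M^*M_*v=v\). Remark~\ref{rem:constructing-nucleus-points} then shows that \((u,M^*u)\) and \((M_*v,v)\) lie in \(\Nuc(M)\).

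For the final claim, that \((u,v)\) itself need not lie in \(\Nuc(M)\), I would give a counterexample from the four-element monoid of Example~\ref{ex:left-product-nonassociativity}. I would take a simple middle point, for instance \(f=\mu_{\mathbf 1}\), compute the kernel \(g=\Mtc^*f\), and choose some \((u,v)\in\Nuc(g)\). It then suffices to check that \(v\neq M^*u\) at a single coordinate.
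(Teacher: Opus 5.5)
Your derivation of $u=M_*(f\star v)$ and $v=M^*(u\star f)$ is the paper's argument: the same two splittings of $M_3$, the same regrouping by $y=bz$ and $y=xb$, then $M_*M^*M_*=M_*$ and its dual. Your endpoint bookkeeping is sound. The gap is in the last claim.

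\textbf{The proposed witness cannot work.} Your suggested choice $f=\mu_{\mathbf 1}$ makes the derived kernel equal to the original measurement, so no derived pair can lie outside $\Nuc(M)$. To see this, use Proposition~\ref{prop:left-right-unit-profiles}:
\begin{enumerate}
  \item It gives $\mu_{\mathbf 1}(\epsilon)=\inf_{x,z}\bigl(M_3(x,\epsilon,z)-M(x,z)\bigr)=-p(\epsilon)$.
  \item Hence the $b=\epsilon$ term of $(\Mtc^*\mu_{\mathbf 1})(x,z)=\inf_b\bigl(M_3(x,b,z)-\mu_{\mathbf 1}(b)\bigr)$ equals $M_3(x,\epsilon,z)+p(\epsilon)=M(x,z)$.
  \item The defining formula $\mu_{\mathbf 1}(b)=\inf_{x,z}\bigl(M_3(x,b,z)-M(x,z)\bigr)$ says exactly that every other term is at least $M(x,z)$.
\end{enumerate}
Therefore $g=\Mtc^*\mu_{\mathbf 1}=M$ and $\Nuc(g)=\Nuc(M)$. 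Every $(u,v)$ you could choose in $\Nuc(g)$ then lies in $\Nuc(M)$. The middle unit is precisely a middle point whose derived nucleus is the original binary nucleus. In the running example, $\Mtc^*(0,0,-4,-7)$ is just the displayed matrix of $M$.

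\textbf{What to use instead.} You need a middle point whose kernel differs from $M$, and then a specific derived pair checked by hand. The phrase ``choose some $(u,v)\in\Nuc(g)$'' is not yet an argument. The paper takes:
\begin{enumerate}
  \item $f=(-6,0,-7,-10)$, the middle closure of the principal seed at $a$, whose kernel is $g(x,z)=M_3(x,a,z)$;
  \item $u=(0,0,-1,-4)$ and $v=(0,6,3,3)$.
\end{enumerate}
One checks $\Mtc_*g=f$, $g^*u=v$ and $g_*v=u$. At the same time $M^*u=(0,0,0,0)\neq v$ and $M_*v=(-6,0,-4,-7)\neq u$, so $(u,v)\in\Nuc(g)$ but $(u,v)\notin\Nuc(M)$. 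With this witness in place of $\mu_{\mathbf 1}$, your proof is complete.
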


\begin{proof}
  From the proof of Theorem~\ref{thm:derived-formula},
  \[
    u(x)
    =
    \inf_{b,z}
    \bigl(M_3(x,b,z)-f(b)-v(z)\bigr).
  \]
  Using
  \[
    M_3(x,b,z)=M(x,bz)+M(b,z),
  \]
  and regrouping by \(y=bz\), we obtain
  \[
    \begin{aligned}
      u(x)
       & =
      \inf_{b,z}
      \bigl(M(x,bz)+M(b,z)-f(b)-v(z)\bigr) \\
       & =
      \inf_y
      \left(
      M(x,y)
      -
      \sup_{bz=y}
      \bigl(f(b)+v(z)-M(b,z)\bigr)
      \right)                              \\
       & =
      M_*(f\star v)(x).
    \end{aligned}
  \]
  Similarly, from the proof of Theorem~\ref{thm:derived-formula},
  \[
    v(z)
    =
    \inf_{x,b}
    \bigl(M_3(x,b,z)-u(x)-f(b)\bigr).
  \]
  Using
  \[
    M_3(x,b,z)=M(xb,z)+M(x,b),
  \]
  and regrouping by \(y=xb\), we obtain
  \[
    \begin{aligned}
      v(z)
       & =
      \inf_{x,b}
      \bigl(M(xb,z)+M(x,b)-u(x)-f(b)\bigr) \\
       & =
      \inf_y
      \left(
      M(y,z)
      -
      \sup_{xb=y}
      \bigl(u(x)+f(b)-M(x,b)\bigr)
      \right)                              \\
       & =
      M^*(u\star f)(z).
    \end{aligned}
  \]
  Thus \(u\) lies in the image of \(M_*\), and \(v\) lies in the image of
  \(M^*\).  By the fixed-point characterization of the Isbell images,
  \[
    M_*M^*u=u,
    \qquad
    M^*M_*v=v.
  \]
\end{proof}

Thus the derived nucleus of a middle point is computed by the two residuals of
the middle Lambek calculus.  When the point $(u,v)$ is a type derived from a middle type \(A\), the types $\Omega_u$ and $\Omega_v$ are left and right types of the original realisability situation $(\cat C, \Ex, p)$.

\subsection{The two boundary arrangements}
\label{subsec:ternary-boundary-modules}
The ternary measurement $M_3$ has two other one-variable arrangements besides the middle-periphery arrangement $\Mtc$:
\[
  L(x,(b,z)):=M_3(x,b,z),
  \qquad
  R(z,(x,b)):=M_3(x,b,z).
\]

\begin{theorem}[Boundary residual formulas]
  \label{thm:boundary-residual-formulas}
  The following formulas hold.

  \begin{enumerate}
    \item[\textup{(L)}]
          Let \((u,h)\in\Nuc(L)\), and let \((f,v)\in\Nuc(h)\).  Then
          \[
            v=f\multimap_l M^*u.
          \]
          Equivalently,
          \[
            v(z)
            =
            \inf_b
            \bigl((M^*u)(bz)+M(b,z)-f(b)\bigr).
          \]
          The middle coordinate also satisfies the two boundary formulas
          \[
            f(b)
            =
            \inf_x
            \bigl((M_*v)(xb)+M(x,b)-u(x)\bigr)
            =
            \inf_z
            \bigl((M^*u)(bz)+M(b,z)-v(z)\bigr).
          \]

    \item[\textup{(R)}]
          Let \((v,k)\in\Nuc(R)\), and let \((u,f)\in\Nuc(k)\).  Then
          \[
            u=f\multimap_r M_*v.
          \]
          Equivalently,
          \[
            u(x)
            =
            \inf_b
            \bigl((M_*v)(xb)+M(x,b)-f(b)\bigr).
          \]
          The middle coordinate satisfies the same two boundary formulas
          \[
            f(b)
            =
            \inf_x
            \bigl((M_*v)(xb)+M(x,b)-u(x)\bigr)
            =
            \inf_z
            \bigl((M^*u)(bz)+M(b,z)-v(z)\bigr).
          \]
  \end{enumerate}
\end{theorem}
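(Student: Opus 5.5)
The argument mirrors the proof of Theorem~\ref{thm:derived-formula}: unfold the two nested Isbell conjugations into a single infimum over \(M_3\), then split \(M_3\) by the appropriate trefoil identity \eqref{eq:trefoil-identities}. The only new ingredient is bookkeeping of which variables are grouped. For (L), with \(L\colon\cat C\nrightarrow\cat C\times\cat C\) given by \(L(x,(b,z))=M_3(x,b,z)\), the hypothesis \((u,h)\in\Nuc(L)\) gives
\[
  h(b,z)=\inf_x\bigl(M_3(x,b,z)-u(x)\bigr),
  \qquad
  u(x)=\inf_{b,z}\bigl(M_3(x,b,z)-h(b,z)\bigr).
\]
We read \(h\) as a binary profunctor \(\cat C\nrightarrow\cat C\) in the variables \((b,z)\). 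Then \((f,v)\in\Nuc(h)\) gives
\[
  v(z)=\inf_b\bigl(h(b,z)-f(b)\bigr),
  \qquad
  f(b)=\inf_z\bigl(h(b,z)-v(z)\bigr).
\]

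\emph{The formula for \(v\).} Substituting \(h\) into the first identity gives \(v(z)=\inf_{x,b}(M_3(x,b,z)-u(x)-f(b))\). This is the same expression that appeared in the middle case, so the splitting \(M_3(x,b,z)=M(x,bz)+M(b,z)\) and the inner infimum over \(x\) give \((M^*u)(bz)\). The result is \(v=f\multimap_l M^*u\).

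\emph{The second formula for \(f\).} Substituting \(h\) into the second identity gives \(f(b)=\inf_{x,z}(M_3(x,b,z)-u(x)-v(z))\). Using \(M_3(x,b,z)=M(x,bz)+M(b,z)\) again and taking the infimum over \(x\) first, we get \(f(b)=\inf_z((M^*u)(bz)+M(b,z)-v(z))\).

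\emph{The first formula for \(f\).} Starting from the same triple infimum, use the other splitting \(M_3=M(xb,z)+M(x,b)\) and take the infimum over \(z\) first. This produces \((M_*v)(xb)\), hence \(f(b)=\inf_x((M_*v)(xb)+M(x,b)-u(x))\).

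Part (R) is the mirror image. Here \((v,k)\in\Nuc(R)\) gives \(k(x,b)=\inf_z(M_3(x,b,z)-v(z))\). From \((u,f)\in\Nuc(k)\) we get \(u(x)=\inf_b(k(x,b)-f(b))\) and \(f(b)=\inf_x(k(x,b)-u(x))\). Both expand to the same triple infimum \(\inf(M_3-u-f-v)\) with one variable fixed. Splitting with \(M(xb,z)+M(x,b)\) and taking the infimum over \(z\) first gives \(u=f\multimap_r M_*v\) and the first formula for \(f\). The other splitting gives the second formula for \(f\).

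The point needing care is the manipulation of residuals in \(\Rbar\). Two steps must be justified:
\[
  \inf_x\bigl(M_3-u(x)\bigr)-f(b)=\inf_x\bigl(M_3-u(x)-f(b)\bigr)
\]
and the rewriting of \(M(x,bz)+M(b,z)-u(x)-f(b)\) as \(\bigl(M(x,bz)-u(x)\bigr)+M(b,z)-f(b)\) before taking \(\inf_x\).

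\emph{The first step.} Subtraction by \(f(b)\) is a right adjoint, so it preserves infima; this is exactly how the paper argues in Theorem~\ref{thm:derived-formula}.

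\emph{The second step.} Adding the finite real \(M(b,z)\) commutes with infima. Interchanging the order of residuals \((r-s)-t=r-(s+t)\) holds in a closed monoidal base. The delicate case is endpoint values of \(u,f,v\); there I would appeal to these residuation identities rather than arithmetic, as the paper does tacitly in Theorem~\ref{thm:derived-formula}.
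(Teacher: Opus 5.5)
Your proposal is correct and follows the paper's proof essentially step for step. You substitute the nuclear identities for \(h\) (respectively \(k\)) to obtain the triple infima over \(M_3\), then split \(M_3\) by the appropriate trefoil identity and absorb the inner infimum into \(M^*u\) or \(M_*v\). Your additional endpoint justifications in \(\Rbar\) are valid: residuation preserves infima, \((r-s)-t=r-(s+t)\), and addition of the finite real \(M(b,z)\) commutes with residuals and infima; the paper uses these tacitly.
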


\begin{proof}
  For the left arrangement, since \((u,h)\in\Nuc(L)\),
  \[
    h(b,z)
    =
    \inf_x
    \bigl(M_3(x,b,z)-u(x)\bigr).
  \]
  Since \((f,v)\in\Nuc(h)\),
  \[
    f(b)=\inf_z\bigl(h(b,z)-v(z)\bigr),
    \qquad
    v(z)=\inf_b\bigl(h(b,z)-f(b)\bigr).
  \]
  Substitution gives
  \[
    f(b)
    =
    \inf_{x,z}
    \bigl(M_3(x,b,z)-u(x)-v(z)\bigr),
  \]
  and
  \[
    v(z)
    =
    \inf_{x,b}
    \bigl(M_3(x,b,z)-u(x)-f(b)\bigr).
  \]
  Using
  \[
    M_3(x,b,z)=M(x,bz)+M(b,z),
  \]
  in the formula for \(v\), we obtain
  \[
    \begin{aligned}
      v(z)
       & =
      \inf_{x,b}
      \bigl(M(x,bz)+M(b,z)-u(x)-f(b)\bigr) \\
       & =
      \inf_b
      \bigl((M^*u)(bz)+M(b,z)-f(b)\bigr)   \\
       & =
      (f\multimap_l M^*u)(z).
    \end{aligned}
  \]
  The two formulas for \(f\) follow from the two trefoil identities.  Using
  \[
    M_3(x,b,z)=M(xb,z)+M(x,b),
  \]
  gives
  \[
    f(b)
    =
    \inf_x
    \bigl((M_*v)(xb)+M(x,b)-u(x)\bigr),
  \]
  while using
  \[
    M_3(x,b,z)=M(x,bz)+M(b,z)
  \]
  gives
  \[
    f(b)
    =
    \inf_z
    \bigl((M^*u)(bz)+M(b,z)-v(z)\bigr).
  \]

  The right arrangement is analogous.  Since \((v,k)\in\Nuc(R)\),
  \[
    k(x,b)
    =
    \inf_z
    \bigl(M_3(x,b,z)-v(z)\bigr).
  \]
  Since \((u,f)\in\Nuc(k)\),
  \[
    u(x)=\inf_b\bigl(k(x,b)-f(b)\bigr),
    \qquad
    f(b)=\inf_x\bigl(k(x,b)-u(x)\bigr).
  \]
  Hence
  \[
    u(x)
    =
    \inf_{b,z}
    \bigl(M_3(x,b,z)-f(b)-v(z)\bigr),
  \]
  and
  \[
    f(b)
    =
    \inf_{x,z}
    \bigl(M_3(x,b,z)-u(x)-v(z)\bigr).
  \]
  Using
  \[
    M_3(x,b,z)=M(xb,z)+M(x,b)
  \]
  in the formula for \(u\), we get
  \[
    \begin{aligned}
      u(x)
       & =
      \inf_{b,z}
      \bigl(M(xb,z)+M(x,b)-f(b)-v(z)\bigr) \\
       & =
      \inf_b
      \bigl((M_*v)(xb)+M(x,b)-f(b)\bigr)   \\
       & =
      (f\multimap_r M_*v)(x).
    \end{aligned}
  \]
  The two formulas for \(f\) are obtained from the same two trefoil identities,
  exactly as in the left arrangement.
\end{proof}

\subsection{Two-out-of-three coherence}
\label{subsec:triple-coincidence}

Thus the three one-variable arrangements of \(M_3\) give three ways of
producing triples \((u,f,v)\).  The middle arrangement starts from a middle
coordinate \(f\) and produces the two boundary coordinates.  The left boundary
arrangement starts from \(u\) and produces \(f\) and \(v\).  The right boundary
arrangement starts from \(v\) and produces \(u\) and \(f\).  We now compare the
resulting triples. Let
\[
  \begin{aligned}
    T_L
     & :=
    \left\{(u,f,v)\;\middle|\;
    (u,L^*u)\in\Nuc(L),\ (f,v)\in\Nuc(L^*u)
    \right\}, \\
    T_{\Mtc}
     & :=
    \left\{(u,f,v)\;\middle|\;
    (f,\Mtc^*f)\in\Nuc(\Mtc),\ (u,v)\in\Nuc(\Mtc^*f)
    \right\}, \\
    T_R
     & :=
    \left\{(u,f,v)\;\middle|\;
    (v,R^*v)\in\Nuc(R),\ (u,f)\in\Nuc(R^*v)
    \right\}.
  \end{aligned}
\]
These are the triples obtained by iterating the nuclei of the left boundary,
middle, and right boundary arrangements. For a triple \((u,f,v)\), consider the following three residual equations:
\begin{align}
  u(x)
   & =
  \inf_{b,z}
  \bigl(M_3(x,b,z)-f(b)-v(z)\bigr),
  \tag{\(B_u\)}
  \label{eq:balanced-u}
  \\[1mm]
  f(b)
   & =
  \inf_{x,z}
  \bigl(M_3(x,b,z)-u(x)-v(z)\bigr),
  \tag{\(B_f\)}
  \label{eq:balanced-f}
  \\[1mm]
  v(z)
   & =
  \inf_{x,b}
  \bigl(M_3(x,b,z)-u(x)-f(b)\bigr).
  \tag{\(B_v\)}
  \label{eq:balanced-v}
\end{align}
We call \((u,f,v)\) a balanced triple for \(M_3\) if it satisfies all three
equations.  Let \(T_{\mathrm{bal}}\) denote the set of balanced triples.

\begin{lemma}[Equations seen by the three arrangements]
  \label{lem:arrangements-see-balanced-equations}
  The three arrangements satisfy the following implications:
  \[
    \begin{aligned}
      (u,f,v)\in T_L
       & \implies
      \eqref{eq:balanced-f}\text{ and }\eqref{eq:balanced-v}, \\
      (u,f,v)\in T_{\Mtc}
       & \implies
      \eqref{eq:balanced-u}\text{ and }\eqref{eq:balanced-v}, \\
      (u,f,v)\in T_R
       & \implies
      \eqref{eq:balanced-u}\text{ and }\eqref{eq:balanced-f}.
    \end{aligned}
  \]
\end{lemma}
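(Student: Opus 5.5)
Each implication comes from unfolding the two nucleus conditions that define membership in the relevant $T$ and substituting one into the other, exactly as in the proofs of Theorem~\ref{thm:derived-formula} and Theorem~\ref{thm:boundary-residual-formulas}. The one tool needed is the interchange of infima with residuals in $\Rbar$: for fixed $s$, $r\mapsto r-s$ is a right adjoint and so preserves infima, and $(r-s)-t=r-(s+t)$ by the adjunction. With these, $\inf_x\bigl(\inf_b(\cdots)-u(x)\bigr)$ can be rewritten as a joint infimum.

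For $T_{\Mtc}$, I would set $g=\Mtc^*f$, so $g(x,z)=\inf_b\bigl(M_3(x,b,z)-f(b)\bigr)$. Membership $(u,v)\in\Nuc(g)$ gives $v=g^*u$ and $u=g_*v$, that is,
\[
v(z)=\inf_x\bigl(g(x,z)-u(x)\bigr),\qquad u(x)=\inf_z\bigl(g(x,z)-v(z)\bigr).
\]
Substituting the formula for $g$ and merging infima yields \eqref{eq:balanced-v} and \eqref{eq:balanced-u} directly. This is the computation already written out in the proof of Theorem~\ref{thm:derived-formula}, before the trefoil identity is applied. The condition $(f,\Mtc^*f)\in\Nuc(\Mtc)$ is not needed for these two equations.

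The left case is handled the same way. With $h=L^*u$ we have $h(b,z)=\inf_x\bigl(M_3(x,b,z)-u(x)\bigr)$, and $(f,v)\in\Nuc(h)$ gives $f=h_*v$ and $v=h^*f$. Substitution yields \eqref{eq:balanced-f} and \eqref{eq:balanced-v}, exactly as in part (L) of Theorem~\ref{thm:boundary-residual-formulas}. The right case is symmetric: with $k=R^*v$, so $k(x,b)=\inf_z\bigl(M_3(x,b,z)-v(z)\bigr)$, the condition $(u,f)\in\Nuc(k)$ gives \eqref{eq:balanced-u} and \eqref{eq:balanced-f}.

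The only delicate point is endpoint arithmetic, where $f$, $u$, $v$ may take the values $\pm\infty$. The identity
\[
\inf_b\bigl(M_3-f(b)\bigr)-u(x)=\inf_b\bigl(M_3-f(b)-u(x)\bigr)
\]
holds because residuation by $u(x)$ preserves infima. The identity $\bigl(M_3-f(b)\bigr)-u(x)=M_3-\bigl(f(b)+u(x)\bigr)$ holds by currying the residual. I read the expressions $M_3-f(b)-v(z)$ in the balanced equations as iterated residuals, so the order of subtraction causes no issue: the expression is independent of that order because addition in $\Rbar$ is commutative. This is the same justification the earlier proofs use implicitly, so no new obstacle arises.
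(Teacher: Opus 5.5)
Your proof is correct and is essentially the paper's argument: set $h=L^*u$, $g=\Mtc^*f$, $k=R^*v$, unfold the two conjugacy equations of the derived nucleus, and merge the nested infima, never using the first nucleus condition. Your endpoint remark makes explicit what the paper leaves implicit: residuals preserve infima, and $(r-s)-t=r-(s+t)$ with commutative addition makes the order of subtraction irrelevant. This matters, for instance, when $T_R$ produces $(M_3-v(z))-f(b)$ where \eqref{eq:balanced-u} is written with the opposite order.
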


\begin{proof}
  Suppose first that \((u,f,v)\in T_L\).  Put \(h=L^*u\).  Then
  \[
    h(b,z)
    =
    \inf_x
    \bigl(M_3(x,b,z)-u(x)\bigr),
  \]
  and \((f,v)\in\Nuc(h)\).  Hence
  \[
    \begin{aligned}
      f(b)
       & =
      \inf_z\bigl(h(b,z)-v(z)\bigr) \\
       & =
      \inf_{x,z}
      \bigl(M_3(x,b,z)-u(x)-v(z)\bigr),
    \end{aligned}
  \]
  and
  \[
    \begin{aligned}
      v(z)
       & =
      \inf_b\bigl(h(b,z)-f(b)\bigr) \\
       & =
      \inf_{x,b}
      \bigl(M_3(x,b,z)-u(x)-f(b)\bigr).
    \end{aligned}
  \]
  Thus \eqref{eq:balanced-f} and \eqref{eq:balanced-v} hold. The middle arrangement is the same calculation with
  \(g=\Mtc^*f\).  If \((u,f,v)\in T_{\Mtc}\), then
  \[
    g(x,z)
    =
    \inf_b
    \bigl(M_3(x,b,z)-f(b)\bigr),
  \]
  and \((u,v)\in\Nuc(g)\).  Therefore
  \[
    u(x)
    =
    \inf_{b,z}
    \bigl(M_3(x,b,z)-f(b)-v(z)\bigr),
  \]
  and
  \[
    v(z)
    =
    \inf_{x,b}
    \bigl(M_3(x,b,z)-u(x)-f(b)\bigr).
  \]
  Thus \eqref{eq:balanced-u} and \eqref{eq:balanced-v} hold. Finally, suppose that \((u,f,v)\in T_R\).  Put \(k=R^*v\).  Then
  \[
    k(x,b)
    =
    \inf_z
    \bigl(M_3(x,b,z)-v(z)\bigr),
  \]
  and \((u,f)\in\Nuc(k)\).  Hence
  \[
    u(x)
    =
    \inf_{b,z}
    \bigl(M_3(x,b,z)-f(b)-v(z)\bigr),
  \]
  and
  \[
    f(b)
    =
    \inf_{x,z}
    \bigl(M_3(x,b,z)-u(x)-v(z)\bigr).
  \]
  Thus \eqref{eq:balanced-u} and \eqref{eq:balanced-f} hold.
\end{proof}

\begin{theorem}[Two-out-of-three coherence]
  \label{thm:triple-coincidence}
  The three iterated ternary constructions have the same pairwise intersection:
  \[
    T_L\cap T_{\Mtc}
    =
    T_{\Mtc}\cap T_R
    =
    T_L\cap T_R
    =
    T_L\cap T_{\Mtc}\cap T_R
    =
    T_{\mathrm{bal}}.
  \]
\end{theorem}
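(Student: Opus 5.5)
The plan is to reduce everything to Lemma~\ref{lem:arrangements-see-balanced-equations} plus the reverse inclusions \(T_{\mathrm{bal}}\subseteq T_L\), \(T_{\mathrm{bal}}\subseteq T_{\Mtc}\), \(T_{\mathrm{bal}}\subseteq T_R\). Once these three inclusions are in hand, the theorem is formal. Any pairwise intersection, say \(T_L\cap T_{\Mtc}\), satisfies \eqref{eq:balanced-f} and \eqref{eq:balanced-v} (from \(T_L\)) and \eqref{eq:balanced-u} and \eqref{eq:balanced-v} (from \(T_{\Mtc}\)). Hence it lies in \(T_{\mathrm{bal}}\), and the other two pairs are handled in the same way. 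Conversely, \(T_{\mathrm{bal}}\subseteq T_L\cap T_{\Mtc}\cap T_R\). Since the triple intersection is contained in each pairwise intersection, all the displayed sets coincide.

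For \(T_{\mathrm{bal}}\subseteq T_{\Mtc}\), fix a balanced \((u,f,v)\) and put \(g=\Mtc^*f\). I will check two conditions.

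First, \((f,g)\in\Nuc(\Mtc)\), which amounts to \(\Mtc_*\Mtc^*f=f\). Set \(\kappa(x,z):=u(x)+v(z)\). Using the iterated-residual identity \((m-u(x))-v(z)=m-(u(x)+v(z))\), equation \eqref{eq:balanced-f} reads \(f=\Mtc_*\kappa\). Thus \(f\in\mathrm{im}(\Mtc_*)\), and Proposition~\ref{prop:nucleus-fixedpoints} gives \(\Mtc_*\Mtc^*f=f\).

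Second, \((u,v)\in\Nuc(g)\). Here I compute
\[
  (g^*u)(z)=\inf_x\bigl(g(x,z)-u(x)\bigr)=\inf_{x,b}\bigl(M_3(x,b,z)-f(b)-u(x)\bigr).
\]
This uses that residuation by \(u(x)\) preserves infima, exactly as in the proof of Proposition~\ref{prop:Isbell-adjunction}. By \eqref{eq:balanced-v} the right-hand side is \(v(z)\). Symmetrically, \(g_*v=u\) follows from \eqref{eq:balanced-u}.

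The inclusions for \(T_L\) and \(T_R\) follow the same pattern with the roles permuted. For \(T_L\), \eqref{eq:balanced-u} exhibits \(u=L_*(f\oplus v)\), where \((f\oplus v)(b,z)=f(b)+v(z)\). Hence \(u\) is \(L_*L^*\)-fixed. The equations \eqref{eq:balanced-f} and \eqref{eq:balanced-v} then give \((f,v)\in\Nuc(L^*u)\) by the same interchange of infima. For \(T_R\), \eqref{eq:balanced-v} makes \(v\) an \(R_*\)-image, and \eqref{eq:balanced-u} and \eqref{eq:balanced-f} give \((u,f)\in\Nuc(R^*v)\).

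The main obstacle, such as it is, is endpoint arithmetic in \(\Rbar\). I need the identity \((m-a)-c=m-(a+c)\) and the commutation \(\inf_i(s_i)-c=\inf_i(s_i-c)\) to hold for all values, including \(\pm\infty\), under the convention \(-\infty+y=-\infty\). I will verify the first identity by residuation: \(t\le(m-a)-c\) iff \(t+c\le m-a\) iff \((t+c)+a\le m\) iff \(t\le m-(a+c)\). This needs only that the extended addition is associative and commutative, which I will check directly; the only nontrivial case is that a \(-\infty\) summand absorbs everything. The second identity is the right-adjointness already used in Proposition~\ref{prop:Isbell-adjunction}. With these two identities, every rewriting above is an exact equality, and no finiteness hypothesis is needed.
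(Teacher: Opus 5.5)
Your proposal is correct and has the same structure as the paper's proof. Lemma~\ref{lem:arrangements-see-balanced-equations} puts every pairwise intersection inside $T_{\mathrm{bal}}$, and for $T_{\mathrm{bal}}\subseteq T_L,T_{\Mtc},T_R$ you read off the second-level nuclearity from two of the balanced equations, exactly as the paper does. The one small difference is in the first-level nuclearity, e.g.\ $(f,\Mtc^*f)\in\Nuc(\Mtc)$. You note that \eqref{eq:balanced-f} alone gives $f=\Mtc_*(u\oplus v)$ and then invoke $\mathrm{im}(\Mtc_*)=\fix(\Mtc_*\Mtc^*)$ from Proposition~\ref{prop:nucleus-fixedpoints}. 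The paper instead combines the unit inequality $f\le\Mtc_*g$ with the bound $u(x)+v(z)\le g(x,z)$ obtained from \eqref{eq:balanced-v}. Your version is slightly shorter and shows that this step needs only one balanced equation.
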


\begin{proof}
  By Lemma~\ref{lem:arrangements-see-balanced-equations}, membership in any two
  of \(T_L,T_{\Mtc},T_R\) gives all three equations
  \eqref{eq:balanced-u}, \eqref{eq:balanced-f}, and
  \eqref{eq:balanced-v}.  Hence every pairwise intersection is contained in
  \(T_{\mathrm{bal}}\). Conversely, suppose that \((u,f,v)\in T_{\mathrm{bal}}\).  We show that it
  lies in all three arrangement sets. First put
  \[
    g=\Mtc^*f.
  \]
  Then
  \[
    g(x,z)
    =
    \inf_b
    \bigl(M_3(x,b,z)-f(b)\bigr).
  \]
  Equations \eqref{eq:balanced-u} and \eqref{eq:balanced-v} say exactly that
  \[
    u(x)=\inf_z\bigl(g(x,z)-v(z)\bigr),
    \qquad
    v(z)=\inf_x\bigl(g(x,z)-u(x)\bigr).
  \]
  Thus \((u,v)\in\Nuc(g)\), once we know that \((f,g)\in\Nuc(\Mtc)\). Since \(g=\Mtc^*f\), the closure inequality gives
  \[
    f\le \Mtc_*g.
  \]
  Equation \eqref{eq:balanced-v} implies that, for all \(x,b,z\),
  \[
    u(x)+v(z)\le M_3(x,b,z)-f(b).
  \]
  Taking the infimum over \(b\) gives
  \[
    u(x)+v(z)\le g(x,z).
  \]
  Therefore
  \[
    \begin{aligned}
      \Mtc_*g(b)
       & =
      \inf_{x,z}
      \bigl(M_3(x,b,z)-g(x,z)\bigr)    \\
       & \le
      \inf_{x,z}
      \bigl(M_3(x,b,z)-u(x)-v(z)\bigr) \\
       & =
      f(b),
    \end{aligned}
  \]
  where the last equality is \eqref{eq:balanced-f}.  Hence
  \[
    f=\Mtc_*g.
  \]
  Thus \((f,g)\in\Nuc(\Mtc)\), and consequently
  \[
    (u,f,v)\in T_{\Mtc}.
  \]
  Next put
  \[
    h=L^*u.
  \]
  Then
  \[
    h(b,z)
    =
    \inf_x
    \bigl(M_3(x,b,z)-u(x)\bigr).
  \]
  Equations \eqref{eq:balanced-f} and \eqref{eq:balanced-v} say exactly that
  \[
    f(b)=\inf_z\bigl(h(b,z)-v(z)\bigr),
    \qquad
    v(z)=\inf_b\bigl(h(b,z)-f(b)\bigr).
  \]
  Thus \((f,v)\in\Nuc(h)\), once we know that \((u,h)\in\Nuc(L)\). Since \(h=L^*u\), the closure inequality gives
  \[
    u\le L_*h.
  \]
  Equation \eqref{eq:balanced-f} implies that, for all \(x,b,z\),
  \[
    f(b)+v(z)\le M_3(x,b,z)-u(x).
  \]
  Taking the infimum over \(x\) gives
  \[
    f(b)+v(z)\le h(b,z).
  \]
  Therefore
  \[
    \begin{aligned}
      L_*h(x)
       & =
      \inf_{b,z}
      \bigl(M_3(x,b,z)-h(b,z)\bigr)    \\
       & \le
      \inf_{b,z}
      \bigl(M_3(x,b,z)-f(b)-v(z)\bigr) \\
       & =
      u(x),
    \end{aligned}
  \]
  where the last equality is \eqref{eq:balanced-u}.  Hence
  \[
    u=L_*h.
  \]
  Thus \((u,h)\in\Nuc(L)\), and consequently
  \[
    (u,f,v)\in T_L.
  \]
  Finally put
  \[
    k=R^*v.
  \]
  Then
  \[
    k(x,b)
    =
    \inf_z
    \bigl(M_3(x,b,z)-v(z)\bigr).
  \]
  Equations \eqref{eq:balanced-u} and \eqref{eq:balanced-f} say exactly that
  \[
    u(x)=\inf_b\bigl(k(x,b)-f(b)\bigr),
    \qquad
    f(b)=\inf_x\bigl(k(x,b)-u(x)\bigr).
  \]
  Thus \((u,f)\in\Nuc(k)\), once we know that \((v,k)\in\Nuc(R)\). Since \(k=R^*v\), the closure inequality gives
  \[
    v\le R_*k.
  \]
  Equation \eqref{eq:balanced-u} implies that, for all \(x,b,z\),
  \[
    u(x)+f(b)\le M_3(x,b,z)-v(z).
  \]
  Taking the infimum over \(z\) gives
  \[
    u(x)+f(b)\le k(x,b).
  \]
  Therefore
  \[
    \begin{aligned}
      R_*k(z)
       & =
      \inf_{x,b}
      \bigl(M_3(x,b,z)-k(x,b)\bigr)    \\
       & \le
      \inf_{x,b}
      \bigl(M_3(x,b,z)-u(x)-f(b)\bigr) \\
       & =
      v(z),
    \end{aligned}
  \]
  where the last equality is \eqref{eq:balanced-v}.  Hence
  \[
    v=R_*k.
  \]
  Thus \((v,k)\in\Nuc(R)\), and consequently
  \[
    (u,f,v)\in T_R.
  \]

  Therefore every balanced triple lies in
  \(T_L\cap T_{\Mtc}\cap T_R\).  The claimed equalities follow.
\end{proof}

The common intersection is therefore explicit: it is the locus where the three
one-coordinate residual equations of \(M_3\) are simultaneously saturated.
Equivalently, on \(T_{\mathrm{bal}}\), each coordinate is recovered from the
other two:
\[
  u(x)
  =
  \inf_{b,z}
  \bigl(M_3(x,b,z)-f(b)-v(z)\bigr),
\]
\[
  f(b)
  =
  \inf_{x,z}
  \bigl(M_3(x,b,z)-u(x)-v(z)\bigr),
\]
and
\[
  v(z)
  =
  \inf_{x,b}
  \bigl(M_3(x,b,z)-u(x)-f(b)\bigr).
\]
Thus any one coordinate may be taken as the coordinate from which the other two
are derived: \((f,v)\) is derived from \(u\) in the left arrangement,
\((u,v)\) is derived from \(f\) in the middle arrangement, and \((u,f)\) is
derived from \(v\) in the right arrangement.  Moreover, within the balanced
locus, the derived pair uniquely recovers the coordinate from which it is
derived.

The next example shows why the last qualification is necessary.  A pair of
coordinates can be derived from a middle coordinate \(f\), while the middle
coordinate recovered from that pair is a different coordinate \(f'\).

\subsection{Sharpness of the two-out-of-three hypothesis}
\label{subsec:balanced-sharpness}

The following example shows that the two-out-of-three hypothesis in
Theorem~\ref{thm:triple-coincidence} cannot be weakened to membership in a
single arrangement.  It also shows that a pair derived from a middle nucleus
need not itself be a binary nuclear point for the original kernel \(M\).
We use the same four-element monoid as in
Example~\ref{ex:left-product-nonassociativity}. Let \(\cat C=\{e,a,c,d\}\), with multiplication table
\[
  \begin{array}{c|cccc}
    \cdot & e & a & c & d \\
    \hline
    e     & e & a & c & d \\
    a     & a & e & c & d \\
    c     & c & d & c & d \\
    d     & d & c & c & d
  \end{array}
\]
and define \(p:\cat C \to \R\) by
\[
  p(e)=0,\qquad p(a)=-3,\qquad p(c)=1,\qquad p(d)=4.
\]
The table is associative, and \(e\) is the unit. Write vectors in the order \((e,a,c,d)\), and set
\[
  u=(0,0,-1,-4),
  \qquad
  f=(-6,0,-7,-10),
  \qquad
  v=(0,6,3,3).
\]
The middle coordinate \(f\) is the middle closure of the principal seed at
\(a\). Put
\[
  g=\Mtc^*f.
\]
Direct calculation gives
\[
  \begin{array}{c|rrrr}
    g & e & a & c  & d  \\
    \hline
    e & 0 & 6 & 3  & 3  \\
    a & 6 & 6 & 6  & 6  \\
    c & 6 & 6 & 2  & 2  \\
    d & 0 & 6 & -1 & -1
  \end{array}.
\]
Direct calculation gives
\[
  \Mtc_*g=f,
  \qquad
  g^*u=v,
  \qquad
  g_*v=u.
\]
Hence
\[
  (f,g)\in\Nuc(\Mtc),
  \qquad
  (u,v)\in\Nuc(g),
\]
and therefore
\[
  (u,f,v)\in T_{\Mtc}.
\]
However, the missing middle-coordinate equation gives
\[
  \left(
  \inf_{x,z}
  \bigl(M_3(x,b,z)-u(x)-v(z)\bigr)
  \right)_b
  =
  (-6,0,-4,-7),
\]
which is not
\[
  f=(-6,0,-7,-10).
\]
The displayed vector is pointwise larger than \(f\), strictly at \(c\) and
\(d\).  Thus \eqref{eq:balanced-f} fails, and hence
\[
  (u,f,v)\notin T_{\mathrm{bal}}.
\]
Consequently
\[
  (u,f,v)\in T_{\Mtc}\setminus T_{\mathrm{bal}}.
\]
The same example also separates derived nuclearity from binary nuclearity for
the original kernel \(M\).  Although
\[
  (u,v)\in\Nuc(g),
\]
we have
\[
  M^*u=(0,0,0,0)\neq (0,6,3,3)=v,
\]
and
\[
  M_*v=(-6,0,-4,-7)\neq (0,0,-1,-4)=u.
\]
Therefore
\[
  (u,v)\notin\Nuc(M).
\]

\section{Product envelopes and factor gaps}\label{sec:product-geometry}

In the finite real case the presheaf-coordinate calculus admits a projective
bookkeeping of product witnesses.  In this section, we give the finite coordinate
definitions needed for the product-envelope and factor-gap formulas.  The gap
quantities below are the invariants that connect the middle product with the
polyhedral and metric geometry of projective nuclei developed in the
paper \cite{GastaldiJarvisSeillerTerillaNucleus}.

Throughout this section assume that \(\cat C\) is finite and that
\(p\colon\cat C\to\R\) is real-valued.  Thus the ternary measurement \(\Mtc\)
is a finite real matrix.  Let \(\Nuc_{\R}(\Mtc)\) be the locus of points
\((f,g)\in\Nuc(\Mtc)\) whose presheaf and copresheaf coordinates are finite
real-valued functions.  Define the projective nucleus used in this section by
\[
  \pnuc(\Mtc):=\Nuc_{\R}(\Mtc)/{\sim},
\]
where
\[
  (f,g)\sim(f+\lambda,g-\lambda),
  \qquad \lambda\in\R.
\]
For \(X=(f,g)\in\Nuc_{\R}(\Mtc)\), write \([X]\) for its projective class. We keep the standard nuclear coordinates.  Let
\[
  X=(f,g),\qquad Y=(f',g')
\]
be points of \(\Nuc_{\R}(\Mtc)\), so
\[
  g=\Mtc^*f,
  \qquad
  g'=\Mtc^*f'.
\]
Define
\[
  M_4(a,b,c,d):=p(abcd)-p(a)-p(b)-p(c)-p(d).
\]
The product
\[
  Z:=X\odot Y=(f'',g'')
\]
is represented by the nuclear pair whose copresheaf coordinate is
\begin{equation}\label{eq:product-envelope}
  g''(x,z)
  =
  \min_{b,b'\in\cat C}
  \left(
  M_4(x,b,b',z)-f(b)-f'(b')
  \right),
\end{equation}
and whose presheaf coordinate is recovered by
\begin{equation}\label{eq:product-presheaf-from-envelope}
  f''(y)=\Mtc_*g''(y)
  =
  \min_{x,z\in\cat C}
  \left(M_3(x,y,z)-g''(x,z)\right).
\end{equation}
Equivalently, \(g''\) is the boundary envelope obtained by minimizing over the
two middle factors, and \(f''\) is the middle closure recovered from that
envelope.

\begin{corollary}[Projective product envelope]
  \label{cor:projective-middle-product}
  The assignment
  \[
    \mu_\odot\colon
    \pnuc(\Mtc)\times\pnuc(\Mtc)
    \longrightarrow
    \pnuc(\Mtc),
    \qquad
    ([X],[Y])\longmapsto [Z]
  \]
  is well-defined.  More explicitly, if
  \[
    X=(f,g)
    \quad\text{is replaced by}\quad
    X_\lambda=(f+\lambda,g-\lambda)
  \]
  and
  \[
    Y=(f',g')
    \quad\text{is replaced by}\quad
    Y_\mu=(f'+\mu,g'-\mu),
  \]
  then the output coordinates transform as
  \[
    g''\mapsto g''-\lambda-\mu,
    \qquad
    f''\mapsto f''+\lambda+\mu.
  \]
  Consequently the product gap
  \begin{equation}\label{eq:product-gap}
    \Delta_{X,Y}(y;x,z)
    :=
    M_3(x,y,z)-f''(y)-g''(x,z)
  \end{equation}
  is independent of the chosen affine representatives of \([X]\) and \([Y]\).
\end{corollary}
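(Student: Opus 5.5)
The plan is to track the affine shifts through the two explicit formulas \eqref{eq:product-envelope} and \eqref{eq:product-presheaf-from-envelope}, using only that addition of a real constant commutes with finite minima. First we check that the shifted inputs are again points of \(\Nuc_{\R}(\Mtc)\). If \(g=\Mtc^*f\), then \(\Mtc^*(f+\lambda)(x,z)=\min_b(M_3(x,b,z)-f(b)-\lambda)=g(x,z)-\lambda\), and dually \(\Mtc_*(g-\lambda)=f+\lambda\). So \(X_\lambda\in\Nuc_{\R}(\Mtc)\), and likewise \(Y_\mu\in\Nuc_{\R}(\Mtc)\). Hence the relation \(\sim\) is well posed on \(\Nuc_{\R}(\Mtc)\).

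Next we substitute \(f+\lambda\) and \(f'+\mu\) into \eqref{eq:product-envelope}. Since \(\cat C\) is finite and all entries are real, the minimum is an ordinary finite minimum of real numbers. The constant \(-\lambda-\mu\) therefore pulls out, giving \(g''\mapsto g''-\lambda-\mu\). Feeding this into \eqref{eq:product-presheaf-from-envelope}, the same argument gives \(f''\mapsto f''+\lambda+\mu\). Finiteness of the output coordinates is preserved, since both are finite minima of real numbers. The new output is therefore \((f''+\lambda+\mu,\,g''-\lambda-\mu)\sim(f'',g'')\), which shows that \(\mu_\odot\) is well-defined on classes. One should also confirm that the output lies in \(\Nuc_{\R}(\Mtc)\). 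This follows from the earlier identity \(P_{f,f'}=\Mtc^*(f\star f')\) in \eqref{eq:middle-product-boundary-kernel}, together with Remark~\ref{rem:constructing-nucleus-points}, which gives \((\Mtc_*P,P)\in\Nuc(\Mtc)\). Alternatively, one can cite that this pair is the nuclear point of \(A\odot A'\).

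Finally, in \eqref{eq:product-gap} the shifts \(+\lambda+\mu\) in \(f''(y)\) and \(-\lambda-\mu\) in \(g''(x,z)\) cancel. So \(\Delta_{X,Y}\) is unchanged. The only point needing care, and the one I would state explicitly, is that all values stay finite. Then subtraction is ordinary real subtraction and the residual conventions at \(\pm\infty\) in \(\Rbar\) never intervene. This is guaranteed by working in \(\Nuc_{\R}\) over a finite real matrix. Otherwise the argument is routine bookkeeping.
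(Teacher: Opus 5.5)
Your proof is correct and follows the same route as the paper: you push the shifts $f\mapsto f+\lambda$ and $f'\mapsto f'+\mu$ through the finite minimum defining $g''$, then through $\Mtc_*$, and observe that the shifts cancel in $\Delta_{X,Y}$. Your additional checks, that $X_\lambda$ and $Y_\mu$ remain in $\Nuc_{\R}(\Mtc)$ and that the output $(\Mtc_*P_{f,f'},P_{f,f'})$ is a finite nuclear point, are details the paper leaves implicit, and both are sound.
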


\begin{proof}
  Formula \eqref{eq:product-envelope} is the middle-product convolution formula
  in the finite real-valued case.  Replacing \(f\) by \(f+\lambda\) and \(f'\)
  by \(f'+\mu\) subtracts \(\lambda+\mu\) from every term in the minimum
  defining \(g''\), hence sends \(g''\) to \(g''-\lambda-\mu\).  Applying
  \(\Mtc_*\) then sends \(f''\) to \(f''+\lambda+\mu\).  Thus the projective
  class \([Z]\) is independent of the representatives.  The same two shifts
  cancel in \(M_3(x,y,z)-f''(y)-g''(x,z)\), proving the invariance of
  \(\Delta_{X,Y}\).
\end{proof}

\subsection{Factor gaps}
\label{subsec:product-gap-calculus}

The gap \(\Delta_{X,Y}\) is the ordinary middle gap of the output point
\(Z=(f'',g'')\).  To see how this output gap is built from the two input points,
we also keep track of the factorisations that compute the envelope \(g''\). For \(b,b',x,z\in\cat C\), define the factor gap
\begin{equation}\label{eq:convolution-gap}
  \kappa_{X,Y}^{b,b'}(x,z)
  :=
  M_4(x,b,b',z)-f(b)-f'(b')-g''(x,z).
\end{equation}
Then \(\kappa_{X,Y}^{b,b'}(x,z)\ge 0\), and it vanishes exactly when
\((b,b')\) realizes the minimum in \eqref{eq:product-envelope} at \((x,z)\). The factor gap decomposes into input witness gaps and outer minimization slacks.
Write
\begin{align}
  \delta_X(b;x,z)
   & :={}
  M_3(x,b,z)-f(b)-g(x,z),
  \label{eq:input-gap-X} \\
  \delta_Y(b;x,z)
   & :={}
  M_3(x,b,z)-f'(b)-g'(x,z).
  \label{eq:input-gap-Y}
\end{align}
These are the ordinary middle gap matrices of the input points \(X=(f,g)\) and
\(Y=(f',g')\).  Define also
\begin{align}
  \sigma^L_{X,Y}(b';x,z)
   & :={}
  g(x,b'z)+M(b',z)-f'(b')-g''(x,z),
  \label{eq:left-outer-slack} \\
  \sigma^R_{X,Y}(b;x,z)
   & :={}
  M(x,b)-f(b)+g'(xb,z)-g''(x,z).
  \label{eq:right-outer-slack}
\end{align}
The nonnegativity of these two slacks follows from the equivalent one-sided
forms of the envelope:
\begin{align}
  g''(x,z)
   & =
  \min_{b'\in\cat C}
  \left(g(x,b'z)+M(b',z)-f'(b')\right),
  \label{eq:left-envelope-form} \\
  g''(x,z)
   & =
  \min_{b\in\cat C}
  \left(M(x,b)-f(b)+g'(xb,z)\right).
  \label{eq:right-envelope-form}
\end{align}

\begin{proposition}[Factorisation of convolution gaps]
  \label{prop:convolution-gap-factorization}
  For all \(x,z,b,b'\in\cat C\),
  \begin{equation}\label{eq:convolution-gap-factorization}
    \kappa_{X,Y}^{b,b'}(x,z)
    =
    \delta_X(b;x,b'z)+\sigma^L_{X,Y}(b';x,z)
    =
    \delta_Y(b';xb,z)+\sigma^R_{X,Y}(b;x,z).
  \end{equation}
  Consequently, for fixed \(x,z,b,b'\), the following conditions are equivalent:
  \begin{enumerate}[label=\textup{(\roman*)}]
    \item \((b,b')\) computes the envelope \(g''(x,z)\), equivalently
          \(\kappa_{X,Y}^{b,b'}(x,z)=0\);
    \item \(\delta_X(b;x,b'z)=0\) and
          \(\sigma^L_{X,Y}(b';x,z)=0\);
    \item \(\delta_Y(b';xb,z)=0\) and
          \(\sigma^R_{X,Y}(b;x,z)=0\).
  \end{enumerate}
  In particular, every envelope-computing factorisation satisfies
  \[
    \delta_X(b;x,b'z)=
    \delta_Y(b';xb,z)=
    \sigma^L_{X,Y}(b';x,z)=
    \sigma^R_{X,Y}(b;x,z)=0.
  \]
\end{proposition}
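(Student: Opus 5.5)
The two identities in \eqref{eq:convolution-gap-factorization} are pure bookkeeping. I will substitute the definitions and use one adjacent-contraction identity from \eqref{eq:M4-adjacent-splitting} for each. Since all quantities are finite reals here, ordinary subtraction applies and there are no endpoint issues.

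For the first identity, expand
\[
  \delta_X(b;x,b'z)+\sigma^L_{X,Y}(b';x,z)
  =M_3(x,b,b'z)-f(b)-g(x,b'z)+g(x,b'z)+M(b',z)-f'(b')-g''(x,z).
\]
The terms \(g(x,b'z)\) cancel. The contraction \(M_4(x,b,b',z)=M_3(x,b,b'z)+M(b',z)\) (the third line of \eqref{eq:M4-adjacent-splitting}) then turns the remainder into \(\kappa^{b,b'}_{X,Y}(x,z)\).

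For the second identity, expand \(\delta_Y(b';xb,z)+\sigma^R_{X,Y}(b;x,z)\). Here \(g'(xb,z)\) cancels, and the remainder is handled by \(M_4(x,b,b',z)=M_3(xb,b',z)+M(x,b)\) (the first line).

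For the equivalences I will use nonnegativity of the four pieces.
\begin{itemize}
  \item \(\delta_X\ge0\) and \(\delta_Y\ge0\), because \(g=\Mtc^*f\) is an infimum over \(b\) of \(M_3(x,b,z)-f(b)\), and likewise for \(g'\).
  \item \(\sigma^L\ge0\) and \(\sigma^R\ge0\) follow from the one-sided envelope forms \eqref{eq:left-envelope-form}--\eqref{eq:right-envelope-form}.
\end{itemize}
A sum of two nonnegative reals vanishes iff both terms vanish, so (i)\(\Leftrightarrow\)(ii) and (i)\(\Leftrightarrow\)(iii). The final assertion combines (ii) and (iii).

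The only real work, and the step I expect to be the main obstacle, is justifying \eqref{eq:left-envelope-form} and \eqref{eq:right-envelope-form}, since the slacks' nonnegativity rests on them. For the left form I would take \eqref{eq:product-envelope}, rewrite \(M_4\) with the third contraction, and split the minimum as \(\min_{b'}\bigl(M(b',z)-f'(b')+\min_b(M_3(x,b,b'z)-f(b))\bigr)\). The inner minimum is exactly \(g(x,b'z)\) because \(g=\Mtc^*f\).

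The right form is symmetric. I use the first contraction and minimize over \(b'\) first, which produces \(g'(xb,z)=\Mtc^*f'(xb,z)\).

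It remains to confirm that \eqref{eq:product-envelope} itself is the copresheaf coordinate of \(X\odot Y\). This is \eqref{eq:middle-product-boundary-kernel} with \(P_{f,f'}=\Mtc^*(f\star f')\), and in the finite case the infima are minima. Finiteness of \(g''\) in the real setting guarantees that the subtractions are ordinary.
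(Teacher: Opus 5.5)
Your proposal is correct and follows the paper's proof: the same two adjacent contractions of \(M_4\), the same cancellation of \(g(x,b'z)\) and \(g'(xb,z)\), and the same argument that a sum of two nonnegative terms vanishes only when both terms do. Your one addition is an explicit derivation of the one-sided envelope forms \eqref{eq:left-envelope-form}--\eqref{eq:right-envelope-form}, obtained by minimizing over one factor first; the paper only asserts these forms, and your derivation is correct. Your final check that \eqref{eq:product-envelope} is the copresheaf coordinate of \(X\odot Y\) is harmless but not needed, since \(g''\) is defined by that formula in this section.
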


\begin{proof}
  The adjacent splitting identities for \(M_4\) give
  \[
    M_4(x,b,b',z)=M_3(x,b,b'z)+M(b',z)
  \]
  and
  \[
    M_4(x,b,b',z)=M(x,b)+M_3(xb,b',z).
  \]
  Using the first identity,
  \begin{align*}
    \kappa_{X,Y}^{b,b'}(x,z)
     & =
    M_3(x,b,b'z)+M(b',z)-f(b)-f'(b')-g''(x,z)     \\
     & =
    \bigl(M_3(x,b,b'z)-f(b)-g(x,b'z)\bigr)        \\
     & \qquad
    +\bigl(g(x,b'z)+M(b',z)-f'(b')-g''(x,z)\bigr) \\
     & =
    \delta_X(b;x,b'z)+\sigma^L_{X,Y}(b';x,z).
  \end{align*}
  The second equality in \eqref{eq:convolution-gap-factorization} is obtained in
  the same way from the second adjacent splitting identity.  All summands are
  nonnegative: \(\delta_X\) and \(\delta_Y\) are ordinary middle gaps, while
  \(\sigma^L\) and \(\sigma^R\) are nonnegative by
  \eqref{eq:left-envelope-form} and \eqref{eq:right-envelope-form}.  Therefore
  \(\kappa_{X,Y}^{b,b'}(x,z)=0\) holds exactly when the two summands in either
  displayed decomposition vanish.
\end{proof}

Thus a product witness is not just a minimizing pair \((b,b')\).  It is a pair
for which \(b\) is visible in the witness geometry of \(X=(f,g)\) from the
shifted boundary \((x,b'z)\), \(b'\) is visible in the witness geometry of
\(Y=(f',g')\) from the shifted boundary \((xb,z)\), and the two outer envelope
minimizations are sharp.  The product gap \(\Delta_{X,Y}\) then records the
ordinary middle gap of the resulting output point \(Z=(f'',g'')\).

These identities are the promised local bridge between the Lambek product and
the geometry of execution.  They identify the witness data organized by the
chamber theory in the companion paper: the product witness, the two shifted input
witnesses, the outer envelope sharpness conditions, and the output gap.

\bibliographystyle{amsalpha}
\bibliography{references}

\end{document}